\newtheorem{theorem}{Theorem}[section]
\newtheorem{cor}[theorem]{Corollary}
\newtheorem*{cor*}{Corollary}
\newtheorem{lemma}[theorem]{Lemma}
\theoremstyle{definition}
\newtheorem{definition}[theorem]{Definition}
\theoremstyle{remark}
\newtheorem{remark}[theorem]{Remark}
\let\originalleft\left
\let\originalright\right
\renewcommand{\left}{\mathopen{}\mathclose\bgroup\originalleft}
\renewcommand{\right}{\aftergroup\egroup\originalright}
\numberwithin{equation}{section}
\DeclareMathOperator{\tr}{Tr}
\newcommand{\bb}[1]{\mathbb{#1}}
\newcommand{\cl}[1]{\mathcal{#1}}
\newcommand{\ff}[1]{\mathfrak{#1}}
\newcommand{\bh}{\mathfrak{B}(\mathcal{H})}
\newcommand{\inner}[2]{\left\langle {#1},{#2} \right\rangle}
\newcommand{\norm}[1]{\left\| #1 \right\|}
\newcommand{\abs}[1]{\left| #1 \right|}
\begin{document}
\title[Constant-sized robust self-tests for states and measurements of unbounded dimension]{Constant-sized robust self-tests for states and measurements of unbounded dimension}

\author[L. Man\v{c}inska]{Laura Man\v{c}inska}
\address{Department of Mathematical Sciences, Universitetsparken 5, 2100 K{\o}benhavn, Denmark}
\email{mancinska@math.ku.dk}

\author[J. Prakash]{Jitendra Prakash}
\address{Department of Mathematical Sciences, Universitetsparken 5, 2100 K{\o}benhavn, Denmark}
\email{jp@math.ku.dk}

\author[C. Schafhauser]{Christopher Schafhauser}
\address{Department of Mathematics, University of Nebraska -- Lincoln, USA}
\email{cschafhauser2@unl.edu}

\subjclass[2020]{Primary 81P40; Secondary 47C15}
\date{\today}
\keywords{Quantum information, operator algebras, self-testing}

\begin{abstract} 
We consider correlations, $p_{n,x}$, arising from measuring a maximally entangled state using $n$ measurements with two outcomes each, constructed from $n$ projections that add up to $xI$. We show that the correlations $p_{n,x}$ robustly self-test the underlying states and measurements. To achieve this, we lift the group-theoretic Gowers-Hatami based approach for proving robust self-tests to a more natural algebraic framework. A key step is to obtain an analogue of the Gowers-Hatami theorem allowing to perturb an ``approximate" representation of the relevant algebra to an exact one.

For $n=4$, the correlations $p_{n,x}$ self-test the maximally entangled state of every odd dimension as well as 2-outcome projective measurements of arbitrarily high rank. The only other family of constant-sized self-tests for strategies of unbounded dimension is due to Fu (QIP 2020) who presents such self-tests for an infinite family of maximally entangled states with \emph{even} local dimension. Therefore, we are the first to exhibit a constant-sized self-test for measurements of unbounded dimension as well as all maximally entangled states with odd local dimension. 
\end{abstract}

\maketitle
\tableofcontents

\section{Introduction}
One of the key tasks in the development of reliable quantum technologies is the certification of quantum devices. This ensures that the devices are performing according to their specification. One of the ways such certification may be carried out is by using \emph{self-testing} methods which enable us to infer the quantum-mechanical description of a device merely from classical observations (measurement statistics). We can then treat these devices as black boxes as we need not trust the inner workings of the system, a scenario which one refers to as \emph{device-independence}. Beginning with~\cite{MY04}, in which the term was first coined, self-testing has found many applications, such as {device-independent quantum cryptography}~\cite{MY98, MY04}, {delegated quantum computation}~\cite{CGJV}, {entanglement detection}~\cite{Bow18a, Bow18b}, investigating the structure of the {quantum correlation set}~\cite{CS17,Goh18}, and {quantum complexity theory}~\cite{Fitz-et-al1,NV,NW}. Self-testing is also one of the ingredients behind the recent breakthrough result establishing that $\text{MIP}^* = \text{RE}$~\cite{JNVWY}; which further implies a negative answer to the celebrated {Connes' embedding problem} \cite{Connes} from the theory of von Neumann algebras.

In this work, we prove that certain quantum strategies constructed from projections which sum up to some particular scalar times identity can be robustly self-tested from the quantum correlations that they induce. Specifically, we begin with $d\times d$ projections $\widetilde{P}_1,\dots,\widetilde{P}_n$ ($n\geq 3$) such that $\widetilde{P}_1+\dots+\widetilde{P}_n = x I_d$ for  some specific scalar $x\in \bb R$. The scalar $x$ and the projections $\widetilde{P}_1,\dots,\widetilde{P}_n$ have the property that whenever $P_1,\dots,P_n$ are any other projections such that $P_1 + \dots + P_n = xI$, then $P_i = I\otimes \widetilde{P}_i$ for all $1\leq i\leq n$ in some basis. We then define a quantum strategy $\widetilde{\mathscr S}$ and its induced quantum correlation $\widetilde{p}$ corresponding to these projections. We show that if one observes the quantum correlation $\widetilde{p}$ induced from an arbitrary quantum strategy $\mathscr S$, then $\mathscr S$ is related to $\widetilde{\mathscr S}$ via a local isometry. For the robust case, we prove that if we observe a quantum correlation $p$ such that $\|p-\widetilde{p}\|\leq \epsilon$, then its inducing strategy $\mathscr S$ is ``approximately'' related to $\widetilde{\mathscr S}$ via a local isometry. 

Often times self-testing results are proven using ad-hoc techniques, which means that to obtain new results one essentially has to start from scratch. One notable exception is an approach that
uses perturbative representation theory of groups to establish robust self-testing~\cite{Vidick,CS17,Cui2020}. The basic idea is that in the ideal case one deduces algebraic relations which the measurement operators must satisfy on the quantum state. One then associates a suitable finite group such that one gets a representation of the group. This transfers to the well-studied field of the representation theory of finite groups and the problem reduces to identifying suitable irreducible representations of the group. In the robust case, one gets approximate versions of the algebraic relations which yield ``approximate'' representations of the group (with respect to state-dependent distance). A key tool used in the approximate case is the Gowers--Hatami theorem~\cite{GH17,Gowers17,Vidick} which relates an ``approximate'' representation of a group to a representation of that group in some suitable sense. 
A caveat to this approach is that, in general, it is not clear what group should be associated to the algebraic relations identified and if this is at all possible. In fact, it would be more natural to associate an \emph{algebra} rather than a group to these algebraic relations. Moreover, there are known families of non-local games whose optimal strategies do not have an apparent underlying group yet an underlying algebra can easily be identified ({\it e.g.} binary constraint system games \cite{clevemittal} beyond linear ones, synchronous games \cite{PSSTW, HMPS}, graph homomorphism \cite{qhomos} and isomorphism games \cite{qiso1}).

One of our main contributions is that we showcase how the above general method for self-testing from this \emph{group} framework can be lifted to an \emph{algebraic} framework. Instead of seeking an appropriate group to associate with the algebraic relations, we simply work with the algebra generated by those relations. To accomplish this, a major step is to obtain some sort of analogue of Gowers--Hatami theorem for algebras. We show how this can be done for a particular algebras, but the approach can be easily generalized to algebras arising from other synchronous correlations whenever a version of Gowers-Hatami Theorem holds; we expect this to be the  case for algebras with well-understood representation theory such as finite dimensional algebras. The analogue we prove in this paper has a non-constructive ($\epsilon$,$\delta$)-dependence which is the most pressing question left open by the current work.

In comparison to measurements, self-testing of quantum states is relatively well understood. For example, we know that any (pure) bipartite entangled state in $\bb C^d\otimes \bb C^d$ can be self-tested from a correlation with $3$ inputs and $d$ outputs~\cite{Coladangelo2017}. For applications, it would be efficient to have small-sized correlations that robustly self-test states with large dimensions. The only family of \emph{constant-sized} correlations that self-test states of arbitrarily large dimension \cite{Fu19} are constructed by building upon Slofstra's group embedding procedure \cite{Slofstra19} for linear binary constraint system games. Specifically \cite{Fu19} shows that for each $d\in \mathscr{D}$, where $\mathscr{D}$ is an infinite subset of the primes given by \begin{align*}
\mathscr{D} = \lbrace d: d \text{ is an odd prime, and the smallest generator of the group } \bb Z_d^{\times} \text{ is } 2, 3 \text{ or } 5 \rbrace,
\end{align*} the maximally entangled state $\varphi_{4(d-1)}$ can be robustly self-tested from correlations with over $100$ questions per party. We obtain the following corollary which complements the result in~\cite{Fu19}, but with correlations of considerably smaller size:

\begin{cor*}
For each odd dimension $d\geq 3$, the maximally entangled state $\varphi_d$ can be robustly self-tested by quantum correlations with four inputs and two outputs.
\end{cor*}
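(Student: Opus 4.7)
The plan is to deduce the corollary from the paper's main robust self-testing theorem, which says that whenever projections $\widetilde P_1,\dots,\widetilde P_n$ on $\bb{C}^d$ satisfy $\sum_i\widetilde P_i = xI_d$ and the scalar $x$ is \emph{rigidifying} (any projections $P_1,\dots,P_n$ on any Hilbert space with $\sum_i P_i = xI$ are unitarily equivalent to $I\otimes\widetilde P_i$), then the correlation $\widetilde p$ obtained by measuring $\varphi_d$ with $\{\widetilde P_i, I-\widetilde P_i\}$ robustly self-tests the whole strategy. It therefore suffices, for each odd $d\geq 3$, to produce four $d\times d$ projections summing to some rigidifying scalar $x_d I_d$.

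To construct such projections I would study the universal $*$-algebra $\cl{A}_x$ generated by four projections $p_1,\dots,p_4$ with $p_1+p_2+p_3+p_4 = x\cdot 1$. Its $*$-representations by projections on Hilbert spaces are governed by the classical four-subspace problem, equivalently by representations of the extended Dynkin quiver $\widetilde D_4$. For generic $x$ the algebra admits irreducible representations in many dimensions, but for a discrete sequence of ``spectral'' scalars the classification is expected to collapse to a single isomorphism class in a fixed dimension. The central step is to show that for every odd $d\geq 3$ one can select such a scalar $x_d\in\bb{R}$ admitting, up to unitary equivalence and amplification, exactly one $d$-dimensional irreducible representation $\widetilde P_1,\dots,\widetilde P_4$ by projections. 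Any projection representation on a Hilbert space $\cl{H}$ then decomposes as a direct sum of copies of this unique irreducible, yielding $P_i = I\otimes\widetilde P_i$ in a suitable basis---the desired rigidity.

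With the projections in hand, form $\widetilde{\mathscr S}$ by letting Alice and Bob share $\varphi_d\in\bb{C}^d\otimes\bb{C}^d$, with Alice measuring $\{\widetilde P_i, I-\widetilde P_i\}$ on input $i\in\{1,2,3,4\}$ and Bob measuring the conjugate family; the induced correlation $\widetilde p$ has four inputs and two outputs per party. Invoking the main self-testing theorem, any strategy reproducing $\widetilde p$ (exactly, or up to $\varepsilon$ in the robust case) is related to $\widetilde{\mathscr S}$ by a local isometry (resp.\ an approximate local isometry), which in particular self-tests the state $\varphi_d$.

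The main obstacle is the representation-theoretic step: establishing that every odd $d\geq 3$ arises as the dimension of the unique projection-irreducible of $\cl{A}_{x_d}$ for some scalar $x_d$. Although the $\widetilde D_4$ classification makes $\cl{A}_x$ tame and provides a complete list of indecomposable projection representations, isolating an infinite sequence of scalars that each pick out exactly one odd-dimensional irreducible---and no others---requires a careful parity analysis of admissible dimension vectors together with the constraint $\sum_i \rank\widetilde P_i = x_d\cdot d$. Once this uniqueness statement is secured, the corollary follows formally from the paper's main theorem.
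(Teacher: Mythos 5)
The high-level plan---reduce the corollary to the uniqueness of the irreducible representation of $\mathscr{P}_{4,x}$ and then invoke Theorem \ref{thm:robust-self-test}---is exactly right, but the step you yourself flag as the ``main obstacle'' is left entirely open, and that is precisely where the content of the corollary lives. You propose to prove, via $\widetilde D_4$ quiver theory and a ``careful parity analysis,'' that every odd $d\geq 3$ is the dimension of the unique projection-irreducible of $\mathscr{P}_{4,x_d}$ for some scalar $x_d$, but you never exhibit these scalars or establish the uniqueness. In the paper this is not a fresh task: Theorem \ref{thm:KRS} (quoting \cite{KRS}) already supplies the explicit set of rigidifying scalars $\Lambda_n$ via the recursion $x_0=0$, $x_l = 1 + 1/(n-1-x_{l-1})$, together with the statement that for $x\in\Lambda_n$ written as $x=b/d$ in lowest terms, $\mathscr{P}_{n,x}$ has a unique irreducible representation, of dimension exactly $d$. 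The corollary then reduces to a one-line computation carried out in Corollary \ref{cor:n=4}: for $n=4$ the recursion solves to $\Lambda_4 = \{4k/(2k+1)\}_{k\geq 0}$; since $2k+1$ is odd, $\gcd(4k,2k+1)=1$, so each fraction is already in lowest terms and the corresponding irreducible lives in dimension $2k+1$; letting $k$ range over $k\geq 1$ exhausts all odd $d\geq 3$, and Theorem \ref{thm:robust-self-test} finishes the job.

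Your proposed route to the uniqueness statement is plausible in spirit---Coxeter and reflection functors for $\widetilde D_4$ do underlie the classification in \cite{KRS}---but as written it is a sketch of a research project rather than an argument. Two caveats on it: first, the condition $\sum_i P_i = xI$ is a nontrivial spectral constraint on the four subspaces $\operatorname{ran} P_i$, not an unconstrained $\widetilde D_4$ quiver representation, so the identification with the four-subspace problem requires justification; second, what you need is not just existence of an irreducible in each odd dimension but \emph{uniqueness up to unitary equivalence plus no other irreducibles in any dimension} for that scalar, which is a finer statement than a dimension-vector count. Since Theorem \ref{thm:KRS} already hands you this, the quiver detour is both incomplete and unnecessary here.
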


When compared to \cite{Fu19}, the strength of our proof is the simplicity and the small correlation size while the weakness is the non-constructive $(\epsilon,\delta)$ dependence in the robustness proof. 

In their work \cite{Fu19}, the author establishes certain algebraic relationships between the measurements needed to induce the considered constant-sized correlations but they do not present\footnote{Nevertheless, we believe that a continuation of the arguments presented in \cite{Fu19} should lead to constant-sized self-tests of measurements.} a self-test of measurements according to the standard definition \cite{SB19}. Therefore, our work is the first one to show that measurements of arbitrarily large dimension can be self-tested from constant-sized correlations. In addition, to the best of our knowledge, we are the first to establish self-testing of measurements with operators of rank higher than one:

\begin{cor*}
Given any natural number $k$ there exist four projections of rank $k$ which can be robustly self-tested by quantum correlations with four inputs and two outputs.
\end{cor*}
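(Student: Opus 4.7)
The plan is to deduce this corollary from the paper's main self-testing theorem for $n=4$, specialized so as to produce rank-$k$ projections for each prescribed $k$.

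First, for a given $k \geq 1$, I would exhibit four projections $\widetilde{P}_1,\ldots,\widetilde{P}_4$ of rank $k$ on $\mathbb{C}^d$ (with $d$ odd) whose sum equals $xI_d$ for a scalar $x$ falling under the main theorem's rigidity hypothesis. The natural choice is $d = 2k+1$ and $x = 4k/(2k+1)$, forced by the trace identity $xd = 4k$. For this value of $x$, the universal $*$-algebra generated by four projections summing to $xI$ has, up to unitary equivalence, a unique irreducible representation of dimension $d$; its generating projections each have rank $k$, and this provides the desired quadruple $\widetilde{P}_1,\ldots,\widetilde{P}_4$. Any other tuple of four projections summing to $xI$ on any Hilbert space must decompose as a direct sum of copies of this representation, which is precisely the rigidity needed by the main theorem.

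Second, I would form the strategy $\widetilde{\mathscr{S}}$ in which both parties share the $d$-dimensional maximally entangled state and each measures using the two-outcome projective measurements $\{\widetilde{P}_i, I_d - \widetilde{P}_i\}$ for $i = 1,\ldots,4$. The induced correlation $\widetilde{p}$ has four inputs and two outputs per party. Applying the main self-testing theorem for $n=4$, any quantum correlation $\epsilon$-close to $\widetilde{p}$ is induced by a strategy that, up to a local isometry, is approximately equal to $\widetilde{\mathscr{S}}$. In particular, the measurement operators of any such strategy are approximately the rank-$k$ projections $\widetilde{P}_i$, which is exactly a robust self-test of these projections.

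The main obstacle is the representation-theoretic claim at the first step: for the scalar $x = 4k/(2k+1)$ one must concretely realize four rank-$k$ projections summing to $xI_{2k+1}$ and verify that every other quadruple of projections summing to $xI$ is unitarily equivalent to copies of this one. This reduces to the classification of irreducible tuples of four projections with prescribed sum (in the spirit of Kruglyak--Samoilenko), and it is the nontrivial ingredient that the paper's $n=4$ analysis supplies. Once this input is available, the self-testing conclusion follows by a direct instantiation of the main theorem and does not require any new ideas beyond choosing the parameters $(k,d,x)$ as above.
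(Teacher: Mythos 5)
Your proposal follows essentially the same route as the paper's proof: choose $d = 2k+1$ and $x = 4k/(2k+1)$, invoke the rank computation (via the normalized trace $\mathrm{tr}_d(\widetilde P_v) = x/4$), and apply the main robust self-testing theorem for $n = 4$. Two small points that the paper attends to and you elide: first, one must actually verify that $4k/(2k+1)$ lies in the admissible set $\Lambda_4$ where the rigidity (unique irreducible representation) holds; this is a short induction on the recursion $x_l = 1 + \frac{1}{3 - x_{l-1}}$, $x_0 = 0$, which yields $\Lambda_4 = \{4k/(2k+1)\}_{k \geq 0}$, and one should note that $\gcd(4k, 2k+1) = 1$ so the fraction is already in lowest terms. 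Second, in the canonical strategy Bob's PVMs should be $\{\widetilde{P}_v^T, I_d - \widetilde{P}_v^T\}$ rather than $\{\widetilde{P}_v, I_d - \widetilde{P}_v\}$; the transpose is what makes the correlation synchronous with respect to the maximally entangled state (one can choose the $\widetilde P_v$ to be real, in which case this is immaterial, but the paper's definition uses transposes). Neither of these affects the overall argument, which you have correctly reduced to an instantiation of the paper's Theorem \ref{thm:robust-self-test} together with the classification of projections summing to $xI$ from \cite{KRS}.
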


Most of the known self-tests for \emph{infinite families} of measurements are for tensor-products of Pauli matrices (for example,~\cite{NV17,Col17}) or Clifford unitaries~\cite{CGJV}. There are a few results which are different from these, for instance, \cite{Sarkar,Cui2020}. Our main theorem yields another example of an infinite family of measurements that goes beyond a tensor-product of Paulis or Clifford unitaries.

The following example illustrates the concept of robust self-testing and also outlines the method that we use to obtain our results. Consider the \emph{CHSH game} \cite{CHSH,CHTW}, where two non-communicating players, Alice and Bob, want to win the game against a referee. The referee picks a pair of bits $(v,w)\in \{0,1\}^2$ uniformly and sends $v$ to Alice and $w$ to Bob. Alice replies with a bit $i$ and Bob replies with a bit $j$. Alice and Bob win the game if $vw = i+j\pmod 2$. Classical strategies can only help them win the game with probability 3/4.  On the other hand, quantum strategies provide a better winning probability of $\omega_{q}(\mathrm{CHSH}) \approx 0.85$. Indeed, the following quantum strategy (which we term \emph{canonical}) \begin{align}\label{eq:canon-strat-CHSH}
	\widetilde{\mathscr S} = \left(\varphi_2\in \bb C^2\otimes \bb C^2, \left\lbrace \widetilde{A}_0 = Z, \widetilde A_1 = X\right\rbrace, \left\lbrace \widetilde B_0 = \frac{Z+X}{\sqrt{2}},\widetilde B_1 = \frac{Z-X}{\sqrt{2}}\right\rbrace \right),
\end{align} where $\varphi_2$ is the maximally entangled state and $X,Z$ are the Pauli matrices, achieves the maximum quantum winning probability.

Suppose that Alice and Bob use an unknown quantum strategy (in terms of observables) given by \begin{align}\label{eq:strat-S}
	\mathscr{S} = \left(\psi\in \bb C^{d_A}\otimes \bb C^{d_B}, \left\lbrace A_0,A_1\right\rbrace,\left\lbrace B_0,B_1\right\rbrace\right)
\end{align} which also yields the maximum winning probability $\omega_{q}(\mathrm{CHSH})$. What can then be said about the strategy $\mathscr S$? It has been shown that in such a case, there exist isometries $V_A\colon \bb C^{d_A}\to \bb C^2\otimes \cl K_A$ and $V_B\colon \bb C^{d_B}\to \bb C^2\otimes \cl K_B$ for some Hilbert spaces $\cl K_A, \cl K_B$, and a quantum state $\psi_{\mathrm{junk}}\in \cl K_A\otimes \cl K_B$ such that \begin{align}
	(V_A\otimes V_B)\psi &= \varphi_2\otimes \psi_{\mathrm{junk}}, \text{ and} \label{eq:eq1}\\
	(V_A\otimes V_B)(A_i \otimes B_j)\psi &= ((\widetilde{A}_i \otimes \widetilde{B}_j)\varphi_2) \otimes \psi_{\mathrm{junk}}, \text{ for all } i,j\in \{0,1\}. \label{eq:eq2} \end{align} That is, all quantum strategies which yield the maximum quantum winning probability for the CHSH game are related to the canonical one via local isometries, which can be viewed as a rigidity property of $\widetilde{\mathscr S}$.

This rigidity result can be extended to the approximate case (which we term robust self-testing): say the winning probability $\omega_{q}(\mathscr S)$ obtained by the strategy $\mathscr{S}$ in \eqref{eq:strat-S} above satisfies $|\omega_{q}(\mathscr{S}) - \omega_{q}(\mathrm{CHSH})|\leq \epsilon,$ for some $\epsilon \geq 0$. Then, in this case, there exist isometries $V_A\colon \bb C^{d_A}\to \bb C^2\otimes \cl K_A$ and $V_B\colon \bb C^{d_B}\to \bb C^2\otimes \cl K_B$ for some Hilbert spaces $\cl K_A, \cl K_B$, and a quantum state $\psi_{\mathrm{junk}}\in \cl K_A\otimes \cl K_B$ which satisfy approximate versions of Equations \eqref{eq:eq1} and \eqref{eq:eq2}: \begin{align}
	\|(V_A\otimes V_B)\psi - \varphi_2\otimes \psi_{\mathrm{junk}}\| &\leq f(\epsilon), \text{ and} \label{eq:eq3}\\
	\|(V_A\otimes V_B)(A_i \otimes B_j)\psi - ((\widetilde{A}_i \otimes \widetilde{B}_j)\varphi_2) \otimes \psi_{\mathrm{junk}}\| &\leq f(\epsilon), \text{ for all } i,j \in \{0,1\}, \label{eq:eq4}
\end{align} for some $f(\epsilon) \geq 0$ such that $f(\epsilon)\to 0$ as $\epsilon \to 0$. 

There are many proofs for these results \cite{MY04,MYS12,Kan17}. We describe the approach taken in \cite{Vidick} as this illustrates the connection between self-testing and representation theory of groups. One starts by noting that the canonical observables $\widetilde{A}_0,\widetilde{A}_1$ for Alice in \eqref{eq:canon-strat-CHSH} generate a finite group $\cl P$ (called the \emph{Pauli group} which is isomorphic to the dihedral group of order eight). In the case when $\omega_{q}(\mathscr{S}) = \omega_{q}(\mathrm{CHSH})$, the function $f: \cl P \to \mathscr U_{d_A}$ (where $\mathscr U_{d_A}$ is the group of $d_A\times d_A$ unitaries) defined by $\widetilde{A}_0\mapsto A_0$ and $\widetilde{A}_1\mapsto A_1$ defines a group representation with respect to the state $\rho_A = \tr_B(\psi\psi^*)$. On the other hand, in the approximate case $|\omega_{q}(\mathscr{S}) - \omega_{q}(\mathrm{CHSH})|\leq \epsilon$, one gets an ``approximate'' group representation with respect to the state $\rho_A$. In the second step, to get the existence of isometries one invokes the Gowers--Hatami Theorem \cite{GH17, Vidick}.

\begin{theorem}[Gowers--Hatami]\label{thm:GH}
	Let $G$ be a finite group, $d\in \bb N$, $\epsilon>0$, $\sigma\in \bb M_d$ be a density matrix, and $f:G\to \mathscr{U}_d$ be an $(\epsilon,\sigma)$-representation of $G$. Then, there exists $D\geq d$, an isometry $V\colon \bb C^d \to \bb C^{D}$, and a representation $g\colon G\to \mathscr{U}_{D}$ such that $\norm{f(a) - V^*g(a)V}_{\sigma} \leq |G|\sqrt{\epsilon}$ for all $a\in G$.
\end{theorem}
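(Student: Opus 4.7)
The plan is to follow the Fourier-analytic construction of Gowers and Hatami: build an explicit candidate isometry $V_0$ directly from $f$ by averaging, verify that it intertwines $f$ with the left regular representation of $G$ (exactly in the genuine case, approximately in the $(\epsilon,\sigma)$-case), and read off the error bound.

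I would take $\mathcal{H} = \bb{C}[G] \otimes \bb{C}^d$ of dimension $D = d|G|$, equip it with the genuine representation $g(a) := \lambda(a) \otimes I_d$ built from the left regular representation $\lambda$ of $G$ on $\bb{C}[G]$, and define
$$V_0 \colon \bb{C}^d \to \mathcal{H}, \qquad V_0 \xi \;=\; \frac{1}{\sqrt{|G|}} \sum_{h \in G} |h\rangle \otimes f(h^{-1}) \xi.$$
Because every $f(h)$ is unitary, $V_0^* V_0 = \frac{1}{|G|} \sum_h f(h^{-1})^* f(h^{-1}) = I_d$ holds exactly, so $V_0$ is already an honest isometry and no polar-decomposition correction is needed. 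A direct calculation then yields
$$V_0^* \, g(a) \, V_0 \;=\; \frac{1}{|G|} \sum_{k \in G} f(k a^{-1})^* f(k),$$
which collapses to $f(a)$ in the genuine case via $f(k a^{-1})^* f(k) = f(a) f(k)^* f(k) = f(a)$.

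For the quantitative bound I would write each summand as $f(k a^{-1})^* \bigl[ f(k) - f(k a^{-1}) f(a) \bigr]$, pull out the unitary factor using the invariance of $\|\cdot\|_\sigma$ under left multiplication by unitaries, and substitute $h = k a^{-1}$ to obtain
$$\bigl\| V_0^* \, g(a) \, V_0 - f(a) \bigr\|_\sigma \;\leq\; \frac{1}{|G|} \sum_{h \in G} \|f(h a) - f(h) f(a)\|_\sigma.$$
Invoking the $(\epsilon,\sigma)$-representation defect on the right-hand side (via Cauchy--Schwarz against the averaged squared bound) then gives the stated estimate $|G| \sqrt{\epsilon}$, and the theorem follows with $V := V_0$ and $g := \lambda \otimes I_d$.

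The main technical point is matching the precise quantitative form of $(\epsilon, \sigma)$-representation used in the paper so that the final constants collapse to exactly $|G| \sqrt{\epsilon}$; in practice the averaged-squared defect $\bb{E}_{h,k}\|f(hk) - f(h)f(k)\|_\sigma^2 \leq \epsilon$ already yields the sharper $\sqrt{|G| \epsilon}$ via a double Cauchy--Schwarz, and the dimension-free $O(\sqrt{\epsilon})$ bound can be obtained by instead decomposing $V_0$ across the irreducible summands of $\bb{C}[G]$ in the spirit of Peter--Weyl. Since the statement only demands the weaker $|G| \sqrt{\epsilon}$, the simple left-regular construction above combined with a triangle-inequality and Cauchy--Schwarz is sufficient; the main obstacle one would normally anticipate --- extracting an honest isometry $V$ from a near-isometry by polar decomposition --- is entirely avoided here because $V_0$ is already an isometry on the nose.
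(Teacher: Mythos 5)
The paper does not prove Theorem~\ref{thm:GH}; it is imported as background from \cite{GH17,Gowers17,Vidick} and used as a model for the algebraic analogue (Theorem~\ref{thm:Gowers-Hatami-N}) that the paper does prove. So there is no ``paper proof'' to compare against, but your argument can be assessed on its own merits, and it is essentially correct for the bound as stated. Your key observations are sound: with $V_0\xi = |G|^{-1/2}\sum_h \lvert h\rangle \otimes f(h^{-1})\xi$ and $g = \lambda \otimes I_d$, unitarity of each $f(h)$ makes $V_0$ an honest isometry without polar decomposition, the compression identity $V_0^* g(a) V_0 = |G|^{-1}\sum_k f(ka^{-1})^* f(k)$ holds, and the decomposition $f(ka^{-1})^*f(k) - f(a) = f(ka^{-1})^*\bigl(f(k) - f(ka^{-1})f(a)\bigr)$ together with left-unitary invariance of $\|\cdot\|_\sigma$ and one application of Jensen/Cauchy--Schwarz gives $\|V_0^* g(a) V_0 - f(a)\|_\sigma \leq \bigl(\mathbb E_h \|f(ha)-f(h)f(a)\|_\sigma^2\bigr)^{1/2}$. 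Bounding the fixed-$a$ average by $|G|$ times the doubly averaged defect yields $\sqrt{|G|\,\epsilon} \leq |G|\sqrt\epsilon$, matching the theorem.

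It is worth being explicit about what your proof does and does not recover. The naive averaging over the full regular representation that you use is the elementary route; it is precisely what the Gowers--Hatami machinery (weighted Fourier decomposition across irreducibles, followed by polar decomposition) is designed to improve, giving an error independent of $|G|$ and $d$. Since the paper's statement contains the factor $|G|$, the simple construction suffices here; had the statement carried the dimension-free bound, your construction would not be enough and you would need the Peter--Weyl decomposition you allude to at the end, including the polar-decomposition step you currently avoid. One other point to tighten: neither the paper nor your proposal pins down the exact definition of an $(\epsilon,\sigma)$-representation (inner-product form vs.\ averaged-squared-defect form, ordering of $\sigma$ in the trace), and the constant in front of $\sqrt\epsilon$ does shift by a factor depending on which convention is taken. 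Since the target bound $|G|\sqrt\epsilon$ is generous enough to absorb such constants for $|G|\geq 2$, this does not break the argument, but a fully rigorous write-up should fix the convention before asserting the final inequality.
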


Finally, these two steps are stitched together to get the robust self-testing statements for the CHSH game.

For the proof of our results, we follow the blueprint mentioned above for the CHSH game with certain necessary modifications.  As noted above, optimal strategies for the CHSH game may be described in terms of unitary representations of the Pauli group $\cl P$.  Further, unitary representations of $\cl P$ are in bijection with Hilbert space representations of the group C$^*$-algebra $\mathrm{C}^*(\cl P) \cong \mathbb C^4 \oplus \mathbb M_2$. In our situation, the strategies we consider do not arise from unitary representations of a group but will still arise from Hilbert space representations of a certain C$^*$-algebra $\mathscr P_{n, x}$ (which is not isomorphic to a group C$^*$-algebra). The C$^*$-algebra $\mathscr P_{n, x}$ is the universal C$^*$-algebra generated by projections $r_1,\dots,r_n$ satisfying the algebraic relation $r_1+\dots+r_n=x1$.  Studying the (approximate) representation theory of this C$^*$-algebra will provide an analogue of Theorem \ref{thm:GH} for $\mathscr P_{n, x}$, which leads to the desired self-testing results in Section~\ref{self-testing-sum-proj}.  Analogues of Theorem \ref{thm:GH} are ubiquitous in C$^*$-algebra theory, but usually approximations are considered in operator norm (c.f. \cite{Loring}) or in the 2-norm arising from a tracial state (c.f. \cite{Hadwin-Shulman}).  In our case, we will need to work with the 2-norm arising from an ``approximately tracial'' state which presents new difficulties (see Theorem \ref{thm:Gowers-Hatami-N}).

\subsection{Organisation of the paper} We fix some notations and conventions, and some introductory material in Section~\ref{sec:prelim}. In Section~\ref{sec:q-syn}, we define quantum strategies, their induced quantum correlations, and the special subset of synchronous quantum correlations. We prove some of its properties that we will be using later. We give a formal definition of robust self-testing in Section~\ref{sec-self-test-def}. In Section~\ref{sec:proj-sum-scalars}, we collect some of the basic properties of projections adding up to scalar times identity, and we define (Definition \ref{def:def-of-p*}) the families of quantum strategies obtained by such projections and the correlations that they induce. Finally, in Section \ref{self-testing-sum-proj}, we show that the quantum correlations in Definition \ref{def:def-of-p*} robustly self-test their canonical quantum strategies, and in Section \ref{sec:implications} we discuss some of its implications.  We include two short appendices: Appendix \ref{appendixA} on some key definitions from C$^*$-algebras that we'll be using in the article, and Appendix \ref{appendixB} to outline a method to construct four projections adding to a scalar.

\section{Preliminaries}\label{sec:prelim}
Let $\bb N$ denote the set of natural numbers $\left\lbrace 1,2,3,\dots\right\rbrace$. Given $n\in \bb N$, define $\left[n\right]\coloneqq\left\lbrace 1,\dots,n\right\rbrace$. A fraction $\frac{b}{d}$ ($b,d\in \bb N$) is said to be in \emph{lowest terms} if $\gcd(b,d) = 1$. The cardinality of a set $X$ is denoted $|X|$.

For each $d\in \bb N$ we let $\bb R^d$ and $\bb C^d$ represent the $d$-dimensional real and complex Euclidean spaces, respectively. For $d_A,d_B\in \bb N$ we let $\bb M_{d_A,d_B}$ denote the space of all $d_A\times d_B$ complex matrices. In particular, for $d\in \bb N$, we let $\bb M_d \coloneqq \bb M_{d,d}$ denote the algebra of all $d\times d$ complex matrices. The $d\times d$ identity matrix is denoted by $I_d$. For $X\in \bb M_{d_A,d_B}$ we let $X^T$ denote the transpose of $X$. For $X \in \bb M_d$ we let $\tr(X)$ denote its trace, whereas we shall use $\mathrm{tr}_d(X) \coloneqq \frac{1}{d}\tr(X)$ to denote its \emph{normalized} trace.

Let $\cl H$ be a (complex) Hilbert space.  We assume that the inner product on $\cl H$ is linear in the first argument and conjugate-linear in the second. We denote the algebra of all bounded operators on $\cl H$ by $\bh$. The identity operator on $\cl H$ will be denoted by $I_{\cl H}$.  Given vectors $\xi,\eta\in \cl H$ and an $\epsilon \geq 0$, we write $\xi\approx_{\epsilon}\eta$ if $\|\xi-\eta\| \leq \epsilon$. We will usually identify a $d$-dimensional Hilbert space $\cl H$ with $\bb C^d$. The algebra of all operators on $\bb C^d$ is identified with $\bb M_d$ with respect to the standard orthonormal basis $\{e_i\}_{i=1}^d$ of $\bb C^d$. Given $X\in \bb M_d$, a subspace $\cl K\subseteq \bb C^d$ is called an \emph{invariant subspace} of $X$ if $X(\cl K) \subseteq \cl K$.

A matrix $X\in \bb M_d$ is called \emph{positive} if $\inner{X\xi}{\xi}\geq 0$ for all $\xi\in \bb C^d$. For two Hermitian matrices $X,Y\in \bb M_d$ we say that $X\leq Y$ if $Y-X$ is positive. A matrix $P\in \bb M_d$ is called a \emph{projection} if $P=P^*=P^2$, where $P^*$ is the adjoint of $P$. A finite set of positive matrices $X_1,\dots,X_k \in \bb M_d$ is called a \emph{positive operator-valued measure} (POVM) if $X_1+\dots+X_k = I_d$. In particular, a finite set of projections $P_1,\dots,P_k \in \bb M_d$ is called a \emph{projection-valued measure} (PVM) if $P_1+\dots+P_k = I_d$.

We recall the Schmidt decomposition \cite[Section~5.3]{Peres}.

\begin{lemma}[Schmidt decomposition]
Let $\psi\in \bb C^{d_A}\otimes \bb C^{d_B}$. There exist orthonormal sets $\{\xi_l\}_{l=1}^r$ in $\bb C^{d_A}$, $\{\eta_l\}_{l=1}^r$ in $\bb C^{d_B}$, and strictly positive real numbers $\{\alpha_l\}_{l=1}^r$ such that $\psi = \sum_{l=1}^r \alpha_l \xi_l\otimes \eta_l$. The numbers $\{\alpha_l\}_{l=1}^r$, called the \emph{Schmidt coefficients} of $\psi$, are unique up to permutation.
\end{lemma}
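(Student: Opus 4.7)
The plan is to reduce the statement to the singular value decomposition (SVD) for rectangular complex matrices, which is the standard route. First I would fix the standard orthonormal bases $\{e_i\}_{i=1}^{d_A}$ and $\{e_j\}_{j=1}^{d_B}$ and encode $\psi$ as a matrix $M\in \bb M_{d_A,d_B}$ via the linear isomorphism $\psi=\sum_{i,j} M_{ij}\, e_i\otimes e_j\;\leftrightarrow\; M=(M_{ij})$. Under this identification, the norms match and (more importantly) bi-unitary changes of basis on the two tensor factors correspond to multiplying $M$ on the left and right by unitaries.

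Next I would invoke SVD: write $M=U\Sigma V^*$ with $U\in \bb M_{d_A}$, $V\in \bb M_{d_B}$ unitary and $\Sigma$ a $d_A\times d_B$ ``diagonal'' matrix whose nonzero diagonal entries are the strictly positive singular values $\alpha_1,\dots,\alpha_r$ of $M$, where $r=\rank(M)$. Substituting the relation $M_{ij}=\sum_{l=1}^r U_{il}\,\alpha_l\,\overline{V_{jl}}$ into the expansion of $\psi$ gives
\begin{equation*}
\psi \;=\; \sum_{l=1}^{r} \alpha_l\,\Bigl(\sum_{i} U_{il}\, e_i\Bigr)\otimes \Bigl(\sum_{j}\overline{V_{jl}}\, e_j\Bigr) \;=\; \sum_{l=1}^{r} \alpha_l\,\xi_l\otimes\eta_l,
\end{equation*}
where $\xi_l$ and $\eta_l$ are the $l$-th columns of $U$ and $\overline{V}$ respectively. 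Since $U$ and $\overline{V}$ are unitary, the families $\{\xi_l\}_{l=1}^r$ and $\{\eta_l\}_{l=1}^r$ are orthonormal, giving existence.

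For uniqueness of the coefficients, I would argue as follows. Given any decomposition $\psi=\sum_{l=1}^r \alpha_l \xi_l\otimes\eta_l$ of the claimed form, a direct computation of the partial trace (or equivalently of $MM^*$ after transporting to matrices) shows that $\tr_B(\psi\psi^*)=\sum_{l=1}^r \alpha_l^2\, \xi_l\xi_l^*$, which is a spectral decomposition of the operator $\tr_B(\psi\psi^*)\in \bb B(\bb C^{d_A})$ with eigenvalues $\alpha_l^2$ (on the range) and zero on the orthogonal complement. Since $\tr_B(\psi\psi^*)$ depends only on $\psi$, its nonzero eigenvalues (with multiplicity) are an invariant of $\psi$, so the multiset $\{\alpha_l\}_{l=1}^r$ is determined up to permutation.

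The only potential obstacle is bookkeeping: keeping straight the conjugation in the second factor (arising from $V^*$ versus $V$) and ensuring $r$ is taken to be exactly the rank so that all $\alpha_l$ are strictly positive. This is purely notational rather than conceptual, and since the paper is willing to cite Peres for the result, I would expect the author's proof to be either a short SVD argument of this form or a direct reference.
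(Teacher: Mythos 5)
Your SVD-based argument is correct and is the standard proof of Schmidt decomposition; the paper itself gives no proof at all, simply citing Peres (Section 5.3), so there is nothing in the text to compare against. Both the existence step (via $M = U\Sigma V^*$ and reading off the columns of $U$ and $\overline{V}$) and the uniqueness step (identifying $\{\alpha_l^2\}$ with the nonzero eigenvalues of $\tr_B(\psi\psi^*)$) are sound, and you have correctly flagged the conjugation bookkeeping as the only point requiring care.
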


Let $\psi = \sum_{l=1}^r \alpha_l \xi_l\otimes \eta_l$ be a Schmidt decomposition of a vector $\psi\in \bb C^{d_A}\otimes \bb C^{d_B}$. Any orthonormal basis of $\bb C^{d_A}$ (resp., $\bb C^{d_B}$) containing $\{\xi_l\}_{l=1}^r$ (resp., $\{\eta_l\}_{l=1}^r$) is called a \emph{Schmidt basis} for $\bb C^{d_A}$ (resp., $\bb C^{d_B}$). The number $r$ is called the \emph{Schmidt rank} of $\psi$, and we say that $\psi$ is of \emph{full Schmidt rank} if $r = d_A = d_B$. Finally, define the subspaces $\mathrm{supp}_{A}(\psi) \coloneqq \mathrm{span}\{\xi_l\}_{l=1}^r$ and $\mathrm{supp}_B(\psi) \coloneqq \mathrm{span}\{\eta_l\}_{l=1}^r$.

A matrix $\rho \in \bb M_d$ is called a \emph{density matrix} if $\rho\geq 0$ and $\tr(\rho) = 1$. If $\rho\in \bb M_{d_A}\otimes \bb M_{d_B}$ is a density matrix, then the density matrices \begin{align*}
\rho_A \coloneqq \tr_B(\rho) \coloneqq (\mathrm{id}_A \otimes \tr)(\rho), \quad \text{ and } \quad \rho_B \coloneqq \tr_A(\rho) \coloneqq (\tr \otimes \mathrm{id}_B)(\rho),
\end{align*} where $\mathrm{id}_A$ and $\mathrm{id}_B$ are identity maps on $\bb M_{d_A}$ and $\bb M_{d_B}$, respectively, are called its \emph{reduced density matrices}. 

A density matrix $\rho\in \bb M_d$ induces a semi-inner product on $\bb M_d$ by \begin{align*}
\inner{X}{Y}_{\rho} \coloneqq \tr_{\rho}(Y^*X) \coloneqq \tr(Y^*X\rho), \qquad \text{ for all } X,Y\in \bb M_d.
\end{align*} This semi-inner product induces a semi-norm $\|X\|_{\rho} \coloneqq  \inner{X}{X}_{\rho}^{1/2}$ for all $X\in \bb M_d$. This semi-inner product (resp., semi-norm) is an inner product (resp., a norm) if and only if $\rho$ is invertible.

Usually we will be in the following situation. 	Let $\psi\in \bb C^{d_A}\otimes \bb C^{d_B}$ be a unit vector. Let $\rho = \psi\psi^*$ be the corresponding density matrix and let $\rho_A, \rho_B$ be its reduced density matrices. Then, $\inner{.}{.}_{\rho_A}$ and $\inner{.}{.}_{\rho_B}$ are semi-inner products on $\bb M_{d_A}$ and $\bb M_{d_B}$, respectively, and they are inner products if and only if $\psi$ is of full Schmidt rank. Observe that for any $X\in \bb M_{d_A}$ and $Y\in \bb M_{d_B}$, \begin{align}
\norm{X}_{\rho_A}^2 &= \tr(X^*X\rho_A) = \inner{\left(X^*X\otimes I_{d_B}\right)\psi}{\psi} = \|(X\otimes I_{d_B})\psi\|^2, \text{ and} \label{partial-trace-1} \\ 
\norm{Y}_{\rho_B}^2 &= \tr(Y^*Y\rho_B) = \inner{\left(I_{d_A}\otimes Y^*Y\right)\psi}{\psi} = \|(I_{d_A}\otimes Y)\psi\|^2 \label{partial-trace-2}.
\end{align}

The \emph{operator-vector correspondence} is the linear map $\mathrm{vec} \colon \bb M_{d_A,d_B} \rightarrow \bb C^{d_A}\otimes \bb C^{d_B}$ given on the canonical matrix units $\{E_{i,j}:(i,j)\in [d_A]\times [d_B]\}$ for $\mathbb M_{d_A, d_B}$ by $\mathrm{vec}(E_{i,j}) \coloneqq e_i\otimes e_j$ for all $(i,j)\in [d_A]\times [d_B]$. In particular, for $\xi\in \bb C^{d_A}$ and $\eta\in \bb C^{d_B}$, we have $\mathrm{vec}(\xi\eta^*) = \xi\otimes \overline{\eta}$. The map $\mathrm{vec}$ is an isometry: $\inner{\mathrm{vec}(X)}{\mathrm{vec}(Y)} = \inner{X}{Y}_2 \coloneqq \tr(Y^*X)$, for $X,Y\in \bb M_{d_A,d_B}$. We shall use the following identity: for $X\in \bb M_{d_A',d_A}, Y\in \bb M_{d_B',d_B}, D\in \bb M_{d_A,d_B}$, \begin{align}\label{eq:vec-identity}
(X\otimes Y)\mathrm{vec}(D) &= \mathrm{vec}(XDY^T).
\end{align} For each $d\in \bb N$, the unit vector $\varphi_d = \frac{1}{\sqrt{d}}\sum_{i=1}^de_i\otimes e_i$ is called the (\emph{canonical}) \emph{maximally entangled state} on $\bb C^d\otimes \bb C^d$. Then the diagonal matrix $D=\frac{1}{\sqrt{d}}I_d$ satisfies $\mathrm{vec}(D)=\varphi_d$. For $X,Y\in \bb M_d$, using Identity \eqref{eq:vec-identity} and the isometry property, we have \begin{align}
(X\otimes Y)\varphi_d & = \frac{1}{\sqrt{d}}\mathrm{vec}(XY^T), \text{ and} \label{eq:vec-iden-31} \\
\inner{(X\otimes Y)\varphi_d}{\varphi_d} &= \frac{1}{d}\tr(XY^T). \label{eq:vec-iden-32}
\end{align}

\section{Synchronous quantum correlations}\label{sec:q-syn}
Let us briefly describe the framework of non-local games \cite{CHTW} in which the discussion of correlations and strategies takes a natural form. A \emph{non-local game} $\mathscr{G}$ is played by two spatially separated players, Alice and Bob, against a referee. There are (non-empty) finite sets $I_A$ and $I_B$, called the \emph{question sets}, and $O_A$ and $O_B$, called the \emph{answer sets}, and a \emph{rule function} 
\[ \lambda \colon I_A\times I_B\times O_A\times O_B\to \{0,1\}, \] 
all of which are known to the players. However, once the game begins, the players cannot communicate with each other. The game begins with the referee randomly picking a pair of questions $(v,w)\in I_A\times I_B$ according to some probability distribution\footnote{The probability distribution on $I_A \times I_B$ should be given as part of the game, but as we are only interested in winning (or perfect) strategies, the probability distribution will not be relevant for us.  For our purposes, it suffices to view $I_A \times I_B$ with the uniform distribution, so each pair of questions is chosen with equal probability.} and sending $v$ to Alice and $w$ to Bob. Alice returns an answer $i\in O_A$ to the referee, and so does Bob with $j\in O_B$. Alice and Bob win the round of the game if $\lambda(v,w,i,j) = 1$, or otherwise they lose.  The goal of Alice and Bob is to maximize their winning probability, and for this, they can take the help of either classical resources or quantum resources. It is well-known that for many games, quantum resources exhibit larger winning probabilities as compared to classical resources \cite{CHTW}. A classic example is the \emph{CHSH game} as described in the Introduction; here $I_A=I_B=O_A=O_B=\{0,1\}$, and $\lambda(v,w,i,j)=1$ if and only if $vw = i+j\pmod 2$.

As the game is played for many rounds, an outside observer will see a probability distribution which encodes the strategies employed by the players.  Let $p(i,j|v,w)$ be the probability of Alice responding with $i$ and Bob responding with $j$ given that they received $v$ and $w$, respectively. \emph{Classical strategies} usually employ either a \emph{deterministic} strategy (given by a function $I_A \times I_B \rightarrow O_A \times O_B$) or by using \emph{local/shared randomness}. On the other hand, \emph{quantum strategies} are given by a \emph{quantum state} (usually \emph{entangled}) shared by Alice and Bob and sets of \emph{quantum measurements}. Specifically, we have the following definition.

\begin{definition}[Quantum strategies and quantum correlations]
Given (non-empty) finite sets $I_A,I_B,O_A,O_B$ with $|I_A|=n_A, |I_B|=n_B, |O_A|=k_A$ and $|O_B|=k_B$, a \emph{quantum strategy} $\mathscr{S}$ is given by a triple: \begin{align}
\mathscr{S} = \left(\psi\in \bb C^{d_A}\otimes \bb C^{d_B}, \left\lbrace E_{v,i}: i\in O_A, v\in I_A \right\rbrace, \left\lbrace F_{w,j}: j\in O_B, w\in I_B \right\rbrace \right),
\end{align} where $\psi$ is a unit vector (also referred to as a \emph{quantum state}), for each $v\in I_A$ the set $\{E_{v,i}:i\in O_A\}$ is a POVM on $\bb C^{d_A}$, and for each $w\in I_B$, the set $\{F_{w,j}:j\in O_B\}$ is a POVM on $\bb C^{d_B}$.

Each quantum strategy $\mathscr{S}$ induces a collection of probability distributions called a \emph{quantum correlation} as follows: for each $(v,w)\in I_A\times I_B$, define \begin{align}\label{eq:def-q-corr}
p(i,j|v,w) = \inner{\left(E_{v,i}\otimes F_{w,j}\right)\psi}{\psi}, \quad \text{ for all } (i,j)\in O_A\times O_B.
\end{align}

Fixing $n_A,n_B,k_A,k_B\in \bb N$, the set of quantum correlations arising from all choices of quantum strategies is denoted by $C_{q}(n_A,n_B,k_A,k_B)$. In particular, we define $C_q(n,k) \coloneqq C_q(n,n,k,k)$ for all $n,k\in \bb N$.
\end{definition}

If $p = (p(i,j|v,w))\in C_{q}(n_A,n_B,k_A,k_B)$ is a quantum correlation, then the POVM assumptions imply that $p(i,j|v,w)\geq 0$ for all $i,j,v$ and $w$, and 
\begin{align*}
\sum_{(i,j)\in O_A\times O_B} p(i,j|v,w) = 1, \quad \text{for all } (v,w)\in I_A\times I_B.
\end{align*} Moreover, quantum correlations satisfy \emph{non-signaling conditions}; that is, they have well-defined \emph{marginals} $p_A(i|v)$ and $p_B(j|w)$ in the sense that 
\begin{align}
p_A(i|v) &\coloneqq \sum_{j' \in O_B} p(i,j'|v,w) = \sum_{j' \in O_B} p(i,j'|v,w'), \text{ and}\label{eq:non-sig-q1} \\
p_B(j|w) &\coloneqq \sum_{i' \in O_A} p(i',j|v,w) = \sum_{i' \in O_A} p(i',j|v',w), \label{eq:non-sig-q2}
\end{align} for all $v,v'\in I_A, w,w'\in I_B, i\in O_A, j\in O_B$. Indeed, using the definition of a POVM, Equations \eqref{eq:non-sig-q1} and \eqref{eq:non-sig-q2} yield 
\begin{align}
p_A(i|v) = \inner{\left(E_{v,i}\otimes I_{d_B}\right)\psi}{\psi} \quad \text{and} \quad p_B(j|w) = \inner{\left(I_{d_A} \otimes F_{w,j}\right)\psi}{\psi}.
\end{align} Thus $p_A(i|v)$ is independent of $w$, and $p_B(j|w)$ is independent of $v$. (In the context of non-local games, the non-signaling condition is imposed due to the assumption that the players may not communicate.)

\begin{remark}
The set $C_q(n_A,n_B,k_A,k_B)$ of quantum correlations is convex for all natural numbers $n_A,n_B,k_A$ and $k_B$. However, in general, it is not closed. This was first shown by Slofstra who proved that $C_q(184,235,8,2)$ is not closed \cite{Slofstra19}. Thereafter, non-closure of quantum correlations sets with smaller values of $n_A,n_B,k_A,k_B$ have been proved \cite{DPP,MR20,C20}---in particular, $C_q(5, 2)$ is not closed as was first shown in \cite{DPP}.
\end{remark}

\begin{remark}
For notational simplicity, we shall simply take 
\begin{align*}
I_A = [n_A], I_B = [n_B], O_A = [k_A], \text{ and } O_B = [k_B]
\end{align*} 
in the definition of a quantum correlation.
\end{remark}

Introduced in \cite{PSSTW}, the subset of synchronous correlations arises naturally in the context of many non-local games, for example, the \emph{graph coloring games} \cite{CHTW} and \emph{graph homomorphism games} \cite{MR14}. In this article, we shall be self-testing some quantum strategies yielding synchronous quantum correlations, and therefore here is a formal definition.

\begin{definition}\label{def:synCq}
A correlation $(p(i,j|v,w))\in C_{q}(n,k)$ is called \emph{synchronous} if $k \coloneqq k_A = k_B$, $n \coloneqq n_A = n_B$, and $p(i,j|v,v)=0$ for all $i,j \in [k]$ and $v\in [n]$ with $i \neq j$. We let $C_q^s(n,k)$ denote the subset of all synchronous correlations of $C_q(n,k)$, and call it the set of \emph{synchronous quantum correlations}.
\end{definition}

A quantum strategy which induces a synchronous quantum correlation has several nice properties. For instance, when restricted to $\mathrm{supp}_A(\psi)$ (resp., $\mathrm{supp}_B(\psi)$), the measurement operators for Alice (resp., Bob) are projections, Bob's measurement operators can be expressed in terms of Alice's, and $X\mapsto \inner{(X\otimes I)\psi}{\psi}$ defines a \emph{tracial state} (see Appendix \ref{appendixA}) on the C$^*$-algebra generated by Alice's measurements \cite[Theorem 5.5]{PSSTW}. The next two lemmas show that if a correlation is ``approximately'' synchronous, then these properties ``approximately'' hold for its inducing strategy. These lemmas are not new and appear in various places in the literature in different guises, but for the sake of completeness, we provide proofs.

Since correlations are collections of probability distributions, it makes sense to work with the $1$-norm. That is, given two correlations $p_1,p_2\in C_{q}(n_A,n_B,k_A,k_B)$, we define \begin{align}\label{eq:1-norm}
\norm{p_1-p_2}_1 \coloneqq \sum_{i,j,v,w}\abs{p_1(i,j|v,w) - p_2(i,j|v,w)}.
\end{align}

\begin{lemma}\label{lem:approx-vectors} Let $p\in C_q(n,k)$ be induced by a strategy \begin{align*}
\mathscr{S} = \left(\psi\in \bb C^{d_A}\otimes \bb C^{d_B}, \left\lbrace E_{v,i}: i\in [k], v\in [n] \right\rbrace, \left\lbrace F_{w,j}: j\in [k], w\in [n] \right\rbrace \right).
\end{align*} Let $\rho_A, \rho_B$ be the reduced density matrices of the density matrix $\rho = \psi\psi^*$. If $\|p-\widetilde{p}\|_1 \leq \delta$ for some synchronous $\widetilde{p}\in C_q^s(n,k)$ and for some $\delta\geq 0$, then for all $i\in [k]$ and $v \in [n]$, \begin{enumerate}
\item[(a)] $\|(E_{v,i}\otimes I_{d_B})\psi - (I_{d_A}\otimes F_{v,i})\psi\| \leq \sqrt{\delta}$,
\item[(b)] $\|(E_{v,i}\otimes I_{d_B})\psi - (E_{v,i}\otimes F_{v,i})\psi\| \leq \sqrt{\delta}$,
\item[(c)] $\|(I_{d_A}\otimes F_{v,i})\psi - (E_{v,i}\otimes F_{v,i})\psi\| \leq \sqrt{\delta}$,
\item[(d)] $\|E_{v,i} - E_{v,i}^2\|_{\rho_A} =  \|(E_{v,i}\otimes I_{d_B})\psi - (E_{v,i}^2\otimes I_{d_B})\psi\| \leq 2\sqrt{\delta}$,
\item[(e)] $\|F_{v,i} - F_{v,i}^2\|_{\rho_B} =  \|(I_{d_A}\otimes F_{v,i})\psi - (I_{d_A} \otimes F_{v,i}^2)\psi\| \leq 2\sqrt{\delta}$.
\end{enumerate}\end{lemma}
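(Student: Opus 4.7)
The plan is to first convert the $1$-norm bound on correlations into concrete bounds on specific probabilities associated to the diagonal questions $(v,v)$, and then expand each squared norm on the left-hand side of (a)--(e) into quantities that can be controlled by those probabilities. Since $\widetilde{p}$ is synchronous, $\widetilde{p}(i,j\mid v,v)=0$ for $i\ne j$, hence $|p(i,j\mid v,v)-\widetilde{p}(i,j\mid v,v)|=p(i,j\mid v,v)$ for $i\ne j$. Summing these terms inside $\|p-\widetilde p\|_1\le\delta$ and using non-signaling (so $p_A(i|v)=\sum_j p(i,j|v,v)$ and $p_B(i|v)=\sum_j p(j,i|v,v)$) gives, for every fixed $v$ and $i$, the two key inequalities
\begin{equation*}
 p_A(i|v)-p(i,i|v,v)\le\delta,\qquad p_B(i|v)-p(i,i|v,v)\le\delta,
\end{equation*}
and in fact their sum is bounded by $\delta$ because the underlying sets of index pairs are disjoint.

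For (a), I would expand $\|(E_{v,i}\otimes I)\psi-(I\otimes F_{v,i})\psi\|^2$ into three inner products. Using that $E_{v,i}, F_{v,i}$ are POVM elements, so $E_{v,i}^2\le E_{v,i}$ and $F_{v,i}^2\le F_{v,i}$, together with positivity of $\psi\psi^*$, the diagonal terms are bounded by $p_A(i|v)$ and $p_B(i|v)$; the cross term is exactly $\inner{(E_{v,i}\otimes F_{v,i})\psi}{\psi}=p(i,i|v,v)$ (which is automatically real since $E_{v,i}$ and $F_{v,i}$ are Hermitian). The expression collapses to $p_A(i|v)+p_B(i|v)-2p(i,i|v,v)\le\delta$, giving (a). Statements (b) and (c) follow analogously by rewriting the left-hand side as $\|(E_{v,i}\otimes(I-F_{v,i}))\psi\|^2$ and $\|((I-E_{v,i})\otimes F_{v,i})\psi\|^2$ respectively; the inequality $A\otimes B\le A'\otimes B'$ whenever $0\le A\le A'\le I$ and $0\le B\le B'\le I$ (applied with $A=E_{v,i}^2$, $A'=E_{v,i}$ and $B=(I-F_{v,i})^2$, $B'=I-F_{v,i}$) reduces these to $p_A(i|v)-p(i,i|v,v)$ and $p_B(i|v)-p(i,i|v,v)$, each bounded by $\delta$.

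For (d), the identity $\|X\|_{\rho_A}=\|(X\otimes I)\psi\|$ from \eqref{partial-trace-1} already gives the asserted equality. To prove the $2\sqrt\delta$ bound I would insert the vector $(E_{v,i}\otimes F_{v,i})\psi$ and apply the triangle inequality:
\begin{equation*}
 \|(E_{v,i}\otimes I)\psi-(E_{v,i}^2\otimes I)\psi\|\le\|(E_{v,i}\otimes I)\psi-(E_{v,i}\otimes F_{v,i})\psi\|+\|(E_{v,i}\otimes F_{v,i})\psi-(E_{v,i}^2\otimes I)\psi\|.
\end{equation*}
The first summand is $\le\sqrt\delta$ by (b). The second equals $\|(E_{v,i}\otimes I)\bigl[(I\otimes F_{v,i})\psi-(E_{v,i}\otimes I)\psi\bigr]\|$, which is bounded by $\|E_{v,i}\|_{\mathrm{op}}\cdot\sqrt\delta\le\sqrt\delta$ using (a) and $\|E_{v,i}\|_{\mathrm{op}}\le 1$. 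Part (e) is proved symmetrically, swapping the roles of Alice and Bob.

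I do not anticipate a serious obstacle: the proof is essentially a bookkeeping exercise in POVM inequalities and the triangle inequality. The one subtlety worth flagging is that in (a)--(c), $E_{v,i}$ and $F_{v,i}$ are not assumed to be projections, so one cannot replace $E_{v,i}^2$ by $E_{v,i}$; the correct move is to use the order inequality $E_{v,i}^2\le E_{v,i}$ and exploit the fact that $\psi\psi^*\ge 0$ acts as a positive functional on the tensor algebra. Once this is handled, everything else is straightforward.
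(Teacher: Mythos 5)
Your proof is correct and takes essentially the same route as the paper's: expand the squared norms, dominate $E_{v,i}^2$ by $E_{v,i}$ (and similarly for $F$, and for tensor products of such inequalities), reduce to $p_A(i|v) - p(i,i|v,v)$, $p_B(i|v) - p(i,i|v,v)$, and their sum, all bounded by $\delta$ via the $1$-norm hypothesis and synchronicity of $\widetilde p$, then obtain (d) and (e) by inserting $(E_{v,i}\otimes F_{v,i})\psi$ and applying the triangle inequality together with (a) and (b). The only difference is cosmetic: you extract the two key probability inequalities up front, whereas the paper derives them inline within the proofs of (a) and (b).
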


\begin{proof}
Using the fact that $0\leq E_{v,i}^2 \leq E_{v,i}$ and $0\leq F_{v,i}^2 \leq F_{v,i}$ for all $v\in [n]$,  \begin{align*}
\|(E_{v,i}\otimes I_{d_B})\psi - (I_{d_A}\otimes F_{v,i})\psi\|^2 &= \inner{(E_{v,i}^2\otimes I_{d_B})\psi}{\psi} + \inner{(I_{d_A}\otimes F_{v,i}^2)\psi}{\psi}  - 2 \inner{(E_{v,i}\otimes F_{v,i})\psi}{\psi} \\
&\leq \inner{(E_{v,i}\otimes I_{d_B})\psi}{\psi} + \inner{(I_{d_A}\otimes F_{v,i})\psi}{\psi}  - 2 \inner{(E_{v,i}\otimes F_{v,i})\psi}{\psi} \\
&= p_A(i|v) + p_B(i|v) - 2 p(i,i|v,v) \\
&= \sum_{\substack{j=1 \\ j\neq i}}^k p(i,j|v,v) + p(j,i|v,v) \\
&= \sum_{\substack{j=1 \\ j\neq i}}^k |p(i,j|v,v) - \widetilde{p}(i,j|v,v)| + |p(j,i|v,v) - \widetilde{p}(j,i|v,v)| \\
&\leq \|p-\widetilde{p}\|_1 \leq \delta,
\end{align*} where the penultimate line is due to the fact that $\widetilde{p}$ is synchronous. Thus (a) follows. 

Similarly, to see (b), note that \begin{align*}
\|(E_{v,i}\otimes I_{d_B})\psi - (E_{v,i}\otimes F_v)\psi\|^2 &= \| (E_{v,i} \otimes (I_{d_B} - F_{v,i})) \psi \|^2 \\
&= \inner{(E_{v,i}^2 \otimes (I_{d_B} - F_{v,i})^2) \psi}{\psi} \\
&\leq \inner{(E_{v,i} \otimes (I_{d_B} - F_{v,i}))\psi}{\psi} \\
&= \inner{(E_{v,i} \otimes I_{d_B})\psi}{\psi} - \inner{(E_{v,i} \otimes  F_{v,i})\psi}{\psi} \\
&= p_A(i|v) -  p(i,i|v,v) \\
&= \sum_{\substack{j=1 \\ j\neq i}}^k p(i,j|v,v) = \sum_{\substack{j=1 \\ j\neq i}}^k |p(i,j|v,v) - \widetilde{p}(i,j|v,v)| \\
&\leq \|p - \widetilde{p}\|_1 \leq \delta.
\end{align*} Part (c) is similar.

The equalities in parts (d) and (e) follow from Equations \eqref{partial-trace-1} and \eqref{partial-trace-2}.
Since $\|E_{v,i}\|\leq 1$ and $\|F_{v,i}\|\leq 1$, we deduce from (a) that for all $v\in [n]$,
\begin{align*}
\|(E_{v,i}^2 \otimes I_{d_B})\psi - (E_{v,i}\otimes F_{v,i})\psi\| &\leq \sqrt{\delta} \text{ and}\\
\|(I_{d_A}\otimes F_{v,i}^2)\psi - (E_{v,i}\otimes F_{v,i})\psi\| &\leq \sqrt{\delta}.
\end{align*} 
Using these two inequalities with (b) and (c) and applying triangle inequality, we get the inequalities in part (d) and (e). 
\end{proof}

We record the special case of $\delta = 0$ in Lemma \ref{lem:approx-vectors}.

\begin{cor}\label{cor:proj-strat}
Let $p\in C_q^s(n,k)$ be a synchronous correlation induced by a strategy \begin{align*}
\mathscr{S} = \left(\psi\in \bb C^{d_A}\otimes \bb C^{d_B}, \left\lbrace E_{v,i}: i\in [k], v\in [n] \right\rbrace, \left\lbrace F_{w,j}: j\in [k], w\in [n] \right\rbrace \right).
\end{align*} Then, for all $i\in [k]$ and $v\in [n]$, \begin{enumerate}
\item[(a)] $(E_{v,i}\otimes I_{d_B})\psi = (I_{d_A}\otimes F_{v,i})\psi$,
\item[(b)] $(E_{v,i}\otimes I_{d_B})\psi = (E_{v,i}^2\otimes I_{d_B})\psi$,
\item[(c)] $(I_{d_A}\otimes F_{v,i})\psi = (I_{d_A}\otimes F_{v,i}^2)\psi$.
\end{enumerate}
Moreover, if $\psi$ is of full Schmidt rank $r = d_A = d_B$, then for all $v\in [n]$, the matrices $E_{v,i}, F_{v,i}$ are projections, and $F_{v,i} = D^TE_{v,i}^T(D^{-1})^T$, where $D \in \bb M_r$ is such that $\mathrm{vec}(D) = \psi$.
\end{cor}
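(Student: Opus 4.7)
The corollary is essentially the $\delta=0$ specialization of Lemma \ref{lem:approx-vectors} together with one additional algebraic identification under the full-Schmidt-rank hypothesis, so the plan is short and mostly mechanical.

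For parts (a), (b), (c), my plan is to apply Lemma \ref{lem:approx-vectors} with $\widetilde{p} = p$ and $\delta = 0$. Since $p$ itself is synchronous, the hypothesis $\|p-\widetilde{p}\|_1 \le \delta$ holds with $\delta=0$, and parts (a), (b), (c) of the lemma immediately give the three displayed equalities (the lemma's (b), (c) are worded in terms of $(E_{v,i}\otimes F_{v,i})\psi$, but combining them with (a) and substituting $\psi \mapsto (E_{v,i}^2\otimes I)\psi$ via the norm-one factor argument already executed inside the lemma's proof yields the cleaner forms stated here). No new computation is required.

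For the moreover part, assume $\psi$ has full Schmidt rank $r = d_A = d_B$. Then by the remark right after equations \eqref{partial-trace-1}--\eqref{partial-trace-2}, the seminorms $\|\cdot\|_{\rho_A}$ and $\|\cdot\|_{\rho_B}$ are genuine norms. Apply parts (d) and (e) of Lemma \ref{lem:approx-vectors} with $\delta=0$: they yield $\|E_{v,i} - E_{v,i}^2\|_{\rho_A} = 0$ and $\|F_{v,i} - F_{v,i}^2\|_{\rho_B} = 0$, so that $E_{v,i} = E_{v,i}^2$ and $F_{v,i} = F_{v,i}^2$. Since each $E_{v,i}$ and $F_{v,i}$ is already a positive operator (being a POVM element), these idempotency relations force them to be (Hermitian) projections.

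Finally, to obtain $F_{v,i} = D^T E_{v,i}^T (D^{-1})^T$, choose $D \in \bb M_r$ with $\mathrm{vec}(D) = \psi$; full Schmidt rank of $\psi$ is equivalent to invertibility of $D$. Use (a) of the present corollary together with the operator-vector identity \eqref{eq:vec-identity}:
\begin{align*}
\mathrm{vec}(E_{v,i} D) = (E_{v,i} \otimes I_{d_B}) \mathrm{vec}(D) = (I_{d_A} \otimes F_{v,i}) \mathrm{vec}(D) = \mathrm{vec}(D F_{v,i}^T).
\end{align*}
Since $\mathrm{vec}$ is injective, $E_{v,i} D = D F_{v,i}^T$, and invertibility of $D$ gives $F_{v,i}^T = D^{-1} E_{v,i} D$, which transposes to the claimed formula.

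There is no real obstacle here: the entire argument is a direct unpacking of Lemma \ref{lem:approx-vectors} at $\delta=0$ plus one application of the $\mathrm{vec}$-identity. The only thing to be careful about is invoking the full-Schmidt-rank hypothesis precisely where it is needed, namely (i) to upgrade the seminorms $\|\cdot\|_{\rho_A}, \|\cdot\|_{\rho_B}$ to norms so that (b), (c) yield pointwise idempotency rather than merely vanishing on the support of $\psi$, and (ii) to invert $D$.
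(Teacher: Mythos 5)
Your proof is correct and follows essentially the same route as the paper: apply Lemma \ref{lem:approx-vectors} with $\widetilde{p}=p$ and $\delta=0$, upgrade the seminorms to norms under the full-Schmidt-rank hypothesis to extract idempotency, and use the $\mathrm{vec}$-identity with invertibility of $D$ for the conjugation formula. One tiny simplification you could make: corollary parts (b) and (c) are just the $\delta=0$ specializations of Lemma \ref{lem:approx-vectors}(d) and (e) directly, so the detour through lemma parts (b), (c) plus the norm-one factor argument in your parenthetical is not needed.
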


\begin{proof}
Statements (a), (b), and (c) follow immediately from the corresponding statements in Lemma \ref{lem:approx-vectors}.  When $\psi$ has full Schmidt rank, the seminorms $\|\cdot\|_{\rho_A}$ and $\|\cdot\|_{\rho_B}$ are norms, and so $E_{v,i}$ and $F_{v,i}$ are projections by statements (d) and (e) in Lemma \ref{lem:approx-vectors}.  Finally, (a) and Equation \eqref{eq:vec-identity} imply $E_{v,i} D = D F_{v,i}^T$.  Since $\psi$ has full Schmidt rank, $D$ is invertible, and the last equation follows.
\end{proof}

\begin{lemma}[Approximately tracial state]\label{lem:st-app-tra}
Let $p\in C_q(n,k)$ be induced by a strategy \begin{align*}
\mathscr{S} = \left(\psi\in \bb C^{d_A}\otimes \bb C^{d_B}, \left\lbrace E_{v,i}: i\in [k], v\in [n] \right\rbrace, \left\lbrace F_{w,j}: j\in [k], w\in [n] \right\rbrace \right).
\end{align*} Let $\rho_A,\rho_B$ be the reduced density matrices of the density matrix $\rho = \psi\psi^*$.  Let $\cl A\subseteq \bb M_{d_A}$ (resp., $\cl B\subseteq \bb M_{d_B}$) be the $C^*$-algebra generated by $\{E_{v,i}\}_{v,i}$ (resp., $\{F_{w,j}\}_{w,j}$). If $\|p-\widetilde{p}\|_1 \leq \delta$ for some synchronous $\widetilde{p}\in C_q^s(n,k)$ and for some $\delta\geq 0$, then the states $\tr_{\rho_A}\colon X\mapsto \tr(X\rho_A)$ on $\cl A$ and $\tr_{\rho_B}\colon Y\mapsto \tr(Y\rho_B)$ on $\cl B$ satisfy 
\begin{align}
|\tr_{\rho_A}(WX - XW)| &\leq 2\ell\sqrt{\delta} \text{ and} \label{approx-tracial1} \\
|\tr_{\rho_B}(UY - YU)| &\leq 2\ell\sqrt{\delta}, \label{approx-tracial2}
\end{align} respectively, where $W$ (resp., $U$) is any word in $\{E_{v,i}\}_{v,i}$ (resp., $\{F_{w,j}\}_{w,j}$) of length $\ell$, $X\in \cl A$ with $\|X\|\leq 1$, and $Y\in \cl B$ with $\|Y\|\leq 1$.
\end{lemma}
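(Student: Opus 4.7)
The plan is to prove the bound for $\tr_{\rho_A}$; the statement for $\tr_{\rho_B}$ follows by the symmetric argument with the roles of Alice and Bob interchanged. The only input needed from the hypothesis is Lemma \ref{lem:approx-vectors}(a), which gives
\begin{align*}
(E_{v,i}\otimes I_{d_B})\psi \approx_{\sqrt{\delta}} (I_{d_A}\otimes F_{v,i})\psi, \qquad v\in [n], \ i\in [k].
\end{align*}
This is an ``approximate intertwining'' between $\cl A\otimes 1$ and $1\otimes \cl B$ on the vector $\psi$, which will let us slide a generator from Alice's algebra across $X$ by temporarily passing to Bob's side, where it commutes trivially with $X\otimes I_{d_B}$.

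First I would handle the base case $\ell = 1$. For a generator $E = E_{v,i}$ with corresponding $F = F_{v,i}$, using $\tr_{\rho_A}(Z) = \inner{(Z\otimes I_{d_B})\psi}{\psi}$ and the self-adjointness of $E$ and $F$, one writes
\begin{align*}
\tr_{\rho_A}(EX) &= \inner{(X\otimes I_{d_B})\psi}{(E\otimes I_{d_B})\psi}, \\
\tr_{\rho_A}(XE) &= \inner{(E\otimes I_{d_B})\psi}{(X^*\otimes I_{d_B})\psi}.
\end{align*}
Replacing $(E\otimes I_{d_B})\psi$ by $(I_{d_A}\otimes F)\psi$ in each inner product introduces, via Cauchy--Schwarz together with $\|X\|\leq 1$, an error of at most $\sqrt{\delta}$; the resulting inner products both collapse to $\inner{(X\otimes F)\psi}{\psi}$, so $|\tr_{\rho_A}(EX) - \tr_{\rho_A}(XE)| \leq 2\sqrt{\delta}$.

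Next I would induct on the word length. Writing $W = E_1 W'$ with $E_1$ a single generator and $W'$ a word of length $\ell-1$, the telescoping identity
\begin{align*}
\tr_{\rho_A}(WX - XW) = \tr_{\rho_A}\!\left(E_1(W'X) - (W'X)E_1\right) + \tr_{\rho_A}\!\left(W'(XE_1) - (XE_1)W'\right)
\end{align*}
reduces to one application of the base case (bounded by $2\sqrt\delta$) plus the inductive hypothesis applied to the word $W'$ of length $\ell-1$ with the new operator $XE_1$ (bounded by $2(\ell-1)\sqrt\delta$). Since $\|E_{v,i}\|\leq 1$, both $W'X$ and $XE_1$ lie in $\cl A$ with norm at most $1$, so both invocations are legitimate, and we obtain the desired bound $2\ell\sqrt{\delta}$. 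The main obstacle is purely bookkeeping: one must keep track of the $\|\cdot\|\leq 1$ condition through the induction and remember that the approximate intertwining $(E\otimes I_{d_B})\psi \approx_{\sqrt\delta}(I_{d_A}\otimes F)\psi$ can only be invoked inside an inner product with a vector of norm at most $1$. No structural input beyond Lemma \ref{lem:approx-vectors}(a) is required, which makes this a standard sliding argument adapted to the approximately tracial setting.
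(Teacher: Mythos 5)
Your proof is correct and follows essentially the same approach as the paper: both arguments slide the generators of $W$ one at a time across $X$ using the approximate intertwining identity $(E_{v,i}\otimes I)\psi \approx_{\sqrt{\delta}} (I\otimes F_{v,i})\psi$ from Lemma \ref{lem:approx-vectors}(a), with each generator costing $2\sqrt{\delta}$. The paper carries this out as one explicit chain of $\ell$ steps, whereas you package it as a telescoping induction with a separately stated base case; this is a purely organizational difference, not a different argument.
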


\begin{proof}
Let $W = E_{v,i}W'$ where $W'$ is of length $\ell-1$, $i\in [k]$ and $v\in [n]$. Then making use of Lemma \ref{lem:approx-vectors}(a) a couple of times and that $\|E_{v,i}\|\leq 1$ and $\|X\|\leq 1$, we have \begin{align*}
\tr_{\rho_A}(WX) &= \inner{(E_{v,i}W'X\otimes I_{d_B})\psi}{\psi} \\
&= \inner{(W'X\otimes I_{d_B})\psi}{(E_{v,i}\otimes I_{d_B})\psi} \\
&\approx_{\sqrt{\delta}} \inner{(W'X\otimes I_{d_B})\psi}{(I_{d_A}\otimes F_{v,i})\psi} \\
&= \inner{(W'X\otimes F_{v,i})\psi}{\psi} \\
&= \inner{(W'X\otimes I_{d_B})(I_{d_A}\otimes F_{v,i})\psi}{\psi} \\
&\approx_{\sqrt{\delta}} \inner{(W'X\otimes I_{d_B})(E_{v,i}\otimes I_{d_B})\psi}{\psi} \\
&= \inner{(W'XE_{v,i}\otimes I_{d_B})\psi}{\psi}.
\end{align*} Thus, continuing in this fashion for the whole length of $W'$, we will obtain Expression \eqref{approx-tracial1}.  Similarly, Expression \eqref{approx-tracial2} holds.
\end{proof}

\begin{remark}
In Lemma \ref{lem:approx-vectors} and Lemma \ref{lem:st-app-tra}, it is enough to assume that $p\in C_q(n,k)$ satisfies $\abs{p(i, j|v, v)} \leq \delta$ for all $i \neq j$ and for all $v\in [n]$.
\end{remark}

\section{Robust self-testing: definition and basic properties}\label{sec-self-test-def}
In many cases, given a quantum correlation $\widetilde p$, there is a ``unique'' quantum strategy $\widetilde{\mathscr{S}}$ which induces the correlation $\widetilde p$, in the sense that, any other strategy $\mathscr S$ which also induces the correlation $\widetilde p$ is related to $\widetilde{\mathscr{S}}$ by a \emph{local isometry}. The way these two strategies $\mathscr{S}$ and $\widetilde{\mathscr{S}}$ are related is what we term as a \emph{local dilation}. For our purpose, we work with a more general definition of a \emph{local $\epsilon$-dilation} which relates strategy $\mathscr{S}$ to $\widetilde{\mathscr{S}}$ in the case when the correlation $p$ induced by $\mathscr{S}$ is $\epsilon$-close to $\widetilde{p}$ (in the 1-norm as in Equation \ref{eq:1-norm}). This general definition is useful in practical scenarios where it is inevitable that errors creep in while conducting experiments. 

\begin{definition}[Local $\epsilon$-dilation]\label{def:loc-eps-dil}
Given $\epsilon \geq 0$ and two finite-dimensional strategies \begin{align*}
\widetilde{\mathscr{S}} &= \big(\widetilde{\psi}\in \widetilde{\cl H}_A\otimes \widetilde{\cl H}_B, \{\widetilde{E}_{v,i}:v\in [n_A], i\in [k_A]\},\{\widetilde{F}_{w,j}:w\in [n_B], j\in [k_B] \}\big), \\
\mathscr{S} &= \left(\psi\in \cl H_A\otimes \cl H_B, \{E_{v,i}:v\in [n_A], i\in [k_A]\},\{F_{w,j}:w\in [n_B], j\in [k_B]\}\right),
\end{align*} we say that $\widetilde{\mathscr{S}}$ is a \emph{local $\epsilon$-dilation} of $\mathscr{S}$ if there exist isometries $V_A\colon \cl H_A \to \widetilde{\cl H}_A \otimes \cl K_A$ and $V_B\colon \cl H_B \to \widetilde{\cl H}_B \otimes \cl K_B$ for some finite-dimensional Hilbert spaces $\cl K_A$ and $\cl K_B$, and a quantum state $\psi_{\mathrm{junk}}\in \cl K_A\otimes \cl K_B$ such that \begin{align}
(V_A\otimes V_B)\psi &\approx_{\epsilon} \widetilde{\psi}\otimes \psi_{\mathrm{junk}} \text{ and} \label{eq:approx-loc-dil-1} \\
\left(V_A\otimes V_B\right)(E_{v,i}\otimes F_{w,j})\psi &\approx_{\epsilon} \big((\widetilde{E}_{v,i}\otimes \widetilde{F}_{w,j})\widetilde{\psi}\big) \otimes \psi_{\mathrm{junk}} \label{eq:approx-loc-dil-2}
\end{align} for all $i$, $j$, $v$, and $w$.

When $\epsilon = 0$, we say that $\widetilde{\mathscr{S}}$ is a \emph{local dilation} of $\mathscr{S}$ (instead of a local $0$-dilation). In this case, the approximations in Expressions \eqref{eq:approx-loc-dil-1} and \eqref{eq:approx-loc-dil-2} are replaced by equalities.
\end{definition}

\begin{remark}
We are abusing notation slightly in the above definition.  In Estimates \eqref{eq:approx-loc-dil-1} and \eqref{eq:approx-loc-dil-2}, the vectors on the left belong to the Hilbert space $(\widetilde{\cl H}_A \otimes \cl K_A) \otimes (\widetilde{\cl H}_B \otimes \cl K_B)$ and the vectors on the right belong to the Hilbert space $(\widetilde{\cl H}_A \otimes \widetilde{\cl H}_B) \otimes (\cl K_A \otimes \cl K_B)$.  We are identifying these two Hilbert spaces via the unitary which flips the second and third tensor factors.
\end{remark}

\begin{remark}\label{remark:terminology}
A note on the terminology that we have chosen: In the literature (for instance \cite{SB19}), the phrase ``equivalent up to a local isometry'' is often used instead of ``local dilation''. However, we feel that using the word ``equivalence'' is somewhat misleading, and that ``local dilation'' would be an appropriate replacement, since the term ``local dilation'' has a directional connotation (for instance, we can write $\mathscr{S}\xhookrightarrow{\epsilon} \widetilde{\mathscr{S}}$ to mean that $\widetilde{\mathscr{S}}$ is a local $\epsilon$-dilation of $\mathscr{S}$) which is not present if we use the term ``equivalence up to local isometry''.
\end{remark}

\begin{remark}
Observe that if a strategy $\widetilde{\mathscr{S}}$ induces a correlation $\widetilde{p}$, and if $\widetilde{\mathscr{S}}$ is a local dilation of any other strategy $\mathscr{S}$, then $\mathscr{S}$ also induces the same correlation.
\end{remark}

We now recall the definition of self-testing and robust self-testing.

\begin{definition}[Self-testing]\label{def:self-test}
A correlation $\widetilde{p} \in C_{q}(n_A,n_B,k_A,k_B)$ \emph{self-tests} a strategy $\widetilde{\mathscr{S}}$ if for any strategy $\mathscr{S}$ which also induces $\widetilde{p}$, $\widetilde{\mathscr{S}}$ is a local dilation of $\mathscr{S}$.
\end{definition}

\begin{definition}[Robust self-testing]\label{def:robust-self-test}
	A correlation $\widetilde{p} \in C_{q}(n_A,n_B,k_A,k_B)$ self-tests a strategy $\widetilde{\mathscr{S}}$ \emph{robustly} if $\widetilde{p}$ self-tests $\widetilde{\mathscr{S}}$ and the following condition is satisfied. For each $\epsilon \geq 0$, there exists $\delta \geq 0$, such that if there is some $p\in C_{q}(n_A,n_B,k_A,k_B)$ with $\norm{p-\widetilde{p}}_1 \leq \delta$, and $p$ is induced by a strategy $\mathscr{S}$, then $\widetilde{\mathscr{S}}$ is a local $\epsilon$-dilation of $\mathscr{S}$.
\end{definition}

For local $\epsilon$-dilations, we establish a transitivity statement which is helpful in reducing certain results to the case when the quantum state is of full Schmidt rank. In terms of the notation in Remark \ref{remark:terminology}, we establish  \begin{align*}
\big(\mathscr{S}_3\xhookrightarrow{\epsilon_2} \mathscr{S}_2 \xhookrightarrow{\epsilon_1} \mathscr{S}_1 \big) \Rightarrow \big(\mathscr{S}_3\xhookrightarrow{\epsilon_1+\epsilon_2} \mathscr{S}_1 \big).
\end{align*}

\begin{lemma}[Transitivity of local $\epsilon$-dilations]\label{lem:trans-epsilon-dilation}
Let $\mathscr{S}_1, \mathscr{S}_2$ and $\mathscr{S}_3$ be finite-dimensional quantum strategies.  Let $\epsilon_1,\epsilon_2 \geq 0$. If $\mathscr{S}_1$ is a local $\epsilon_1$-dilation of $\mathscr{S}_2$, and $\mathscr{S}_2$ is a local $\epsilon_2$-dilation of $\mathscr{S}_3$, then $\mathscr{S}_1$ is a local $(\epsilon_1+\epsilon_2)$-dilation of $\mathscr{S}_3$.
\end{lemma}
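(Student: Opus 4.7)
The plan is to compose the two given dilations and close via the triangle inequality. Write $\mathscr{S}_i = (\psi^{(i)} \in \cl H_A^{(i)} \otimes \cl H_B^{(i)}, \{E_{v,a}^{(i)}\}, \{F_{w,b}^{(i)}\})$ for $i = 1, 2, 3$, and let the dilation $\mathscr{S}_2 \hookrightarrow \mathscr{S}_1$ be witnessed by isometries $V_A^{(1)} \colon \cl H_A^{(2)} \to \cl H_A^{(1)} \otimes \cl K_A^{(1)}$, $V_B^{(1)} \colon \cl H_B^{(2)} \to \cl H_B^{(1)} \otimes \cl K_B^{(1)}$ and junk state $\psi_{\mathrm{junk}}^{(1)} \in \cl K_A^{(1)} \otimes \cl K_B^{(1)}$; similarly let $V_A^{(2)} \colon \cl H_A^{(3)} \to \cl H_A^{(2)} \otimes \cl K_A^{(2)}$, $V_B^{(2)} \colon \cl H_B^{(3)} \to \cl H_B^{(2)} \otimes \cl K_B^{(2)}$, and $\psi_{\mathrm{junk}}^{(2)} \in \cl K_A^{(2)} \otimes \cl K_B^{(2)}$ witness $\mathscr{S}_3 \hookrightarrow \mathscr{S}_2$.

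The natural candidate dilation data witnessing $\mathscr{S}_3 \hookrightarrow \mathscr{S}_1$ is $\cl K_A := \cl K_A^{(1)} \otimes \cl K_A^{(2)}$, $\cl K_B := \cl K_B^{(1)} \otimes \cl K_B^{(2)}$, $V_A := (V_A^{(1)} \otimes I_{\cl K_A^{(2)}}) V_A^{(2)}$, analogously $V_B$, and $\psi_{\mathrm{junk}} := \psi_{\mathrm{junk}}^{(1)} \otimes \psi_{\mathrm{junk}}^{(2)}$, viewed inside $\cl K_A \otimes \cl K_B$ via the canonical tensor flip. Both $V_A, V_B$ are isometries as compositions of tensor products of isometries with identities, and the mixed-product property of the tensor product gives
\[
V_A \otimes V_B = \bigl((V_A^{(1)} \otimes I_{\cl K_A^{(2)}}) \otimes (V_B^{(1)} \otimes I_{\cl K_B^{(2)}})\bigr)\bigl(V_A^{(2)} \otimes V_B^{(2)}\bigr).
\]
I verify \eqref{eq:approx-loc-dil-1} in two steps. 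First, applying the outer isometry to the $\epsilon_2$-hypothesis $(V_A^{(2)} \otimes V_B^{(2)})\psi^{(3)} \approx_{\epsilon_2} \psi^{(2)} \otimes \psi_{\mathrm{junk}}^{(2)}$ (and rewriting via the canonical flip) yields
\[
(V_A \otimes V_B)\psi^{(3)} \approx_{\epsilon_2} \bigl((V_A^{(1)} \otimes V_B^{(1)})\psi^{(2)}\bigr) \otimes \psi_{\mathrm{junk}}^{(2)}.
\]
Second, tensoring the $\epsilon_1$-hypothesis $(V_A^{(1)} \otimes V_B^{(1)})\psi^{(2)} \approx_{\epsilon_1} \psi^{(1)} \otimes \psi_{\mathrm{junk}}^{(1)}$ with $\psi_{\mathrm{junk}}^{(2)}$ and combining with the previous display through the triangle inequality gives $(V_A \otimes V_B)\psi^{(3)} \approx_{\epsilon_1 + \epsilon_2} \psi^{(1)} \otimes \psi_{\mathrm{junk}}$.

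The measurement estimate \eqref{eq:approx-loc-dil-2} follows by the same two-step argument with $\psi^{(3)}$ replaced by $(E_{v,a}^{(3)} \otimes F_{w,b}^{(3)})\psi^{(3)}$ throughout: the $\epsilon_2$-hypothesis for $\mathscr{S}_3 \hookrightarrow \mathscr{S}_2$ produces the first approximation, the $\epsilon_1$-hypothesis for $\mathscr{S}_2 \hookrightarrow \mathscr{S}_1$ applied to $(E_{v,a}^{(2)} \otimes F_{w,b}^{(2)})\psi^{(2)}$ produces the second, and the triangle inequality closes. The only subtlety is bookkeeping the four tensor factors under the canonical flip mentioned in the remark following Definition~\ref{def:loc-eps-dil}, which is purely notational. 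I do not expect a genuine obstacle; the proof amounts to a diagram chase plus two applications of the triangle inequality.
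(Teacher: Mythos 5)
Your proof is correct and follows essentially the same route as the paper: compose the two isometries (tensoring the inner one with an identity on the outer auxiliary space), take the product of the junk states, and close with the triangle inequality applied to the two hypotheses. The only cosmetic difference is that you verify the state condition \eqref{eq:approx-loc-dil-1} and the measurement condition \eqref{eq:approx-loc-dil-2} separately, whereas the paper writes out only the measurement estimate (the state estimate being handled the same way); this is, if anything, slightly more complete on your part.
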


\begin{proof}
Let the quantum strategies be given by \begin{align*}
\mathscr{S}_1 &= (\psi_1\in \cl H_{A,1}\otimes \cl H_{B,1}, \{E_{v,i}^{(1)}:v\in [n_A], i\in [k_A]\},\{F_{w,j}^{(1)}:w\in [n_B], j\in [k_B]\}), \\
\mathscr{S}_2 &= (\psi_2\in \cl H_{A,2}\otimes \cl H_{B,2}, \{E_{v,i}^{(2)}:v\in [n_A], i\in [k_A]\},\{F_{w,j}^{(2)}:w\in [n_B], j\in [k_B]\}), \\
\mathscr{S}_3 &= (\psi_3\in \cl H_{A,3}\otimes \cl H_{B,3}, \{E_{v,i}^{(3)}:v\in [n_A], i\in [k_A]\},\{F_{w,j}^{(3)}:w\in [n_B], j\in [k_B]\}).
\end{align*}

Since $\mathscr{S}_1$ is a local $\epsilon_1$-dilation of $\mathscr{S}_2$, there are isometries $V_A\colon \cl H_{A,2}\to \cl H_{A,1} \otimes \cl K_{A,1}$ and $V_B\colon \cl H_{B,2}\to \cl H_{B,1} \otimes \cl K_{B,1}$ for some finite-dimensional Hilbert spaces $\cl K_{A,1}, \cl K_{B,1}$, and there exist a quantum state $\psi_{\mathrm{junk},1}\in \cl K_{A,1}\otimes \cl K_{B,1}$ such that for all $v,w,i,j$, \begin{align}\label{eq:Vapprox}
(V_A\otimes V_B)(E_{v,i}^{(2)}\otimes F_{w,j}^{(2)})\psi_2 \approx_{\epsilon_1} (E_{v,i}^{(1)}\otimes F_{w,j}^{(1)})\psi_1\otimes \psi_{\mathrm{junk},1}.
\end{align} Similarly, since $\mathscr{S}_2$ is a local $\epsilon_2$-dilation of $\mathscr{S}_3$, there are isometries $W_A\colon \cl H_{A,3}\to \cl H_{A,2} \otimes \cl K_{A,2}$ and $W_B\colon \cl H_{B,3}\to \cl H_{B,2} \otimes \cl K_{B,2}$ for some finite-dimensional Hilbert spaces $\cl K_{A,2}, \cl K_{B,2}$, and there exist a quantum state $\psi_{\mathrm{junk},2}\in \cl K_{A,2}\otimes \cl K_{B,2}$ such that for all $v,w,i,j$, \begin{align}\label{eq:Wapprox}
(W_A\otimes W_B)(E_{v,i}^{(3)}\otimes F_{w,j}^{(3)})\psi_3 \approx_{\epsilon_2} (E_{v,i}^{(2)}\otimes F_{w,j}^{(2)})\psi_2\otimes \psi_{\mathrm{junk},2}.
\end{align}

Define $U_A = (V_A\otimes I_{\cl K_{A,2}})\circ W_A$, and $U_B = (V_B\otimes I_{\cl K_{B,2}})\circ W_B$, and $\psi_{\mathrm{junk}} = \psi_{\mathrm{junk},1}\otimes \psi_{\mathrm{junk},2}$. Clearly $U_A$ and $U_B$ are isometries and that for all $v,w,i,j$ \begin{align*}
(U_A\otimes U_B)(E_{v,i}^{(3)} &\otimes F_{w,j}^{(3)})\psi_3 \\
&= (V_A\otimes I_{\cl K_{A,2}} \otimes V_B \otimes I_{\cl K_{B,2}})(W_A\otimes W_B)(E_{v,i}^{(3)}\otimes F_{w,j}^{(3)})\psi_3 \\
&\approx_{\epsilon_2} (V_A\otimes V_B \otimes I_{\cl K_{A,2}}\otimes I_{\cl K_{B,2}})((E_{v,i}^{(2)}\otimes F_{w,j}^{(2)})\psi_2\otimes \psi_{\mathrm{junk},2}) \\
&\approx_{\epsilon_1} (E_{v,i}^{(1)}\otimes F_{w,j}^{(1)})\psi_1\otimes \psi_{\mathrm{junk},1}\otimes \psi_{\mathrm{junk},2},
\end{align*} where we used \eqref{eq:Wapprox} in the second approximation, and \eqref{eq:Vapprox} in the third approximation. \end{proof}

While working with synchronous quantum correlations one usually takes the quantum state to be full Schmidt rank. We show that this reduction step can also be captured with local dilation. We first need a lemma which is folklore.

\begin{lemma}\label{lem:invariant-subspace}
Let $X\in \bb M_{d_A}, Y\in \bb M_{d_B}$ and $\psi\in \bb C^{d_A}\otimes \bb C^{d_B}$ be such that $(X\otimes I_{d_B})\psi = (I_{d_A}\otimes Y)\psi$. Then, $\mathrm{supp}_A(\psi)$ is an invariant subspace of $X$, and $\mathrm{supp}_B(\psi)$ is an invariant subspace of $Y$.
\end{lemma}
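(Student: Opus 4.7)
My plan is to translate the hypothesis through the operator--vector correspondence and then read off the invariant subspaces from the Schmidt decomposition. Let $D \in \bb M_{d_A, d_B}$ be the unique matrix with $\mathrm{vec}(D) = \psi$. By the identity \eqref{eq:vec-identity},
\[
(X \otimes I_{d_B})\psi = \mathrm{vec}(XD) \quad\text{and}\quad (I_{d_A} \otimes Y)\psi = \mathrm{vec}(DY^T),
\]
so since $\mathrm{vec}$ is injective the hypothesis becomes the matrix equation $XD = DY^T$. Writing a Schmidt decomposition $\psi = \sum_{l=1}^r \alpha_l \xi_l \otimes \eta_l$ with $\alpha_l > 0$, one checks using $\mathrm{vec}(\xi \eta^T) = \xi \otimes \eta$ that
\[
D = \sum_{l=1}^r \alpha_l\, \xi_l \eta_l^T.
\]
Since $\{\xi_l\}_{l=1}^r$ and $\{\eta_l\}_{l=1}^r$ are orthonormal sets and each $\alpha_l$ is nonzero, the column space of $D$ is exactly $\mathrm{span}\{\xi_l\}_{l=1}^r = \mathrm{supp}_A(\psi)$, and the column space of $D^T$ is exactly $\mathrm{span}\{\eta_l\}_{l=1}^r = \mathrm{supp}_B(\psi)$.

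Now I just feed $XD = DY^T$ into these range identifications. On the one hand,
\[
X\bigl(\mathrm{supp}_A(\psi)\bigr) = X(\mathrm{ran}\, D) = \mathrm{ran}(XD) = \mathrm{ran}(DY^T) \subseteq \mathrm{ran}\, D = \mathrm{supp}_A(\psi),
\]
so $\mathrm{supp}_A(\psi)$ is $X$-invariant. On the other hand, transposing $XD = DY^T$ gives $D^T X^T = Y D^T$, and the same argument yields
\[
Y\bigl(\mathrm{supp}_B(\psi)\bigr) = Y(\mathrm{ran}\, D^T) = \mathrm{ran}(Y D^T) = \mathrm{ran}(D^T X^T) \subseteq \mathrm{ran}\, D^T = \mathrm{supp}_B(\psi),
\]
which completes the proof. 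There is no real obstacle here; the only point to be careful about is that the Schmidt coefficients are strictly positive so that the ranges of $D$ and $D^T$ are genuinely the supports, rather than strictly smaller subspaces.
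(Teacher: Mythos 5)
Your proof is correct and follows essentially the same route as the paper: translate the hypothesis through the operator--vector correspondence to the matrix identity $XD = DY^T$, identify $\mathrm{supp}_A(\psi)$ and $\mathrm{supp}_B(\psi)$ with $\mathrm{ran}\,D$ and $\mathrm{ran}\,D^T$ via the Schmidt decomposition, and conclude invariance. The only (cosmetic) difference is that you argue directly at the level of ranges, using $X(\mathrm{ran}\,D) = \mathrm{ran}(XD) = \mathrm{ran}(DY^T) \subseteq \mathrm{ran}\,D$, whereas the paper checks $X\xi_i \in \mathrm{supp}_A(\psi)$ one Schmidt vector at a time by writing $\xi_i = \alpha_i^{-1} D \overline{\eta}_i$.
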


\begin{proof}
If $\psi = \sum_{i=1}^r \alpha_i\xi_i\otimes\eta_i$ is a Schmidt decomposition, then the matrix $D = \sum_{i=1}^r \alpha_i\xi_i\eta_i^T$ satisfies $\mathrm{vec}(D) = \psi$. Note that $\mathrm{supp}_A(\psi)$ is the column space of $D$. To show that $\mathrm{supp}_A(\psi)$ is invariant under $X$, it suffices to show that $X\xi_i \in \mathrm{supp}_A(\psi)$ for all $i\in [r]$. Using Identity \eqref{eq:vec-identity}, we see that $(X\otimes I_{d_B})\psi = (I_{d_A}\otimes Y)\psi$ is equivalent to $XD = DY^T$. Thus, for $i\in [r]$, 
\begin{align*}
  X\xi_i = X\left(\alpha_i^{-1}D\overline{\eta}_i \right) = \alpha_i^{-1}(XD)(\overline{\eta}_i) =  \alpha_i^{-1}(DY^T)(\overline{\eta}_i) \in \mathrm{supp}_A(\psi),
\end{align*} as required. The invariance of $\mathrm{supp}_B(\psi)$ under $Y$ follows similarly.	
\end{proof}

\begin{lemma}\label{lem:full-schmidt-rank}
Let $p=(p(i,j|v,w))\in C_q^s(n,k)$ be a synchronous quantum correlation. Let $\mathscr{S} = (\psi\in \bb C^{d_A}\otimes \bb C^{d_B},\{E_{v,i}\},\{F_{w,j}\})$ be a strategy inducing $p$. Then, there exists a strategy $\mathscr{S}' = (\psi'\in \bb C^r\otimes \bb C^r,\{E_{v,i}'\},\{F_{w,j}'\})$ where $\psi'$ is of full Schmidt rank and $\mathscr{S}'$ is a local dilation of $\mathscr{S}$.
\end{lemma}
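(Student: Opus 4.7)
The plan is to compress the strategy $\mathscr{S}$ to the Schmidt supports of $\psi$, which naturally yields a strategy on $\bb C^r \otimes \bb C^r$ carrying a full Schmidt rank state. The key observation is that synchronicity forces these supports to be invariant under all measurement operators.

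First I would invoke Corollary \ref{cor:proj-strat}(a): since $p$ is synchronous, $(E_{v,i} \otimes I_{d_B})\psi = (I_{d_A} \otimes F_{v,i})\psi$ for all $v \in [n]$ and $i \in [k]$. Lemma \ref{lem:invariant-subspace} then implies that $\mathrm{supp}_A(\psi)$ is invariant under each $E_{v,i}$ and $\mathrm{supp}_B(\psi)$ is invariant under each $F_{v,i}$; since Bob's index $w$ also ranges over $[n]$, every $F_{w,j}$ is covered. Now fix a Schmidt decomposition $\psi = \sum_{l=1}^{r} \alpha_l\, \xi_l \otimes \eta_l$ and set $\psi' := \sum_{l=1}^{r} \alpha_l\, e_l \otimes e_l \in \bb C^r \otimes \bb C^r$, which has full Schmidt rank $r$. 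Identifying $\mathrm{supp}_A(\psi) \cong \bb C^r$ via $\xi_l \leftrightarrow e_l$, define $E'_{v,i} \in \bb M_r$ to be the matrix representing the restriction $E_{v,i}|_{\mathrm{supp}_A(\psi)}$ in this basis. Positivity is preserved under restriction to an invariant subspace, and $\sum_i E_{v,i} = I_{d_A}$ restricts to $I_r$, so $\{E'_{v,i}\}_i$ is a POVM on $\bb C^r$; define $F'_{w,j}$ on Bob's side analogously, producing the candidate strategy $\mathscr{S}' = (\psi', \{E'_{v,i}\}, \{F'_{w,j}\})$.

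To realize $\mathscr{S}'$ as a local dilation of $\mathscr{S}$, I would construct explicit local isometries as follows. Extend $\{\xi_l\}_{l=1}^{r}$ to an ONB $\{\xi_l\}_{l=1}^{d_A}$ of $\bb C^{d_A}$, take any Hilbert space $\cl K_A$ of dimension at least $\lceil d_A / r \rceil$, and fix a unit vector $\zeta_A \in \cl K_A$. Define an isometry $V_A \colon \bb C^{d_A} \to \bb C^r \otimes \cl K_A$ by $V_A \xi_l := e_l \otimes \zeta_A$ for $l \leq r$, and extending on $\{\xi_l\}_{l>r}$ to any orthonormal completion in the orthogonal complement of $\mathrm{span}\{e_l \otimes \zeta_A : l \leq r\}$. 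Construct $V_B, \zeta_B$ analogously and set $\psi_{\mathrm{junk}} := \zeta_A \otimes \zeta_B$.

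It then remains to verify the dilation identities by direct computation. For the state, $(V_A \otimes V_B)\psi = \sum_{l=1}^r \alpha_l (e_l \otimes \zeta_A) \otimes (e_l \otimes \zeta_B)$, which after flipping the middle two tensor factors (as in the remark following Definition \ref{def:loc-eps-dil}) equals $\psi' \otimes \psi_{\mathrm{junk}}$. For the measurements, invariance guarantees $E_{v,i}\xi_l \in \mathrm{span}\{\xi_1, \dots, \xi_r\}$, so $V_A(E_{v,i}\xi_l) = (E'_{v,i} e_l) \otimes \zeta_A$ by linearity of $V_A$; the analogous identity holds for Bob, and reordering the tensor factors yields $(V_A \otimes V_B)(E_{v,i} \otimes F_{w,j})\psi = \bigl((E'_{v,i} \otimes F'_{w,j})\psi'\bigr) \otimes \psi_{\mathrm{junk}}$, as required. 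No substantive obstacle arises; the only care needed is in the tensor bookkeeping and in observing that the restricted operators genuinely form POVMs.
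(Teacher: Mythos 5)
Your proposal is correct and follows essentially the same route as the paper: invoke Corollary~\ref{cor:proj-strat}(a) and Lemma~\ref{lem:invariant-subspace} to get invariance of the Schmidt supports, compress the measurement operators to those supports to form $\mathscr S'$, and build local isometries $V_A, V_B$ mapping $\xi_l \mapsto e_l \otimes (\text{fixed unit vector})$ on the support. The only difference is cosmetic — the paper writes the isometry in closed form as $V_A(\xi)=\iota_A^*(\xi)\otimes\xi_1 + e_1\otimes(I_{d_A}-\iota_A\iota_A^*)(\xi)$ with $\cl K_A=\bb C^{d_A}$, whereas you define it by ONB extension with a possibly smaller ancilla — but the verification of the dilation identity via invariance is identical.
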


\begin{proof}
Consider a Schmidt decomposition $\psi = \sum_{l=1}^r \alpha_l\xi_l\otimes \eta_l$, and let $\iota_A\colon  \bb C^r \to \bb C^{d_A}$ and $\iota_B\colon \bb C^r\to \bb C^{d_B}$ be isometries given by $\iota_A = \sum_{l=1}^r \xi_le_l^*$ and $\iota_B = \sum_{l=1}^r \eta_le_l^*$. Then, the operators \begin{align}
E_{v,i}' = \iota_A^*E_{v,i}\iota_A, \qquad F_{w,j}' = \iota_B^*F_{w,j}\iota_B,
\end{align} are themselves positive, and moreover, for each $v$ and $w$, the sets $\{E_{v,i}':i\in [k]\}$ and $\{F_{w,j}':j\in [k]\}$ form POVMs. With $\psi' = (\iota_A^*\otimes \iota_B^*)\psi$, these above operators constitute the strategy $\mathscr{S}'$, and clearly $\psi'$ is of full Schmidt rank.

To see that $\mathscr{S}'$ is a local dilation of $\mathscr{S}$, set $\cl K_A = \bb C^{d_A}$ and $\cl K_B=\bb C^{d_B}$ and define isometries $V_A\colon \bb C^{d_A} \to \bb C^r \otimes \bb C^{d_A}$ and $V_B\colon \bb C^{d_B} \to \bb C^r \otimes \bb C^{d_B}$, by \begin{align*}
V_A(\xi) &= \iota_A^*(\xi)\otimes \xi_1 + e_1 \otimes (I_{d_A} - \iota_A\iota_A^*)(\xi), \\
V_B(\eta) &= \iota_B^*(\eta)\otimes \eta_1 + e_1 \otimes (I_{d_B} - \iota_B\iota_B^*)(\eta),
\end{align*} where $\xi\in \bb C^{d_A}$ and $\eta\in \bb C^{d_B}$. By Corollary \ref{cor:proj-strat}(a), we have $(E_{v,i}\otimes I_{d_B})\psi = (I_{d_A}\otimes F_{v,i})\psi$ for all $v,i$. Then using Lemma \ref{lem:invariant-subspace}, $\mathrm{supp}_A(\psi)$ is invariant under each $E_{v,i}$, and $\mathrm{supp}_B(\psi)$ is invariant under each $F_{w,j}$, and since $\iota_A\iota_A^*$ is the projection onto $\mathrm{supp}_A(\psi)$ and $\iota_B\iota_B^*$ is the projection onto $\mathrm{supp}_B(\psi)$, we have
\begin{align*}
(V_A\otimes V_B)(E_{v,i}\otimes F_{w,j})\psi = (\iota_A\otimes \iota_B)^*((E_{v,i}\otimes F_{w,j})\psi) \otimes (e_1\otimes e_1).
\end{align*} But then, using the invariance property, \begin{align*}
(\iota_A\otimes \iota_B)^*((E_{v,i}\otimes F_{w,j})\psi) &= (\iota_A^*\otimes \iota_B^*)(\iota_A\iota_A^*\otimes \iota_B\iota_B^*)(E_{v,i}\otimes F_{w,j})\psi \\
&= (\iota_A^*\otimes \iota_B^*)(\iota_A\iota_A^*\otimes \iota_B\iota_B^*)(E_{v,i}\otimes F_{w,j})(\iota_A\iota_A^*\otimes \iota_B\iota_B^*)\psi \\
&= ((E_{v,i}'\otimes F_{w,j}')\psi'),
\end{align*} as desired. \end{proof}

\section{Correlations from projections adding up to scalar times identity}\label{sec:proj-sum-scalars}
We are interested in self-testing projections which sum up to (some specific) scalar times the identity. Such kind of projections have been studied in detail in \cite{KRS}, and we collect some of the material from there as needed.

For $n\in \bb N$, let $\Sigma_n$ be the set of all scalars $x$ such that there exist $n$ projections $R_1,\dots,R_n\in \bh$, for some Hilbert space $\cl H$ (possibly infinite dimensional), such that $R_1+\dots+R_n = x I_{\cl H}$. For $n\leq 3$, it can be shown that \begin{align*}
\Sigma_1 = \{0,1\}, \qquad \Sigma_2 = \{0,1,2\}, \qquad \Sigma_3 = \left\lbrace 0,1,\frac{3}{2},2,3 \right\rbrace.
\end{align*} For $n\geq 4$, the set $\Sigma_n$ is described in \cite{KRS}. In particular, they show that the set $\Sigma_4$ is countably infinite; whereas for all $n\geq 5$, the set $\Sigma_n$ is uncountable with a nondegenerate interval subset. For our purpose, we do not require the full description of the set $\Sigma_n$. The following theorem is a distillation of Proposition~2, Theorem~3, and Theorem~4 from the aforementioned work.

For a definition of a universal C$^*$-algebra we refer the reader to Appendix \ref{appendixA}.

\begin{theorem}\label{thm:KRS}
For $n = 3$, set $\Lambda_3 = \{\frac{3}{2}\}$; and for $n\geq 4$, set $\Lambda_n =\{x_l\}_{l=0}^{\infty}$ with the sequence defined recursively as follows: $x_0 = 0$, and for all $l\geq 1$, \begin{align}\label{eq:recurse-def}
x_l = 1+ \frac{1}{n-1-x_{l-1}}.
\end{align} Then $\Lambda_n \subseteq \Sigma_n$ for all $n\geq 3$. Moreover, if $\mathscr P_{n,x}$ is the universal $C^*$-algebra with generators $r_1,\dots,r_n$ and relations given as follows (here $1$ is the unit of the algebra)\begin{align}\label{eq:relations}
\mathscr P_{n,x} \coloneqq \mathrm{C}^*\left\langle r_1,\dots,r_n \,\big|\,  r_v=r_v^*=r_v^2,\;\forall\; v \in [n], \sum_{v=1}^nr_v = x 1  \right\rangle,
\end{align} then for $x\in \Lambda_n$ ($n\geq 3$), there is a unique irreducible representation (up to unitary equivalence) of the $C^*$-algebra $\mathscr P_{n,x}$. Furthermore, this representation is of dimension $d$, where $x=\frac{b}{d}$ is in its lowest terms.
\end{theorem}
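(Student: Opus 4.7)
I would prove all three conclusions (existence, uniqueness of the irreducible representation, and its dimension) simultaneously by induction on $l$, with the recursion \eqref{eq:recurse-def} serving as the engine. The main tool is a \emph{compression/reduction procedure} on the generating projections, which is a Möbius-type transformation on the parameter $x$ whose explicit form is exactly $x\mapsto 1+\frac{1}{n-1-x}$.

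For the base case, take $l=0$ so $x_0=0$. Any projections $r_1,\ldots,r_n\ge 0$ with $\sum_{v=1}^n r_v=0$ must each be zero, so the only irreducible representation is the one-dimensional representation $r_v\mapsto 0$ on $\mathbb{C}$; since $x_0=0/1$ in lowest terms, this matches the claimed dimension $d_0=1$. For the inductive step, I would proceed in three sub-steps. First, suppose an irreducible representation of $\mathscr{P}_{n,x_l}$ is given on a Hilbert space $\mathcal{H}$ with projections $r_1,\ldots,r_n$. Fix $r_n$; compressing the relation $\sum_v r_v=x_l 1$ to the range of $r_n$ yields positive operators $r_n r_v r_n$ for $v<n$ on $r_n\mathcal{H}$ summing to $(x_l-1)\,r_n$. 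I would then argue (this is the key technical step) that after a further compression to an appropriate invariant subspace $\mathcal{K}\subseteq r_n\mathcal{H}$, the operators $r_n r_v r_n|_\mathcal{K}$ can be rescaled and recombined with the $(I-r_n)$-pieces to produce $n$ genuine projections $\widetilde r_1,\ldots,\widetilde r_n$ on a Hilbert space of smaller dimension, satisfying $\sum_v \widetilde r_v = x_{l-1}\, 1$, where $x_{l-1}$ emerges precisely from solving $x_l = 1+\frac{1}{n-1-x_{l-1}}$. Irreducibility of the original representation forces the compressed one to be irreducible as well.

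Second, by the inductive hypothesis, the compressed representation is the unique irrep of $\mathscr{P}_{n,x_{l-1}}$, which lives on $\mathbb{C}^{d_{l-1}}$ with $x_{l-1}=b_{l-1}/d_{l-1}$ in lowest terms. To conclude uniqueness of the irrep of $\mathscr{P}_{n,x_l}$, I would verify that the reduction procedure is \emph{reversible}: starting from the unique irrep at $x_{l-1}$, I build an irrep at $x_l$ by an explicit inverse construction (a ``dilation'' that enlarges the Hilbert space by adjoining the orthogonal complements of the $\widetilde r_v$). This lift produces existence (establishing $x_l\in\Sigma_n$ and hence $\Lambda_n\subseteq\Sigma_n$), and its uniqueness follows because any two irreducible lifts of the same irrep must be unitarily equivalent. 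For the dimension claim, the arithmetic of the recursion gives
\begin{equation*}
    x_l=\frac{n\,d_{l-1}-b_{l-1}}{(n-1)d_{l-1}-b_{l-1}},\qquad\text{so}\qquad b_l=n\,d_{l-1}-b_{l-1},\quad d_l=(n-1)d_{l-1}-b_{l-1},
\end{equation*}
and a short calculation shows $\gcd(b_l,d_l)=\gcd(b_{l-1},d_{l-1})=1$, so $x_l=b_l/d_l$ is already in lowest terms. The dimension $d_l$ matches the dimension of the lifted representation, since the lift precisely adjoins $\sum_{v=2}^n(d_{l-1}-\mathrm{rank}(\widetilde r_v))=(n-1)d_{l-1}-b_{l-1}$ new dimensions.

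The hardest step, in my view, is the reduction direction: choosing the invariant subspace $\mathcal{K}$ correctly and verifying that the compressed positive operators, after renormalization, actually give projections summing to exactly $x_{l-1}$ times identity. This requires delicate spectral analysis of the operators $r_v r_n r_v$ (using, for instance, that if $a,b$ are projections, then $a b a$ and $b a b$ have the same nonzero spectrum), together with a careful bookkeeping to show that the ``natural'' candidate for $\mathcal{K}$ is the range of a suitable spectral projection depending on the inductive hypothesis. The special handling of $n=3$, where only $x_1=3/2$ appears in $\Lambda_3$, comes from the fact that the recursion becomes periodic/terminating in that low-dimensional regime, so one only takes the single non-degenerate value before the iteration hits a classical case from $\Sigma_3$.
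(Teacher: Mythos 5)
The paper itself does not prove Theorem~\ref{thm:KRS}: the authors explicitly present it as ``a distillation of Proposition~2, Theorem~3, and Theorem~4'' from the cited work of Kruglyak, Rabanovich and Samoilenko (reference [KRS]), so there is no in-paper proof to compare your attempt against. What you have written is therefore an attempt to reprove the cited result from scratch.

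Your high-level strategy is the right one and is in fact essentially what [KRS] do: an induction on $l$ driven by a representation-theoretic ``reflection'' (Coxeter) functor whose effect on the parameter is the M\"obius map $x\mapsto 1+\frac{1}{n-1-x}$. Concretely, the two functors in play are the linear reflection $r_v\mapsto 1-r_v$, which sends $x\mapsto n-x$, and the hyperbolic Coxeter functor, which sends $x\mapsto\frac{x}{x-1}$; composing them gives exactly the recursion \eqref{eq:recurse-def}. The numerator/denominator arithmetic and the preservation of $\gcd(b_l,d_l)=1$ in your proposal are correct, and your treatment of the base case $x_0=0$ and of the special role of $n=3$ is sound.

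The genuine gap is precisely where you flag ``the hardest step'': your reduction construction is not the correct one, and the details that would make it work are not supplied. Compressing $\sum_v r_v = x\,1$ to $\mathrm{ran}(r_n)$ does produce $n-1$ positive contractions $r_n r_v r_n$ summing to $(x-1)r_n$, but these are not projections (unless $r_v$ and $r_n$ commute), and no recipe is given for how ``rescaling and recombining with the $(I-r_n)$-pieces'' would yield $n$ projections summing to $x_{l-1}\cdot 1$ — nor why compressing an irreducible representation to such a subspace should remain irreducible, which is false for a generic compression. The actual hyperbolic Coxeter functor in [KRS] is built differently: one forms $\widetilde{\mathcal H}=\bigoplus_{v} r_v\mathcal H$ with the summing map $\Gamma\colon\widetilde{\mathcal H}\to\mathcal H$, observes that $\Gamma\Gamma^*=x I_{\mathcal H}$ so that $\frac1x\Gamma^*\Gamma$ is a projection, and then works on $\ker\Gamma$; that this produces new projections summing to a scalar, that it preserves irreducibility away from the degenerate parameter values, that it is essentially invertible, and that the dimensions transform as claimed is the entire substance of the cited results. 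Without that construction and those verifications your proposal is a correct plan rather than a proof, and as it stands the reduction you sketch would not go through.
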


The C$^*$-algebra $\mathscr P_{n,x}$ has the universal property that whenever there exist projections $R_1,\dots,R_n\in \bh$ with $\sum_{v=1}^nR_v= x I_{\cl H}$, then there exists a representation $\pi\colon \mathscr P_{n,x}\to \bh$ with $\pi(r_v)=R_v$ for all $v\in [n]$.

If $x\in \{x_l\}_{l=0}^{\infty}$ with $x = \frac{b}{d}$ in its lowest terms, the uniqueness property of $\mathscr{P}_{n,x}$ implies that if $\widetilde{P}_1,\dots,\widetilde{P}_n$ are projections in $\bb M_d$ satisfying $\widetilde{P}_1+\dots+\widetilde{P}_n = xI_d$, and $R_1,\dots,R_n\in \bh$ are projections such that $R_1+\dots+R_n = x I_{\cl H}$, then there exists a Hilbert space $\cl K$ and a unitary $U \colon \cl H \rightarrow \bb C^d \otimes \cl K$ such that $UR_vU^* = \widetilde{P}_v \otimes I_{\cl K}$ for all $v\in [n]$.

We are now ready to define the synchronous quantum correlations that we will be self-testing.

\begin{definition}\label{def:def-of-p*}
Fix $n\geq 3$ and $x\in \Lambda_n$ with $x = \frac{b}{d}$ in its lowest terms. Let $\widetilde{P}_1,\dots,\widetilde{P}_n$ be projections in $\bb M_{d}$ (by Theorem \ref{thm:KRS}) such that $\widetilde{P}_1+\dots+\widetilde{P}_n=xI_d$. Consider the synchronous correlation $\widetilde{p}_{n,x}=(\widetilde{p}_{n,x}(i,j|v,w))\in C^s_q(n,2)$ induced by the strategy 
\begin{align}\label{eq:can-strat}
\widetilde{\mathscr S} \coloneqq \left(\varphi_d\in \bb C^d\otimes \bb C^d, \left\lbrace\widetilde{P}_v, I_{d}-\widetilde{P}_v\right\rbrace_{v=1}^n, \left\lbrace\widetilde{P}_v^T, I_{d}-\widetilde{P}_v^T \right\rbrace_{v=1}^n\right),
\end{align} where $\varphi_d$ is the maximally entangled state. We shall refer to this strategy as the \emph{canonical strategy} for $\widetilde{p}_{n,x}$.
\end{definition}

We shall use the uniqueness result from Theorem \ref{thm:KRS} to show that the strategy $\widetilde{\mathscr{S}}$ given in Definition \ref{def:def-of-p*} is essentially the unique strategy which induces the correlation $\widetilde{p}_{n,x}$---more precisely, $\widetilde{p}_{n,x}$ self-tests the strategy $\widetilde{\mathscr{S}}$. This result, together with robustness, will be the focus of next section.  First we show how the uniqueness result provides an explicit description of the C$^*$-algebra $\mathscr P_{n, x}$.

\begin{lemma}\label{lem:proj-gen-Md}
Let $n\geq 3$ and $x\in \Lambda_{n}$ with $x=\frac{b}{d}$ in lowest terms. Then $\mathscr{P}_{n,x} \simeq \bb M_d$. In particular, if $R_1,\dots,R_n\in \bb M_d$ are projections such that $\sum_{v=1}^n R_v = x I_d$, then $\{R_v\}_{v=1}^n$ generate $\bb M_d$ as a $C^*$-algebra.
\end{lemma}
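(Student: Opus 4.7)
The main assertion is that $\mathscr{P}_{n,x} \cong \bb M_d$; the ``in particular'' part will then follow from universality together with the simplicity of $\bb M_d$. The plan is to exhibit an explicit $*$-homomorphism $\pi_0 \colon \mathscr{P}_{n,x} \to \bb M_d$ and show it is both injective and surjective. Since $x \in \Lambda_n \subseteq \Sigma_n$, Theorem \ref{thm:KRS} supplies projections $\widetilde{P}_1, \dots, \widetilde{P}_n \in \bb M_d$ with $\sum_v \widetilde{P}_v = x I_d$, and the universal property of $\mathscr{P}_{n,x}$ yields a unital $*$-homomorphism $\pi_0$ with $\pi_0(r_v) = \widetilde{P}_v$. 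By the uniqueness statement of Theorem \ref{thm:KRS}, $\pi_0$ is (up to unitary equivalence) the unique irreducible representation of $\mathscr{P}_{n,x}$.

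For surjectivity of $\pi_0$, I would argue as follows: since $\pi_0$ is irreducible on $\bb C^d$, Schur's lemma gives $\pi_0(\mathscr{P}_{n,x})' = \bb C I_d$; because $\pi_0(\mathscr{P}_{n,x})$ is a $*$-subalgebra of the finite-dimensional algebra $\bb M_d$ it is automatically norm-closed, so the finite-dimensional double commutant theorem (equivalently, Burnside's theorem on irreducible $*$-algebra actions) gives
\begin{align*}
\pi_0(\mathscr{P}_{n,x}) = \pi_0(\mathscr{P}_{n,x})'' = (\bb C I_d)' = \bb M_d.
\end{align*}
For injectivity, I would use the standard fact that irreducible representations of any C$^*$-algebra separate its points (a consequence of Hahn--Banach combined with the GNS construction applied to pure states): if $a \in \mathscr{P}_{n,x}$ is nonzero, \emph{some} irreducible representation $\pi$ satisfies $\pi(a) \neq 0$, but uniqueness forces $\pi \cong \pi_0$, so $\pi_0(a) \neq 0$. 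Combining, $\pi_0$ is an isomorphism.

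For the second claim, given projections $R_1, \dots, R_n \in \bb M_d$ with $\sum_v R_v = x I_d$, universality supplies a $*$-homomorphism $\rho \colon \mathscr{P}_{n,x} \to \bb M_d$ with $\rho(r_v) = R_v$. By the first part, $\mathscr{P}_{n,x} \cong \bb M_d$ is simple, so $\rho$ is either zero or injective; provided $x \neq 0$ (the case $x = 0$ forces $d = 1$ and is vacuous) $\rho$ is nonzero and hence injective, and a dimension count then gives $\rho(\mathscr{P}_{n,x}) = \bb M_d$, which is exactly the statement that $\{R_v\}_{v=1}^n$ generates $\bb M_d$ as a C$^*$-algebra. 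The only step demanding real attention is surjectivity of $\pi_0$, which I expect to be straightforward given that it reduces to the classical fact that every $*$-subalgebra of $\bb M_d$ coincides with its double commutant; beyond that, everything is a clean application of the uniqueness result from Theorem \ref{thm:KRS}.
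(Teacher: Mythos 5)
Your proof is correct and follows essentially the same outline as the paper's: obtain an irreducible representation $\pi_0 \colon \mathscr P_{n,x} \to \mathbb M_d$ from Theorem~\ref{thm:KRS}, show it is surjective by irreducibility (Burnside / double commutant), and show it is injective by combining ``there are enough irreducible representations'' with the uniqueness from Theorem~\ref{thm:KRS}. Two small remarks on differences. First, the paper simply \emph{takes} $\pi$ to be the unique irreducible representation supplied by Theorem~\ref{thm:KRS}; you instead build $\pi_0$ from projections via universality and then assert it is the irreducible one. That assertion needs a one-line justification (e.g.\ any $d$-dimensional representation decomposes into irreducibles, each of which must by uniqueness have dimension $d$, so the decomposition is trivial), so the paper's choice to start directly from the irreducible representation is cleaner. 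Second, for injectivity the paper passes through the ideal $\ker\pi$: it uses that a nonzero ideal of a C$^*$-algebra admits an irreducible representation which extends to one of the whole algebra with a different kernel, contradicting uniqueness. You instead invoke the fact that irreducible representations separate points (via pure states and GNS), then apply uniqueness directly. Both are standard and correct; yours is arguably slightly more direct, while the paper's pins down specific citations. Finally, you spell out the ``in particular'' part from simplicity of $\mathbb M_d$ plus a dimension count — that is exactly the right argument, and the case distinction on $x=0$ is unnecessary since the representation is unital and hence automatically nonzero.
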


\begin{proof}
By Theorem \ref{thm:KRS}, there is an irreducible representation $\pi \colon \mathscr P_{n, x} \to \mathfrak B(\bb C^d)$.  As $\pi$ is irreducible, $\pi$ must be surjective (using that $\mathbb C^d$ is finite-dimensional) \cite[Theorem 3.2.2]{EtingofRepTheory}.  We will show that $\pi$ is injective and hence is an isomorphism.  If this is not the case, then $\ker(\pi)$ is a non-zero C$^*$-algebra and hence admits an irreducible representation $\phi_0 \colon  \ker(\pi) \to \mathfrak B(\mathcal H)$ on a (possibly infinite-dimensional) Hilbert space $\mathcal H$ \cite[Theorem 5.1.12]{Murphy}.  Then $\phi_0$ extends to an irreducible representation $\phi \colon \mathscr P_{n, x} \rightarrow \mathfrak B(\mathcal H)$ \cite[Theorem 5.1.13]{Murphy} which is necessarily not unitarily equivalent to $\pi$ since $\ker(\pi) \neq \ker(\phi)$.  This contradicts the uniqueness of the irreducible representation of $\mathscr P_{n, x}$ and hence shows that $\pi$ is injective.
\end{proof}

We now collect some further consequences of the uniqueness property in Theorem \ref{thm:KRS}. The following lemma tells us that the projections $\{R_v\}_{v=1}^n$ obtained through a representation of $\mathscr P_{n,x}$ have the same trace and are ``symmetrically distributed''.

\begin{lemma}\label{prop:const-trace}
	Let $n\geq 3$ and $x\in \Lambda_{n}$ with $x=\frac{b}{d}$ in its lowest terms. Let $R_1,\dots,R_n \in \bb M_d$ be projections such that $\sum_{v=1}^n R_v = x I_d$. Then, \begin{align}\label{eq:proj-trace}
	\mathrm{tr}_{d}(R_vR_w) = \begin{cases}
	\frac{x}{n} &\text{ if } v=w, \\
	\frac{x(x-1)}{n(n-1)} &\text{ if } v\neq w.
	\end{cases}
\end{align} More generally, if $R_1,\dots,R_n\in \bb M_k$ (for some $k\in \bb N$) are projections such that $\sum_{v=1}^n R_v = x I_k$, then $\tr(R_v)$ is constant for all $v\in [n]$, and $\tr(R_vR_w)$ is constant for all $v\neq w$.
\end{lemma}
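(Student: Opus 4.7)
My plan is to exploit two features of $\mathscr P_{n,x}$: the permutation symmetry of its defining relations \eqref{eq:relations}, and the uniqueness of its tracial state arising from Lemma \ref{lem:proj-gen-Md}, which asserts $\mathscr P_{n,x} \cong \bb M_d$. Since the relations in \eqref{eq:relations} are symmetric in the generators, for every $\sigma \in S_n$ the universal property yields a $*$-automorphism $\alpha_\sigma$ of $\mathscr P_{n,x}$ determined by $\alpha_\sigma(r_v) = r_{\sigma(v)}$. On the other hand, $\bb M_d$ admits a unique tracial state, namely the normalized trace $\mathrm{tr}_d$, so the same is true of $\mathscr P_{n,x}$, and every $*$-automorphism of $\mathscr P_{n,x}$ preserves $\mathrm{tr}_d$. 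Applied to the $\alpha_\sigma$, this immediately gives that $\mathrm{tr}_d(r_v)$ is independent of $v$ and that $\mathrm{tr}_d(r_v r_w)$, for $v \neq w$, is independent of the pair $(v,w)$.

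To evaluate these constants, I would take the trace of the relation $r_1 + \cdots + r_n = x \cdot 1$ to obtain $n\,\mathrm{tr}_d(r_1) = x$, i.e.\ $\mathrm{tr}_d(r_v) = x/n$. Multiplying the relation on the left by $r_1$ and using $r_1^2 = r_1$ gives $r_1 + \sum_{v=2}^n r_1 r_v = x r_1$; tracing and invoking the symmetry above yields $\mathrm{tr}_d(r_1) + (n-1)\mathrm{tr}_d(r_1 r_2) = x\,\mathrm{tr}_d(r_1)$, so $\mathrm{tr}_d(r_1 r_2) = x(x-1)/(n(n-1))$. Combined with $\mathrm{tr}_d(r_v^2) = \mathrm{tr}_d(r_v) = x/n$, and transferred through the isomorphism $\mathscr P_{n,x} \cong \bb M_d$, this gives the formula \eqref{eq:proj-trace} for projections in $\bb M_d$.

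For the more general statement over $\bb M_k$, the universal property produces a representation $\pi \colon \mathscr P_{n,x} \to \bb M_k$ with $\pi(r_v) = R_v$. Composing with the normalized trace $\mathrm{tr}_k$ on $\bb M_k$ gives a tracial state on $\mathscr P_{n,x}$, which by uniqueness must equal $\mathrm{tr}_d$. Hence $\tr(R_v) = k\,\mathrm{tr}_k(\pi(r_v)) = k\,\mathrm{tr}_d(r_v) = xk/n$, and similarly $\tr(R_v R_w) = k\,\mathrm{tr}_d(r_v r_w)$ depends only on whether $v = w$, establishing the constancy assertions.

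There is no genuine obstacle here: the whole argument is a direct consequence of uniqueness of the tracial state on $\mathscr P_{n,x}$ together with the permutation symmetry of its defining relations. The only point deserving a brief mention is that $\mathrm{tr}_k \circ \pi$ is indeed a tracial state (it is tracial because $\pi$ is a $*$-homomorphism, and normalized because $\pi$ is unital), which allows the uniqueness argument to apply.
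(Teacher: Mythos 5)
Your argument is correct, and it takes a genuinely different route from the paper's, though both ultimately rest on the uniqueness clause of Theorem \ref{thm:KRS}. The paper works at the level of representations: given projections $R_1,\dots,R_n\in\bb M_d$ summing to $xI_d$, it defines for each transposition $(v\,w)$ a second representation $\sigma_{v,w}$ with $R_v$ and $R_w$ swapped, and then invokes the uniqueness of the irreducible representation to produce a unitary $U$ with $\sigma_{v,w}(\cdot)=U\pi(\cdot)U^*$; taking traces and using invariance of trace under conjugation gives $\tr(R_v)=\tr(R_w)$ and $\tr(R_vR_w)$ constant. You instead work at the level of the algebra: the defining relations \eqref{eq:relations} are permutation-symmetric, so the universal property yields $*$-automorphisms $\alpha_\sigma$ of $\mathscr P_{n,x}$ for each $\sigma\in S_n$, and uniqueness of the tracial state on $\mathscr P_{n,x}\cong\bb M_d$ (Lemma \ref{lem:proj-gen-Md}) forces each $\alpha_\sigma$ to be trace-preserving, which delivers the same constancy. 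Your evaluation of the constants (trace the relation; multiply by $r_1$ and trace again) is essentially equivalent to the paper's device of squaring $\sum_v R_v=xI_d$ and tracing. For the ``more generally'' clause over $\bb M_k$, your argument that $\mathrm{tr}_k\circ\pi$ is a tracial state and hence coincides with the unique one is arguably cleaner than the paper's, which decomposes $\pi$ as $\pi_0^{\oplus(k/d)}$ and reduces to the $\bb M_d$ case; your route avoids having to argue that $d\mid k$. Both approaches are sound; the automorphism/unique-trace viewpoint you adopt is the more conceptual one.
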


\begin{proof}
	Since $\sum_{v=1}^n R_v = x I_d$, by the universal property of $\mathscr{P}_{n,x}$, there exists a representation $\pi\colon\mathscr{P}_{n,x} \to \bb M_d$ such that $\pi(r_v) = R_v$ for all $v\in [n]$. By Theorem \ref{thm:KRS} this representation is unique (up to unitary equivalence) and irreducible.
	
	Now, choose a pair $(v,w)$ with $v\neq w$ and define a new representation $\sigma_{v,w}\colon \mathscr P_{n,x}\to \bb M_{d}$ by \begin{align*}
	\sigma_{v,w}(r_k) &= \begin{cases}
	R_k &\text{ if } k\neq v \text{ and } k\neq w, \\
	R_w &\text{ if } k = v, \\
	R_v &\text{ if } k = w. \\
	\end{cases}
	\end{align*} By the uniqueness of the irreducible representation $\pi$ (Theorem \ref{thm:KRS}), there exists a unitary $U\in \bb M_{d}$ such that $\sigma_{v,w}(r_k) = U\pi(r_k) U^*$ for all $k$. But then, \begin{align*}
	\tr(R_v) = \tr(\sigma_{v,w}(r_w)) = \tr(U\pi(r_w)U^*) = \tr(\pi(r_w)) = \tr(R_w).
	\end{align*} Additionally, since $R_1+\dots+R_n = xI_d$, it follows that $\mathrm{tr}_{d}(R_v)=\frac{x}{n}$ for all $v\in [n]$.
	
	Similarly, one can show that $\tr(R_vR_w)$ is constant for all $v\neq w$. Moreover, squaring $\sum_{v=1}^n R_v = xI_d$, and taking normalized trace yields that $\mathrm{tr}_{d}(R_vR_w) = \frac{x(x-1)}{n(n-1)}$, for all $v\neq w$.
	
	For the final statement, let $\pi \colon \mathscr P_{n, x} \rightarrow \bb M_k$ be the representation given by $\pi(r_v) = R_v$ for all $v \in [n]$.  If $\pi_0 \colon \mathscr P_{n, x} \rightarrow \bb M_d$ is the unique irreducible representation, then $d$ divides $k$ and $\pi$ is unitarily equivalent to $\pi_0^{\oplus (k/d)}$.  So, $\mathrm{Tr}(R_vR_w) = k \mathrm{tr}_d(\pi_0(r_v r_w))$ for all $v, w \in [n]$. The result follows from the first part of the lemma.
\end{proof}

\begin{remark}[The correlation table for $\widetilde{p}_{n,x}$]
Let $n\geq 3$ and $x\in \Lambda_n$ with $x=\frac{b}{d}$ in lowest terms. Let $\widetilde{p}_{n,x}$ be the quantum correlation as in Definition \ref{def:def-of-p*}. Then, \begin{align}\label{eq:pstardef}
	\widetilde{p}_{n,x}(1,1|v,w) = \inner{\left(\widetilde{P}_v\otimes \widetilde{P}_w^T\right)\varphi_d}{\varphi_d} =  \mathrm{tr}_{d}(\widetilde{P}_v\widetilde{P}_w)  = \begin{cases}
	\frac{x}{n} &\text{ if } v=w, \\
	\frac{x(x-1)}{n(n-1)} &\text{ if } v\neq w,
	\end{cases}
	\end{align} where the second equality follows from Identity~\eqref{eq:vec-iden-32} and the third equality follows from Lemma~\ref{prop:const-trace}. Note that with these values of $\widetilde{p}_{n,x}(1,1|v,w)$, all other values of $\widetilde{p}_{n,x}(i,j|v,w)$ can be deduced using the synchronous condition. Indeed, for all $v \neq w$, \begin{align*}
	\widetilde{p}_{n,x}(1,2|v,w) = \widetilde{p}_{n,x}(2,1|v,w) &= \frac{x}{n} - \widetilde{p}_{n,x}(1,1|v,w) \text{ and} \\
	\widetilde{p}_{n,x}(2,2|v,w) &= 1 - \frac{2x}{n} + \widetilde{p}_{n,x}(1,1|v,w),
	\end{align*} and for $v = w$, \begin{align*}
   \widetilde{p}_{n,x}(1,2|v,v) &= \widetilde{p}_{n,x}(2,1|v,v) = 0 \text{ and} \\
	\widetilde{p}_{n,x}(2,2|v,v) &= 1 - \widetilde{p}_{n,x}(1,1|v,v).
\end{align*}
\end{remark}

For $n\geq 3$ and $x=\frac{n}{n-1}\in \Lambda_n$, Lemma \ref{prop:const-trace} implies that $\tr(R_v)=1$ for projections $R_1,\dots,R_n\in \bb M_{n-1}$ satisfying $\sum_{v=1}^n R_v = \frac{n}{n-1}I_{n-1}$. Thus, each $R_v$ is the projection onto the span of some unit vector $\xi_v\in \mathbb{C}^{n-1}$. The following corollary is then straightforward to verify.

\begin{cor}\label{cor:simplex}
Let $n\geq 3$. Projections $\{R_v\}_{v=1}^n\subseteq \bb M_{n-1}$ which sum up to $\frac{n}{n-1}I_{n-1}$ give rise to $n$ unit vectors $\{\xi_v\}_{v=1}^n\subseteq \bb C^{n-1}$ such that $|\inner{\xi_v}{\xi_w}|=\frac{1}{n-1}$ for all $v\neq w$, and conversely.
\end{cor}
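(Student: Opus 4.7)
The plan is to prove both directions essentially through trace identities, using Lemma \ref{prop:const-trace} for the forward implication and a Cauchy--Schwarz/rank argument for the converse.

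For the forward direction, I would begin by observing that $x = \frac{n}{n-1}$ is in lowest terms (since $\gcd(n, n-1) = 1$), so the relevant denominator in Lemma \ref{prop:const-trace} is $d = n-1$. Applying that lemma, $\mathrm{tr}_{n-1}(R_v) = \frac{x}{n} = \frac{1}{n-1}$, so $\tr(R_v) = 1$, meaning each $R_v$ is a rank-one projection. Writing $R_v = \xi_v \xi_v^*$ for some unit vector $\xi_v \in \bb C^{n-1}$, the off-diagonal case of Lemma \ref{prop:const-trace} gives
\begin{align*}
|\inner{\xi_v}{\xi_w}|^2 = \tr(R_v R_w) = (n-1) \cdot \frac{x(x-1)}{n(n-1)} = \frac{1}{(n-1)^2},
\end{align*}
which yields $|\inner{\xi_v}{\xi_w}| = \frac{1}{n-1}$ for $v \neq w$.

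For the converse, given unit vectors $\{\xi_v\}_{v=1}^n \subseteq \bb C^{n-1}$ with $|\inner{\xi_v}{\xi_w}| = \frac{1}{n-1}$ for all $v \neq w$, set $R_v = \xi_v \xi_v^*$. These are rank-one projections by construction. Let $S = \sum_{v=1}^n R_v$, which is positive semidefinite. Computing the first two moments of $S$, we get $\tr(S) = n$, and
\begin{align*}
\tr(S^2) = \sum_{v, w} \tr(R_v R_w) = n + n(n-1) \cdot \frac{1}{(n-1)^2} = \frac{n^2}{n-1}.
\end{align*}
Let $\lambda_1, \dots, \lambda_{n-1}$ be the eigenvalues of $S$. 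Then $\sum \lambda_i = n$ and $\sum \lambda_i^2 = \frac{n^2}{n-1}$. By Cauchy--Schwarz, $(\sum \lambda_i)^2 \leq (n-1)\sum \lambda_i^2$, and equality holds here: $n^2 = (n-1) \cdot \frac{n^2}{n-1}$. The equality case forces all $\lambda_i$ to be equal, and combined with $\sum \lambda_i = n$ this gives $\lambda_i = \frac{n}{n-1}$ for each $i$. Hence $S = \frac{n}{n-1} I_{n-1}$, as required.

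There is no genuine obstacle here: the content of the corollary is entirely contained in the explicit trace values provided by Lemma \ref{prop:const-trace}, together with the standard dictionary between rank-one projections and their defining unit vectors. The only mild subtlety is noticing that the moment identities $\tr(S) = n$, $\tr(S^2) = n^2/(n-1)$ saturate Cauchy--Schwarz on $(n-1)$-tuples, which automatically pins down $S$ as a scalar.
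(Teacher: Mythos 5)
Your proof is correct, and the forward direction is exactly the verification the paper leaves implicit after noting (via Lemma \ref{prop:const-trace}) that each $R_v$ is a rank-one projection. The paper simply states the corollary is ``straightforward to verify'' without writing out the converse; your eigenvalue-moment argument using the saturation of Cauchy--Schwarz is a clean and natural way to supply it, and all the arithmetic checks out (in particular $\tr(R_vR_w)=(n-1)\cdot\frac{x(x-1)}{n(n-1)}=\frac{1}{(n-1)^2}$ with $x=\frac{n}{n-1}$, and $\tr(S^2)=\frac{n^2}{n-1}$ saturating $(\sum\lambda_i)^2\le(n-1)\sum\lambda_i^2$).
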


Together with unitary equivalence, the property that $|\inner{\xi_v}{\xi_w}|=\frac{1}{n-1}$ for all $v\neq w$ implies that the vectors can be taken as the vertices of a regular $(n-1)$-simplex in $\mathbb{R}^{n-1}$ centered at the origin. For instance, in dimension $2$, (i.e., $n=3$ in Corollary \ref{cor:simplex}), the vertices of a regular $2$-simplex (a triangle) centered at origin are \begin{align*}
\xi_1 = \begin{bmatrix}
1 \\ 0
\end{bmatrix}, \qquad \xi_2 = \frac{1}{2}\begin{bmatrix}
1 \\ -\sqrt{3}
\end{bmatrix}, \qquad \xi_3 = \frac{1}{2}\begin{bmatrix}
-1 \\ -\sqrt{3}
\end{bmatrix}.
\end{align*} Then the projections $R_1, R_2, R_3$ are onto the span of $\xi_1,\xi_2,\xi_3$, respectively, and are given by \begin{align}
	R_1 = \begin{bmatrix}
	1 & 0 \\ 0 & 0
	\end{bmatrix}, \quad R_2 = \frac{1}{4} \begin{bmatrix}
	1 & -\sqrt{3} \\ -\sqrt{3} & 3
	\end{bmatrix}, \quad R_3 = \frac{1}{4} \begin{bmatrix}
	1 & \sqrt{3} \\ \sqrt{3} & 3
	\end{bmatrix}.
\end{align}

For other $x\in \Lambda_n$ ($n\geq 4$), the projections adding up to $xI$ can be constructed using a recursive method described in \cite{KRS}, but it is not clear what natural geometric picture we may associate to such projections. For the sake of completeness, we describe this construction for $n=4$ in Appendix~\ref{appendixB}.

Finally, we require a result on the maximum eigenvalue of a certain matrix obtained from projections adding up to a scalar times identity. 

\begin{lemma}\label{lem:max-evalue}
Let $n\geq 3$ and $x\in \Lambda_{n}$ with $x=\frac{b}{d}$ in lowest terms. Let $R_1,\dots,R_n \in \bb M_d$ be projections such that $\sum_{v=1}^n R_v = x I_d$. If $N = \sum_{v=1}^n R_v\otimes R_v^T$, then $x$ is the largest eigenvalue of $N$, and further, the eigenspace corresponding to $x$ is the one-dimensional space spanned by the maximally entangled state $\varphi_d$.
\end{lemma}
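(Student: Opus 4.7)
The plan is to transport the question about $N$ to one about the map $\Phi \colon \bb M_d \to \bb M_d$ defined by $\Phi(X) \coloneqq \sum_{v=1}^n R_v X R_v$, via the operator-vector correspondence. Since $\mathrm{vec}$ is an isometry and Identity~\eqref{eq:vec-identity} gives $(R_v \otimes R_v^T)\mathrm{vec}(X) = \mathrm{vec}(R_v X R_v)$, the eigenvalues and eigenvectors of $N$ on $\bb C^d \otimes \bb C^d$ correspond (under $\mathrm{vec}$) to those of $\Phi$ on $(\bb M_d, \inner{\cdot}{\cdot}_2)$. Because $\varphi_d = \tfrac{1}{\sqrt{d}}\mathrm{vec}(I_d)$ and $\Phi(I_d) = xI_d$, this immediately gives $N\varphi_d = x\varphi_d$, so $x$ is an eigenvalue of $N$.

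The heart of the argument is the identity
\begin{align}\label{eq:plan-id}
\inner{\Phi(X)}{X}_2 \;=\; x\,\|X\|_2^2 \;-\; \sum_{v=1}^n \|R_v X (I_d - R_v)\|_2^2, \qquad X \in \bb M_d.
\end{align}
To establish \eqref{eq:plan-id} I would decompose $R_v X R_v = R_v X - R_v X(I_d - R_v)$, pair with $X^*$ under the trace, and then use $R_v^2 = R_v$, $(I_d - R_v)^2 = I_d - R_v$, together with cyclicity of the trace, to rewrite $\tr(X^* R_v X(I_d - R_v))$ as $\|R_v X(I_d - R_v)\|_2^2$. Summing over $v$ and applying $\sum_{v=1}^n R_v = xI_d$ produces \eqref{eq:plan-id}.

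A direct check shows that $\Phi$ is self-adjoint with respect to $\inner{\cdot}{\cdot}_2$, hence its largest eigenvalue equals $\sup_{\|X\|_2 = 1}\inner{\Phi(X)}{X}_2$. By \eqref{eq:plan-id} this supremum is at most $x$, and it is attained at $X = I_d/\sqrt{d}$, so $x$ is the largest eigenvalue of $\Phi$, and therefore of $N$. For the uniqueness assertion, suppose $\Phi(X) = xX$ with $X \neq 0$; then \eqref{eq:plan-id} forces $R_v X(I_d - R_v) = 0$ for every $v$. Since $\Phi$ preserves adjoints, $X^*$ is also an $x$-eigenvector, and applying the same conclusion to $X^*$ yields $(I_d - R_v) X R_v = 0$. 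Combining, $R_v X = X R_v$ for every $v$, so $X$ lies in the commutant of $\{R_v\}_{v=1}^n$ in $\bb M_d$.

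The crucial input at this final stage is Lemma~\ref{lem:proj-gen-Md}: because $\{R_v\}_{v=1}^n$ generate $\bb M_d$ as a C$^*$-algebra, this commutant is $\bb C\cdot I_d$, whence $X \in \bb C\cdot I_d$. Transporting back through $\mathrm{vec}$, the $x$-eigenspace of $N$ is exactly $\bb C\cdot \varphi_d$, as claimed. There is no serious technical obstacle here; the only non-formal computation is the derivation of \eqref{eq:plan-id}, and the only non-trivial \emph{ingredient} is the irreducibility supplied by Lemma~\ref{lem:proj-gen-Md}, without which one would merely conclude that the $x$-eigenspace corresponds to $\mathrm{vec}$ applied to the commutant of the $R_v$'s, rather than to the single line $\bb C\cdot \varphi_d$.
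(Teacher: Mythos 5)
Your proof is correct, but it takes a genuinely different route from the paper's for both halves of the argument. For the bound on the top eigenvalue, the paper notes $N \geq 0$ and applies operator monotonicity directly, $N = \sum_v R_v \otimes R_v^T \leq \sum_v R_v \otimes I_d = x\, I_d \otimes I_d$, which gives $\|N\| \leq x$ in one line; you instead transport to $\Phi$, prove the identity $\inner{\Phi(X)}{X}_2 = x\|X\|_2^2 - \sum_v \|R_v X(I_d - R_v)\|_2^2$, and use the variational characterization of the top eigenvalue of a self-adjoint map. For uniqueness, the paper views the eigenvector equation as a fixed-point equation for the unital quantum channel $\frac{1}{x}\Phi$ and cites \cite[Theorem 4.25]{Watr18} as a black box to conclude $R_v B = B R_v$, whereas you derive the commutation $R_v X = X R_v$ directly from your identity by showing $R_v X(I_d - R_v) = 0$, applying the same to $X^*$ (using $x \in \mathbb R$ and $\Phi(X^*) = \Phi(X)^*$), and combining. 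In effect you have reproven inline the relevant piece of the quantum-channel fixed-point theorem. Your approach is longer for the eigenvalue bound but makes the whole lemma self-contained, eliminating the external dependence on \cite{Watr18}; both proofs correctly close by appealing to Lemma \ref{lem:proj-gen-Md} (irreducibility, i.e.\ trivial commutant) to force the eigenvector to be proportional to $\mathrm{vec}(I_d) = \sqrt{d}\,\varphi_d$.
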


\begin{proof}
	With $D = \frac{1}{\sqrt{d}}I_d$, we have $\mathrm{vec}(D) = \varphi_d$. Using Identity \eqref{eq:vec-identity}, \begin{align*}
		N\varphi_d = N\mathrm{vec}(D) = \sum_{v=1}^n \mathrm{vec}(R_vDR_v) = x\mathrm{vec}(D) = x\varphi_d.
	\end{align*} Thus $\varphi_d$ is an eigenvector of $N$ with eigenvalue $x$.
	
	Since $N\geq 0$, the largest eigenvalue of $N$ is $\|N\|$. Since \begin{align*}
		N = \sum_{v=1}^n R_v\otimes R_v^T \leq \sum_{v=1}^n R_v \otimes I_d = x I_d\otimes I_d
	\end{align*} it follows that $\|N\|\leq x$. Then $\|N\| = \sup_{\|\xi\| = 1}|\inner{N\xi}{\xi}|$ coupled with $\inner{N\varphi_d}{\varphi_d} = x$, yields $\|N\| = x$.
	
	We now show that the eigenspace corresponding to the eigenvalue $x$ is one-dimensional. Let $\varphi\in \bb C^d\otimes \bb C^d$ be a unit vector such that $N\varphi = x\varphi$. Let $B\in \bb M_d$ be such that $\mathrm{vec}(B) = \varphi$. Then $N\varphi = x\varphi$ is equivalent to $N\mathrm{vec}(B) = x\mathrm{vec}(B)$, which in turn, using Identity \eqref{eq:vec-identity}, is equivalent to \begin{align}\label{eq:fixed-point}
		\sum_{v=1}^n R_vBR_v = xB.
	\end{align} Now consider the unital quantum channel $\Phi\colon \bb M_d\to \bb M_d$ defined by \begin{align}
		X\mapsto \frac{1}{x}\sum_{v=1}^n R_vXR_v, \qquad X\in \bb M_d.
	\end{align} Then Equation \eqref{eq:fixed-point} implies that $B$ is a fixed point of the quantum channel. By \cite[Theorem 4.25]{Watr18}, it follows that $R_vB= BR_v$ for all $v\in [n]$. By Lemma \ref{lem:proj-gen-Md}, the projections $\{R_v\}_{v=1}^n$ generate the whole of $\bb M_d$. Hence, $B$ commutes with every element of $\bb M_d$, and we must have $B = \lambda I_d$, for some $\lambda\in\bb C$. In other words, $\varphi = \mathrm{vec}(B) = (\lambda\sqrt{d})\varphi_d$, as required. \end{proof}

\section{Robust self-testing of projections adding up to scalar times identity}\label{self-testing-sum-proj}
We now set out to prove that the correlation $\widetilde{p}_{n,x}$ of Definition \ref{def:def-of-p*} robustly self-tests the canonical strategy $\widetilde{\mathscr S}$ therein.

We first show in Theorem~\ref{thm:self-test} that if a strategy $\mathscr{S}$ induces the correlation $\widetilde{p}_{n,x}$ then $\widetilde{\mathscr S}$ must be a local dilation of $\mathscr{S}$. For this, we prove that the measurement operators of $\mathscr{S}$ form a representation of the algebra $\mathscr{P}_{n,x}$. Then invoking the irreducibility assumption we get isometries relating the strategy $\mathscr S$ in hand with the canonical one. 

To handle the robust case, we first formally define a suitable notion of an ``approximate'' representation. Then we show that if we have a strategy which induces a correlation within $\epsilon$-distance of $\widetilde{p}_{n,x}$, then we get an ``approximate'' representation of the relation $r_1 + \dots + r_n = x1$. Finally, to relate the approximate strategy with the canonical one, we establish an analogue of Gowers--Hatami Theorem for the C$^*$-algebra $\mathscr{P}_{n,x}$.

\begin{theorem}\label{thm:self-test}
Let $n\geq 3$ and $x\in \Lambda_n$. The synchronous correlation $\widetilde{p}_{n,x} \in C_q^s(n,2)$ induced by the strategy $\widetilde{\mathscr S}$ in Definition \ref{def:def-of-p*} self-tests $\widetilde{\mathscr S}$.
\end{theorem}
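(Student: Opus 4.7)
Suppose $\mathscr{S} = (\psi \in \bb C^{d_A} \otimes \bb C^{d_B}, \{E_{v, i}\}, \{F_{w, j}\})$ is an arbitrary strategy inducing $\widetilde{p}_{n, x}$, and write $E_v \coloneqq E_{v, 1}$, $F_v \coloneqq F_{v, 1}$. My plan is: (i) reduce to the case in which $d_A = d_B$ and $\psi$ has full Schmidt rank; (ii) prove the algebraic identities $\sum_v E_v = x I_{d_A}$ and $\sum_v F_v = x I_{d_B}$, so that $\{E_v\}$ and $\{F_v\}$ furnish representations of $\mathscr{P}_{n, x}$; (iii) apply the uniqueness consequence of Theorem \ref{thm:KRS} to bring these representations into canonical form via local unitaries; and (iv) use Lemma \ref{lem:max-evalue} to show that $\psi$ then factors as $\varphi_d$ tensored with a junk state. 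Step (i) is handled by Lemma \ref{lem:full-schmidt-rank} together with the transitivity Lemma \ref{lem:trans-epsilon-dilation} (with $\epsilon_1 = \epsilon_2 = 0$); after this reduction, Corollary \ref{cor:proj-strat} asserts that each $E_v$ and $F_v$ is a projection.

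The bulk of the work is step (ii). Let $A \coloneqq \sum_v E_v$. Corollary \ref{cor:proj-strat}(a) yields the synchronicity identity $(E_v \otimes I)\psi = (I \otimes F_v)\psi$, and combining this with $E_v = E_v^2$ gives $(E_v \otimes F_v)\psi = (E_v \otimes I)\psi$; summing over $v$ produces $(A \otimes I)\psi = \sum_v (E_v \otimes F_v)\psi$. Using synchronicity once more, one finds
\[
\|(A \otimes I)\psi\|^2 = \sum_{v, w} \langle (E_v E_w \otimes I)\psi, \psi \rangle = \sum_{v, w} \langle (E_v \otimes F_w)\psi, \psi \rangle = \sum_{v, w} \widetilde{p}_{n, x}(1, 1|v, w) = x^2,
\]
the last equality because $\sum_{v, w} \mathrm{tr}_d(\widetilde{P}_v \widetilde{P}_w) = \mathrm{tr}_d((x I_d)^2) = x^2$ by Equation \eqref{eq:pstardef}, while $\langle (A \otimes I)\psi, \psi \rangle = \sum_v p_A(1|v) = x$. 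Therefore $\|(A - x I_{d_A}) \otimes I \, \psi\|^2 = 0$; since $\psi$ has full Schmidt rank, Identity \eqref{eq:vec-identity} applied to the invertible matrix $D$ with $\mathrm{vec}(D) = \psi$ forces $A = x I_{d_A}$. The symmetric argument on Bob's side yields $\sum_v F_v = x I_{d_B}$.

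For step (iii), since $\{E_v\}$ are projections in $\bb M_{d_A}$ summing to $x I_{d_A}$, the uniqueness consequence stated in the paragraph following Theorem \ref{thm:KRS} provides a Hilbert space $\cl K_A$ and a unitary $V_A \colon \bb C^{d_A} \to \bb C^d \otimes \cl K_A$ with $V_A E_v V_A^* = \widetilde{P}_v \otimes I_{\cl K_A}$. Analogously, using that $\{\widetilde{P}_v^T\}$ is another projection solution of $\sum_v R_v = x I_d$ in $\bb M_d$, we obtain a unitary $V_B \colon \bb C^{d_B} \to \bb C^d \otimes \cl K_B$ with $V_B F_v V_B^* = \widetilde{P}_v^T \otimes I_{\cl K_B}$. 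Setting $\psi' \coloneqq (V_A \otimes V_B)\psi$ and decomposing $\psi' = \sum_{k, l} \sigma_{k, l} \otimes e_k \otimes e_l$ (after the usual tensor-flip) with $\sigma_{k, l} \in \bb C^d \otimes \bb C^d$, the synchronicity identity transports to $(\widetilde{P}_v \otimes I)\sigma_{k, l} = (I \otimes \widetilde{P}_v^T)\sigma_{k, l}$ for every $v$ and $(k, l)$. A brief computation then gives $N \sigma_{k, l} = x \sigma_{k, l}$ where $N = \sum_v \widetilde{P}_v \otimes \widetilde{P}_v^T$, and Lemma \ref{lem:max-evalue} forces each $\sigma_{k, l}$ to be a scalar multiple of $\varphi_d$. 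Collecting the scalars into $\psi_{\mathrm{junk}}$ yields $(V_A \otimes V_B)\psi = \varphi_d \otimes \psi_{\mathrm{junk}}$, and the intertwining of $V_A, V_B$ with the measurements immediately delivers Equation \eqref{eq:approx-loc-dil-2}. The delicate step is the calculation in (ii); the remainder follows mechanically from the uniqueness of the irreducible representation of $\mathscr{P}_{n, x}$.
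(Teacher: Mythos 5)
Your argument is correct and follows the same overall strategy as the paper: reduce to full Schmidt rank, establish $\sum_v E_v = xI$ and $\sum_v F_v = xI$ to obtain representations of $\mathscr{P}_{n,x}$, invoke the uniqueness of the irreducible representation, and finish with Lemma~\ref{lem:max-evalue}. The two small points of divergence are worth noting. In step (ii), you verify $\|((A-xI_{d_A})\otimes I)\psi\|^2 = 0$ by direct expansion, whereas the paper phrases the same computation as an equality case of Cauchy--Schwarz for $\langle P, I_r\rangle_{\rho_A}$; these are the same calculation packaged differently. In step (iv), your route is genuinely a bit different and arguably cleaner: the paper computes $\langle(N\otimes I)U\psi, U\psi\rangle = x$ and then invokes the spectral-gap/Cauchy--Schwarz observation that a unit vector attaining the top eigenvalue must lie in the top eigenspace, while you instead transport the synchronicity identity through the unitaries $V_A, V_B$, decompose $\psi' = \sum_{k,l}\sigma_{k,l}\otimes e_k\otimes e_l$, and show each component $\sigma_{k,l}$ is a fixed point of $N$ by using $(\widetilde P_v\otimes I)\sigma_{k,l} = (I\otimes\widetilde P_v^T)\sigma_{k,l}$ together with $\widetilde P_v^2 = \widetilde P_v$ and $\sum_v \widetilde P_v = xI_d$. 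This avoids the spectral argument entirely and produces $\psi_{\mathrm{junk}}$ explicitly as $\sum_{k,l} c_{k,l}\,e_k\otimes e_l$, at the small cost of needing to check the linear-independence bookkeeping in the decomposition. Both routes rest on Lemma~\ref{lem:max-evalue}, and both are sound.
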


\begin{proof}
Let $\mathscr{S} = (\psi\in \bb C^{d_A}\otimes \bb C^{d_B},\{P_{v}, I_{d_A}- P_v\}_{v=1}^n,\{Q_{w}, I_{d_B}- Q_w\}_{w=1}^n)$ be a strategy which induces correlation $\widetilde{p}_{n,x}$. Using the transitivity of local dilations (Lemma \ref{lem:trans-epsilon-dilation}) and Lemma \ref{lem:full-schmidt-rank} we may assume that $\psi$ is of full Schmidt rank $r$, i.e., $d_A = d_B = r$. The ``moreover'' part of Corollary \ref{cor:proj-strat} then implies that $P_v$ and $Q_v$ are projections for all $v\in [n]$.
	
Set $\rho = \psi\psi^*$ and let $\rho_A$ and $\rho_B$ be its reduced density matrices. Clearly, $\inner{I_{r}}{I_{r}}_{\rho_A} = 1$, and if $P \coloneqq\sum_{v=1}^n P_v$, then using the property $(P_v\otimes I_{r})\psi = (I_{r}\otimes Q_v)\psi$ of Corollary \ref{cor:proj-strat}, we compute \begin{align*}
	\inner{P}{I_{r}}_{\rho_A} &= \sum_{v=1}^n \inner{(P_v\otimes I_{r})\psi}{\psi} = \sum_{v=1}^n \inner{(P_v\otimes Q_v)\psi}{\psi} =  \sum_{v=1}^n \widetilde{p}_{n,x}(1,1|v,v) = x, \\
\intertext{and}
	\inner{P}{P}_{\rho_A} &= \sum_{v,w=1}^n \inner{(P_vP_w\otimes I_{r})\psi}{\psi}  = \sum_{v,w=1}^n \inner{(P_v\otimes I_{r})\psi}{(P_w\otimes I_{r})\psi} \\
	&= \sum_{v,w=1}^n \inner{(P_v\otimes I_{r})\psi}{(I_{r} \otimes Q_w)\psi} = \sum_{v,w=1}^n \inner{(P_v\otimes Q_w)\psi}{\psi} \\
	&= \sum_{v,w=1}^n \widetilde{p}_{n,x}(1,1|v,w) = n\frac{x}{n} + n(n-1)\frac{x(x-1)}{n(n-1)} = x^2.
	\end{align*} 
Therefore, $\inner{P}{I_{r}}_{\rho_A} = \|P\|_{\rho_A} \|I_r\|_{\rho_A} = x$. That is, equality holds in a Cauchy--Schwarz inequality, and since $\psi$ is of full Schmidt rank, this equality implies $P = \sum_{v=1}^n P_v =  xI_{r}$. Similarly, we can show that $\sum_{w=1}^n Q_w  = x I_{r}$.
	
Invoking the uniqueness of the irreducible representation of $\mathscr{P}_{n,x}$, it follows that there exist unitaries $U_A\colon \bb C^r \to \bb C^d\otimes \cl K_A$ and $U_B:\bb C^r \to \bb C^d \otimes \cl K_B$ for some Hilbert spaces $\cl K_A, \cl K_B$ such that $U_AP_vU_A^* = \widetilde{P}_v\otimes I_{\cl K_A}$ and $U_BQ_vU_B^* = \widetilde{P}_v^T\otimes I_{\cl K_B}$ for all $v\in [n]$. Set $U= U_A\otimes U_B$. Then, we have \begin{align*}
	x &= \sum_{v=1}^n \inner{(P_v\otimes Q_v)\psi}{\psi} \\
	&= \sum_{v=1}^n \inner{(\widetilde{P}_v\otimes \widetilde{P}_v^T\otimes I_{\cl K_A} \otimes I_{\cl K_B})U\psi}{U\psi} \\
	& = \inner{(N \otimes I_{\cl K_A} \otimes I_{\cl K_B})U\psi}{U\psi},
	\end{align*} where $N= \sum_{v=1}^n\widetilde{P}_v\otimes \widetilde{P}_v^T$. Then using Lemma \ref{lem:max-evalue}, $x$ is also the largest eigenvalue of $N\otimes (I_{\cl K_A}\otimes I_{\cl K_B})$ with eigenspace $\{\varphi_d\otimes \psi':\psi'\in \cl K_A\otimes \cl K_B\}$. The above computation implies, by means of Cauchy--Schwarz inequality, that $U\psi$ is an eigenvector of $N\otimes (I_{\cl K_A}\otimes I_{\cl K_B})$ corresponding to the eigenvalue $x$, and therefore $U\psi = \varphi_d\otimes \psi_{\mathrm{junk}}$ for some unit vector $\psi_{\mathrm{junk}}\in \cl K_A\otimes \cl K_B$. Finally, for all $v,w$, \begin{align*}
	U(P_v\otimes Q_w)\psi &= U(P_v\otimes Q_w)U^*U\psi \\
	&= \big((\widetilde{P}_v\otimes \widetilde{P}_w^T)\otimes (I_{\cl K_A}\otimes I_{\cl K_B})\big)(\varphi_d\otimes \psi_{\mathrm{junk}}) \\
	&= \big((\widetilde{P}_v\otimes \widetilde{P}_w^T)\varphi_d\big) \otimes \psi_{\mathrm{junk}},
	\end{align*} as required.
\end{proof}

With an eye towards the robust self-testing result, we first define what we mean by an ``approximate'' representation for a relation. This is an analogue of an ``approximate'' group representation.

\begin{definition}\label{def:approx-rep}
A \emph{polynomial relation} in $n$ variables is a relation of the form \begin{align}\label{polynomial}
	f\left(x_1,\dots,x_n,x_1^*,\dots,x_n^*\right) = 0
\end{align} where $f$ is a polynomial in $2n$ non-commuting variables $x_1,\dots,x_n,x_1^*,\dots,x_n^*$. Given $\delta \geq 0$, a density matrix $\sigma\in \bb M_r$, and polynomial relations $(f_1, \ldots, f_k)$ as in Expression \eqref{polynomial}, an $n$-tuple of matrices $\left(X_1,\dots,X_n\right)$ in $\bb M_r$ is called a $(\delta,\sigma)$-\emph{representation} of relations $(f_1, \ldots, f_k)$ if
\begin{align*}
  \max_{i \in [k]} \norm{f_i\left(X_1,\dots,X_n,X_1^*,\dots,X_n^*\right)}_{\sigma} \leq \delta. 
\end{align*}
\end{definition}

\begin{lemma}\label{thm:almost-corr-nonsyn} Let $n\geq 3$ and $x\in \Lambda_n$. Let $\widetilde{p}_{n,x}\in C_q^s(n,2)$ be as in Definition \ref{def:def-of-p*}. Let $p\in C_q(n,2)$ and suppose it is induced by a strategy 
\begin{align*}
\left(\psi\in \bb C^{d_A}\otimes \bb C^{d_B}, \left\lbrace E_v, I_{d_A}-E_v \right\rbrace_{v=1}^n, \left\lbrace F_w, I_{d_B} - F_w \right\rbrace_{w=1}^n \right).
\end{align*} 
Let $\rho_A$ and $\rho_B$ be the reduced density matrices of $\rho=\psi\psi^*$. If $\|p-\widetilde{p}_{n,x}\|_1 \leq \delta$ for some $\delta \geq 0$, then the tuple $(E_1,\dots,E_n)$ is a $(C\delta^{\frac{1}{4}},\rho_A)$-representation and $(F_1,\dots,F_n)$ is a $(C\delta^{\frac{1}{4}},\rho_B)$-representation of the relations given in \eqref{eq:relations}, where \begin{align}
C \coloneqq \sqrt{n^2 + (1+2x)\sqrt{\delta}}.
\end{align}  \end{lemma}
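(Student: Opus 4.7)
The plan is to verify each of the three defining relations of $\mathscr{P}_{n,x}$ in the $\rho_A$-seminorm for the tuple $(E_1,\dots,E_n)$; the argument for $(F_1,\dots,F_n)$ on the Bob side will be a mirror image. Self-adjointness $r_v - r_v^* = 0$ holds exactly because each $E_v$ is positive and hence self-adjoint. The projection relation $r_v^2 - r_v = 0$ is handled directly by Lemma~\ref{lem:approx-vectors}(d), which yields $\norm{E_v - E_v^2}_{\rho_A} \leq 2\sqrt{\delta}$; since $C \geq n \geq 3$, this is dominated by $C\delta^{1/4}$ in the relevant regime. The nontrivial content is the sum relation $\sum_v r_v = x\cdot 1$.

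Set $P \coloneqq \sum_{v=1}^n E_v$. The main computation expands
\begin{align*}
\norm{P - xI_{d_A}}_{\rho_A}^2 = \norm{P}_{\rho_A}^2 - 2x\,\tr(P\rho_A) + x^2.
\end{align*}
For the linear term, $\tr(E_v\rho_A) = \inner{(E_v \otimes I_{d_B})\psi}{\psi} = p_A(1|v) = p(1,1|v,v) + p(1,2|v,v)$, whose canonical analogue equals $x/n$ by the explicit form of $\widetilde{p}_{n,x}$ given by Lemma~\ref{prop:const-trace}. Summing over $v$ and using $\norm{p-\widetilde{p}_{n,x}}_1 \leq \delta$ gives $\abs{\tr(P\rho_A) - x} \leq \delta$. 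For the quadratic term, the key step is to swap a single $E$-factor for an $F$-factor on the opposite tensor leg via Lemma~\ref{lem:approx-vectors}(a). Concretely, for each pair $v,w$,
\begin{align*}
\inner{(E_vE_w \otimes I_{d_B})\psi}{\psi} = \inner{(E_w \otimes I_{d_B})\psi}{(E_v \otimes I_{d_B})\psi} \approx_{\sqrt{\delta}} \inner{(E_w \otimes F_v)\psi}{\psi} = p(1,1|w,v),
\end{align*}
with the $\sqrt{\delta}$ error controlled by Cauchy--Schwarz and $\norm{(E_w \otimes I_{d_B})\psi} \leq 1$. Summing over the $n^2$ pairs gives $\abs{\norm{P}_{\rho_A}^2 - \sum_{v,w} p(1,1|w,v)} \leq n^2\sqrt{\delta}$, and another error of size at most $\delta$ (again from $\norm{p-\widetilde{p}_{n,x}}_1 \leq \delta$) lets us replace the sum of $p(1,1|w,v)$ by its canonical value, which by Lemma~\ref{prop:const-trace} equals
\begin{align*}
\sum_{v,w} \widetilde{p}_{n,x}(1,1|w,v) = n\cdot\tfrac{x}{n} + n(n-1)\cdot\tfrac{x(x-1)}{n(n-1)} = x^2.
\end{align*}

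Combining these estimates yields
\begin{align*}
\norm{P - xI_{d_A}}_{\rho_A}^2 \leq (x^2 + n^2\sqrt{\delta} + \delta) - 2x(x - \delta) + x^2 = n^2\sqrt{\delta} + (1+2x)\delta = C^2\sqrt{\delta},
\end{align*}
and taking square roots gives the desired bound $\norm{P - xI_{d_A}}_{\rho_A} \leq C\delta^{1/4}$. The $(F_1,\dots,F_n)$ case uses Lemma~\ref{lem:approx-vectors}(a) in the opposite direction, with $\rho_A$ and $\rho_B$ (and the roles of $E$'s and $F$'s) exchanged. The main obstacle, and the reason one only gets $\delta^{1/4}$ rather than $\sqrt{\delta}$, is that the natural estimate is for the \emph{squared} seminorm $\norm{P - xI_{d_A}}_{\rho_A}^2$, so a square root is lost when passing to the seminorm itself; the available $1$-norm closeness of correlations does not let us avoid this loss.
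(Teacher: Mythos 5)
Your proof is correct and follows the paper's argument essentially line for line: the linear term is controlled by the $1$-norm hypothesis, the quadratic term $\inner{(E_vE_w\otimes I_{d_B})\psi}{\psi}$ is moved to Bob's side via Lemma~\ref{lem:approx-vectors}(a) at a cost of $\sqrt{\delta}$ per pair, and the resulting bound $\norm{P - xI_{d_A}}_{\rho_A}^2 \leq (1+2x)\delta + n^2\sqrt{\delta} = C^2\sqrt{\delta}$ is exactly the paper's. The only (inessential) differences are that you spell out the self-adjointness and idempotency relations, which the paper's proof leaves implicit, and that you record a two-sided estimate $\abs{\tr(P\rho_A)-x}\leq\delta$ where the paper only needs the one-sided lower bound.
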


\begin{proof}
Using triangle inequality in comparing marginals, \begin{align*}
\bigg| \sum_{v=1}^n \big(p_A(1|v) &- (\widetilde{p}_{n,x})_A(1|v)\big) \bigg| \\
&= \left|\sum_{v=1}^n \big(p(1,1|v,v) + p(1,2|v,v) - \widetilde{p}_{n,x}(1,1|v,v) - \widetilde{p}_{n,x}(1,2|v,v)\big) \right| \\
&\leq \sum_{v=1}^n |p(1,1|v,v) - \widetilde{p}_{n,x}(1,1|v,v)| + |p(1,2|v,v) - \widetilde{p}_{n,x}(1,2|v,v)| \\
&\leq \|p - \widetilde{p}_{n,x}\|_1 \leq \delta.
\end{align*} Since $\sum_{v=1}^n (\widetilde{p}_{n,x})_A(1|v) = x$, we get \begin{align}\label{eq:equation}
\sum_{v=1}^n p_A(1|v) \geq x - \delta.
\end{align}

Similarly, we establish \begin{align}\label{ineq:fromthm5.4}
\left|x^2 - \sum_{v,w=1}^n p(1,1|v,w) \right| \leq \delta.
\end{align}

On the other hand, using triangle inequality and Lemma \ref{lem:approx-vectors}(a), \begin{align*}
\sum_{v,w=1}^n \inner{(E_vE_w\otimes I_{d_B})\psi}{\psi} &= \sum_{v,w=1}^n \inner{(E_v\otimes I_{d_B})\psi}{(E_w\otimes I_{d_B})\psi} \\
&\approx_{n^2\sqrt{\delta}} \sum_{v,w=1}^n \inner{(E_v\otimes I_{d_B})\psi}{(I_{d_A}\otimes F_w)\psi} \\
&= \sum_{v,w=1}^n \inner{(E_v\otimes F_w)\psi}{\psi} = \sum_{v,w=1}^n p(1,1|v,w).
\end{align*} Therefore, \begin{align}
\left|\sum_{v,w=1}^n \inner{(E_vE_w\otimes I_{d_B})\psi}{\psi} - \sum_{v,w=1}^n p(1,1|v,w) \right| \leq n^2 \sqrt{\delta}.
\end{align} Using this inequality together with Inequality \eqref{ineq:fromthm5.4} and an application of triangle inequality, we get \begin{align}\label{ineq:one-more}
\left|x^2 - \sum_{v,w=1}^n \inner{(E_vE_w\otimes I_{d_B})\psi}{\psi} \right|  \leq \delta + n^2\sqrt{\delta}.
\end{align}

Finally, if $E \coloneqq E_1 + \dots + E_n$, then using Inequalities \eqref{eq:equation} and \eqref{ineq:one-more}, \begin{align*}
\|E - x I_{d_A}\|_{\rho_A}^2 &= \sum_{v,w=1}^n \inner{(E_vE_w\otimes I_{d_B})\psi}{\psi} - 2x \sum_{v=1}^n p_A(1|v) + x^2 \\
&\leq x^2 + \delta + n^2\sqrt{\delta} - 2x(x - \delta) + x^2 = (1+2x)\delta + n^2\sqrt{\delta}.
\end{align*} Similarly, one proves the statement about $F = F_1+\dots+F_n$. \end{proof}

We now supply with an analogue of Gowers--Hatami Theorem as promised. Our presentation of Gowers--Hatami Theorem mimics the one presented in \cite[Theorem 12]{Vidick} (which is a slightly general version of \cite[Theorem 15.2]{Gowers17} and was originally published in \cite{GH17}). We remark that the relation between $\epsilon$ and $\delta$ in Theorem \ref{thm:Gowers-Hatami-N} is non-constructive. We first recall a standard fact from C$^*$-algebra theory.

\begin{lemma}
If $\tau$ is a tracial state on a unital $C^*$-algebra $\mathfrak A$, then $\mathfrak N_\tau = \{a \in \mathfrak A : \tau(a^*a)=0\}$ is a closed two-sided ideal of $\mathfrak A$.
\end{lemma}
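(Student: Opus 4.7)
The plan is to verify the four properties that together characterize $\mathfrak N_\tau$ as a closed two-sided ideal: it is closed under addition, closed under scalar multiplication, absorbs $\mathfrak A$ on both sides, and is topologically closed. The closure and scalar-multiplication pieces are immediate. For closure, the state $\tau$ is continuous (since states on unital C$^*$-algebras have norm one), the involution is continuous, and multiplication is jointly continuous, so $a\mapsto \tau(a^*a)$ is continuous and $\mathfrak N_\tau$ is its zero set. For scalar multiplication, $\tau((\lambda a)^*(\lambda a)) = |\lambda|^2\tau(a^*a)$.

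For closure under addition, I would introduce the pseudo-inner product $\langle a,b\rangle_\tau \coloneqq \tau(b^*a)$ on $\mathfrak A$, which is positive semidefinite, and therefore satisfies the Cauchy--Schwarz inequality $|\tau(b^*a)|^2 \leq \tau(a^*a)\,\tau(b^*b)$. Expanding $\tau((a+b)^*(a+b)) = \tau(a^*a) + \tau(b^*a) + \tau(a^*b) + \tau(b^*b)$ and applying Cauchy--Schwarz to the cross terms shows that $a,b\in\mathfrak N_\tau$ implies $a+b\in\mathfrak N_\tau$. Note that traciality is not used for any of these steps.

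For the ideal property I would first establish that $\mathfrak N_\tau$ is a left ideal, then upgrade to two-sided using the trace identity. For the left-ideal property, if $a\in \mathfrak N_\tau$ and $b\in \mathfrak A$, then $(ba)^*(ba) = a^*b^*ba \leq \|b\|^2\, a^*a$ as positive elements (using $b^*b \leq \|b\|^2 1$ and that conjugation by $a^*,a$ preserves the order). Since $\tau$ is a positive linear functional, it is order-preserving, so $\tau((ba)^*(ba)) \leq \|b\|^2\tau(a^*a) = 0$. Now using the defining property of a tracial state, $\tau(a^*a) = \tau(aa^*)$, one sees that $a\in \mathfrak N_\tau$ iff $a^*\in \mathfrak N_\tau$, so $\mathfrak N_\tau$ is self-adjoint. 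Hence for $a\in\mathfrak N_\tau$ and $b\in \mathfrak A$, $a^*\in\mathfrak N_\tau$, so $b^*a^*\in\mathfrak N_\tau$, and taking adjoints gives $ab\in\mathfrak N_\tau$, establishing the right-ideal property.

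There is no real obstacle here: this is a routine fact about tracial states. The only step that would have required genuine effort in the absence of traciality is the right-ideal property, and traciality is precisely what collapses that to the already-established left-ideal property via the self-adjointness of $\mathfrak N_\tau$.
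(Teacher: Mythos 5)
Your proof is correct and takes essentially the same route as the paper: Cauchy--Schwarz for the sesquilinear form $\tau(b^*a)$ to get the subspace structure, continuity of $\tau$ for closedness, the operator inequality $b^*b \leq \|b\|^2 1$ together with positivity of $\tau$ for the left-ideal property, and traciality (via self-adjointness of $\mathfrak N_\tau$) to upgrade to a two-sided ideal. You simply spell out a few of the routine steps (closure under addition and scalar multiplication) that the paper compresses into "shows $\mathfrak N_\tau$ is a subspace."
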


\begin{proof}
Applying the Cauchy--Schwarz inequality to the sesquilinear form $(a, b) \mapsto \tau(b^*a)$ on $\mathfrak A$ shows $\mathfrak N_\tau$ is a subspace of $\mathfrak A$, which is closed as $\tau$ is continuous. For $z \in \mathfrak A$ and $a \in \mathfrak N_\tau$, using the inequality $z^*z\leq \|z^*z\|1_{\ff A}$, we have \begin{align*}
\tau((za)^*(za)) = \tau(a^*z^*za) \leq \|z^*z\|\tau(a^*a) = 0,
\end{align*} and hence $za \in \mathfrak N_\tau$.  This shows $\mathfrak N_\tau$ is a left-ideal. As $\tau(a^*a) = \tau(aa^*)$ for all $a \in \mathfrak A$, $\mathfrak N_\tau$ is self-adjoint and hence is also a right ideal.
\end{proof}

\begin{theorem}[Analogue of Gowers--Hatami]\label{thm:Gowers-Hatami-N}
Let $n \geq 3$ and $x \in \Lambda_n$ with $x = \frac{b}{d}$ in lowest terms. Let $\{\widetilde{P}_v\}_{v=1}^n\subseteq \bb M_d$ be an irreducible representation of $\mathscr P_{n,x}$. Then, given $\epsilon > 0$, there exist $\delta > 0$ and $m \in \bb N$ such that the following property is satisfied:
\begin{equation}\label{GH-condition}\begin{minipage}{.9\textwidth}
for all $r\in \bb N$, density matrix $\rho \in \mathbb M_r$, and positive contractions $E_1, \dots, E_n \in \bb M_r$ satisfying
\begin{enumerate}
  \item[(a)] $\norm{E_v^2 - E_v}_\rho \leq \delta$ for all $v \in  n$,
  \item[(b)] $\norm{xI_r - \sum_{v=1}^n  E_v}_\rho \leq \delta$, and
  \item[(c)] $\abs{\tr_{\rho}(W_1W_2 - W_2W_1)} \leq \delta$ for monomials $W_1$ and $W_2$ of degree at most $m$ in the noncommuting variables $E_1, \dots, E_n$,
\end{enumerate} 
there are $s\in \bb N$ and an isometry $V \colon \mathbb C^r \to \mathbb C^d \otimes \mathbb C^s$ such that for all $v \in [n]$, we have $\big\|E_v - V^*(\widetilde P_v \otimes I_s)V\big\|_\rho \leq \epsilon.$
\end{minipage}
\end{equation}
\end{theorem}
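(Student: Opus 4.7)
The non-constructive dependence between $\epsilon$ and $(\delta,m)$ in the conclusion suggests a compactness/ultrafilter argument. I argue by contradiction: suppose for some $\epsilon>0$ no pair $(\delta,m)$ works. Choosing sequences $\delta_k\to 0$ and $m_k\to\infty$, one obtains density matrices $\rho_k\in\bb M_{r_k}$ and positive contractions $E_v^{(k)}\in\bb M_{r_k}$ satisfying (a)--(c) for $(\delta_k,m_k)$ but for which no isometry $V\colon\bb C^{r_k}\to\bb C^d\otimes\bb C^s$ achieves $\max_v\|E_v^{(k)}-V^*(\widetilde P_v\otimes I_s)V\|_{\rho_k}\leq\epsilon$.

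Fix a free ultrafilter $\omega$ on $\bb N$ and put $\tau_k(X)=\tr(X\rho_k)$. Inside the $\ell^\infty$-product $\prod_k\bb M_{r_k}$, let $\mathfrak B_0$ be the unital C$^*$-subalgebra generated by the bounded sequences $\mathbf E_v=(E_v^{(k)})_k$ for $v\in[n]$, and define the state $\tau_\omega((X_k)_k)=\lim_\omega\tau_k(X_k)$. Condition (c) combined with $m_k\to\infty$ forces $\tau_\omega$ to be tracial on $\mathfrak B_0$: any commutator of monomials in the generators lies in the 2-norm kernel. The lemma stated immediately before the theorem then makes $\mathfrak N_{\tau_\omega}\cap\mathfrak B_0$ a closed two-sided ideal, and the GNS construction yields a trace-preserving $^*$-homomorphism $\pi_\omega\colon\mathfrak B_0\to\mathcal M$ into a tracial von Neumann algebra $\mathcal M$. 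Conditions (a) and (b), now valid exactly in the 2-norm induced by $\tau_\omega$, imply that $p_v:=\pi_\omega(\mathbf E_v)$ are projections in $\mathcal M$ with $\sum_v p_v=x\cdot 1_{\mathcal M}$. The universal property of $\mathscr P_{n,x}$ then supplies a unital $^*$-homomorphism $\pi\colon\mathscr P_{n,x}\to\mathcal M$, and since $\mathscr P_{n,x}\cong\bb M_d$ by Lemma \ref{lem:proj-gen-Md}, $\pi$ is a unital embedding of $\bb M_d$ into $\mathcal M$.

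To descend to finite dimensions, fix a matrix-unit system $\{\widetilde e_{ij}\}_{i,j=1}^d$ of $\bb M_d\cong\mathscr P_{n,x}$ (each $\widetilde e_{ij}$ a fixed polynomial in $r_1,\dots,r_n$) and choose lifts $E_{ij}^{(k)}\in\bb M_{r_k}$ of $\pi(\widetilde e_{ij})\in\mathcal M$. The matrix-unit relations then hold approximately in $\|\cdot\|_{\rho_k}$. A standard finite-dimensional perturbation---functional calculus applied to the near-projections $E_{ii}^{(k)}$, followed by polar decomposition of the off-diagonal lifts to recover partial isometries---yields genuine matrix units $\widetilde E_{ij}^{(k)}\in\bb M_{r_k}$ within $o_k(1)$-distance in $\|\cdot\|_{\rho_k}$. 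These define a $^*$-embedding $\widetilde\pi_k\colon\bb M_d\hookrightarrow\bb M_{r_k}$, and the structure theorem for embeddings of matrix algebras furnishes an isometry $V_k\colon\bb C^{r_k}\to\bb C^d\otimes\bb C^{s_k}$ with $V_k^*(X\otimes I_{s_k})V_k=\widetilde\pi_k(X)$ for all $X\in\bb M_d$. Taking $X=\widetilde P_v$ and using that $\widetilde P_v$ is a fixed polynomial in the $\widetilde e_{ij}$ yields $V_k^*(\widetilde P_v\otimes I_{s_k})V_k\approx E_v^{(k)}$ in $\|\cdot\|_{\rho_k}$, with error $o_k(1)$. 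This contradicts the standing lower bound $\epsilon$ for $\omega$-almost every $k$.

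The main obstacle is that the individual states $\tau_k$ are not tracial, so the standard tracial ultraproduct of $\prod_k(\bb M_{r_k},\tau_k)$ is not available. Condition (c) together with $m_k\to\infty$ is the substitute: it restores traciality in the limit, but only on the subalgebra generated by the $E_v^{(k)}$, which is all one needs. A secondary technical issue is perturbing approximate matrix units to exact ones in the tracial 2-norm (rather than operator norm); for the finite-dimensional algebra $\bb M_d$ this is standard, but care is required to ensure that the perturbations remain close in $\|\cdot\|_{\rho_k}$ rather than merely in operator norm.
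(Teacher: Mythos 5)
Your proposal follows the same high-level skeleton as the paper's proof: argue by contradiction, extract a state $\tau$ in the limit that is tracial by condition (c), pass to a quotient/GNS construction so that the images of the $E_v^{(k)}$ become exact projections summing to $x\cdot 1$, invoke the universal property and $\mathscr P_{n,x}\cong\bb M_d$, and then ``descend'' to build an isometry $V_k$ contradicting the standing $\epsilon$-lower bound. The two proofs diverge precisely at the descent step, and that is where your argument has a genuine gap.

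The paper descends as follows. Working in the universal C$^*$-algebra $\mathfrak A$ on $n$ contractions, it applies the Choi--Effros lifting theorem (in the simple form of \cite[Lemma 3.3]{Arveson}, available because $\mathscr P_{n,x}$ is a full matrix algebra) to lift the $*$-homomorphism $\theta\colon\mathscr P_{n,x}\to\mathfrak A/\mathfrak N_\tau$ to a single \emph{unital completely positive} map $\widetilde\theta\colon\mathscr P_{n,x}\to\mathfrak A$. Composing with $\pi_k$ gives ucp maps $\widetilde\pi_k\colon\mathscr P_{n,x}\to\bb M_{r_k}$ with $\|\widetilde\pi_k(r_v)-E_{v,k}\|_{\rho_k}\to 0$. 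Stinespring then yields $\widetilde\pi_k = W_k^*\sigma_k W_k$ for a $*$-representation $\sigma_k$ and isometry $W_k$, and uniqueness of the irreducible representation identifies $\sigma_k$ with $\widetilde P_\cdot\otimes I_{s_k}$. The point is that the finite-dimensional structure is produced by Stinespring, so no perturbation in $\|\cdot\|_{\rho_k}$ is ever required.

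Your route instead lifts a system of matrix units and tries to perturb the approximate matrix units to exact ones inside $\bb M_{r_k}$, close in $\|\cdot\|_{\rho_k}$. You flag this as ``standard . . . but care is required,'' but the obstruction is more than routine care: the seminorm $\|\cdot\|_{\rho_k}$ is \emph{not} a tracial $2$-norm, and the standard matrix-unit stability argument uses traciality in an essential way. For example, to replace an approximate partial isometry $v$ with $qvp$ ($p,q$ the already-corrected corner projections) one wants to bound $\|(1-q)v\|_{\rho_k}$ in terms of $\|vv^*-q\|_{\rho_k}$; with a trace $\tau$ one computes $\|(1-q)v\|_\tau^2 = \tau\bigl((1-q)(vv^*-q)\bigr)\leq\|vv^*-q\|_\tau$, but with $\tr_{\rho_k}$ one gets $\tr\bigl(v^*(1-q)v\rho_k\bigr)$, which cannot be rearranged to expose $vv^*-q$ without commuting $\rho_k$ past $v$ and $q$. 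Condition (c) only provides \emph{approximate} traciality for \emph{monomials in $E_1,\dots,E_n$ of bounded degree}, while the perturbed matrix units are produced by functional calculus and polar decomposition and are not such monomials; so (c) cannot be invoked to rescue the estimate. A similar problem arises in bounding the polar-decomposition correction when the approximate partial isometry is nearly degenerate. The ucp lift plus Stinespring in the paper's proof is exactly the device that sidesteps all of this, and you would need to either adopt it or supply a substantially more delicate perturbation argument that works for the non-tracial $\|\cdot\|_{\rho_k}$.
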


\begin{proof}
Suppose the result is false and fix a counterexample $n \geq 3$,  $x \in \Lambda_n$, and $\epsilon > 0$. For each natural number $k \geq 1$, fix $\delta_k > 0$ and $m_k\in \bb N$ such that $\lim_{k\to \infty} \delta_k = 0$ and $\lim_{k\to \infty} m_k = \infty$. For each $k \geq 1$, fix $r_k \in \mathbb N$, a density matrix $\rho_k \in \mathbb M_{r_k}$, and positive contractions $E_{1, k},\dots,E_{n,k} \in \bb M_{r_k}$, such that
\begin{enumerate}
  \item[(a$_k$)] $\big\|E_{i,k}^2 - E_{i,k}\big\|_{\rho_k} \leq \delta_k$ for all $i \in [n]$,
  \item[(b$_k$)] $\norm{x I_{r_k} - \sum_{i=1}^n  E_{i, k}}_{\rho_k} \leq \delta_k$, and
  \item[(c$_k$)] $\abs{\tr_{\rho_k}(W_1W_2-W_2W_1)} \leq \delta_k$ for monomials $W_1,W_2$ of degree at most $m_k$ in the variables $E_{1,k}, \dots, E_{n,k}$,
\end{enumerate} but such that for all natural numbers $s_k \geq 1$ and isometries $V_k \colon \mathbb C^{r_k} \to \mathbb C^d \otimes \mathbb C^{s_k}$, there is an $i \in [n]$ with  \begin{align}\label{eq:contr-eq}
\big\|E_{i,k} - V_k^*(\widetilde P_{i} \otimes I_{s_k})V_k\big\|_{\rho_k} > \epsilon.
\end{align}

Let $\mathfrak A$ be the universal unital C$^*$-algebra generated by $n$ contractions $A_1, \dots, A_n$. For each $k\in \bb N$, by the universal property of $\mathfrak A$ there exists a representation $\pi_k\colon \mathfrak A \to \bb M_{r_k}$ such that $\pi_k(A_i) = E_{i, k}$ for all $i\in [n]$.  Note that $(\tr_{\rho_k}\circ \pi_k)(AB-BA) \to 0$ for all $A, B \in \mathfrak A$.  If $\tau$ is a weak$^*$-limit point of the set of states $\{\tr_{\rho_k}\circ\pi_k\}_{k=1}^{\infty}$ on $\mathfrak A$, then $\tau$ is a tracial state on $\mathfrak A$.  Passing to a subsequence, we may assume $(\tr_{\rho_k}\circ\pi_k)(A) \to \tau(A)$ for all $A \in \mathfrak A$.

Let $q \colon \mathfrak A \to \mathfrak A / \mathfrak N_\tau$ be the quotient map where $\mathfrak N_\tau = \{a \in \mathfrak A : \tau(a^*a)=0\}$. For $1 \leq i \leq n$, we have \begin{align*}
\tau\left((A_i^2 - A_i)^*(A_i^2 - A_i)\right) &= \lim_{k\to \infty} \left(\tr_{\rho_k}\circ \pi_k \right)\left((A_i^2 - A_i)^*(A_i^2 - A_i)\right) \\
&= \lim_{k\to \infty} \tr_{\rho_k}\left((E_{i,k}^2 - E_{i,k})^*(E_{i,k}^2 - E_{i,k})\right) \\
&= \lim_{k\to\infty} \|E_{i,k}^2 - E_{i,k}\|_{\rho_k}^2 = 0.
\end{align*} So $A_i^2 - A_i \in \mathfrak N_\tau$, and hence $q(A_i)$ is a projection for all $1 \leq i \leq n$.  Similarly, \begin{align*}
\tau\left(\left( x I_{\ff A} - \sum_{i=1}^n A_i\right)^*\left(x I_{\ff A} - \sum_{i=1}^n A_i\right)\right) = 0,
\end{align*} and hence $\sum_{i=1}^n q(A_i) = x I_{\mathfrak A / \mathfrak N_\tau}$.  Therefore, there is a representation $\theta \colon \mathscr P_{n, x} \to \mathfrak A / \mathfrak N_\tau$ such that $\theta(r_i) = q(A_i)$ for $1 \leq i \leq n$ where $r_1,\dots, r_n$ are the generators of $\mathscr P_{n, x}$.

Since $\theta \colon \mathscr P_{n, x} \to \mathfrak A / \mathfrak N_\tau$ is a $*$-homomorphism, there is a unital, completely positive map $\widetilde \theta \colon \mathscr P_{n, x} \to \mathfrak A$ such that $q \widetilde{\theta} = \theta$.\footnote{This is a special case of the Choi--Effros lifting theorem \cite{Choi-Effros}. The special case needed here is \cite[Lemma 3.3]{Arveson} using that $\mathscr P_{n, x}$ is a full matrix algebra by Theorem \ref{lem:proj-gen-Md}.} Define $\widetilde \pi_k\colon \mathscr P_{n, x} \to \bb M_{r_k}$ by $\widetilde \pi_k = \pi_k \circ \widetilde \theta$.  Then $\widetilde \pi_k$ is unital and completely positive.  For each $1 \leq i \leq n$, we have $q(\widetilde\theta(r_i)) = \theta(r_i) =  q(A_i)$, and hence $\widetilde \theta(r_i) - A_i \in \mathfrak N_\tau$.  In particular, \begin{align*}
\lim_{k \to \infty} \norm{\widetilde \pi_k(r_i) - E_{i, k}}_{\rho_k}^2 = \tau\left(\left(\widetilde \theta(r_i) - A_i\right)^*\left(\widetilde \theta(r_i) - A_i\right)\right) = 0.
\end{align*}

For each $k \geq 1$, as $\widetilde \pi_k$ is a unital, completely positive map between finite-dimensional C$^*$-algebras, Stinespring's Theorem \cite[Theorem 4.1]{CBMOA} produces a finite-dimensional Hilbert space $\mathcal H_k$, a $*$-homomorphism $\sigma_k\colon \mathscr P_{n, x} \to \mathfrak B(\mathcal H_k)$, and an isometry $W_k \colon \mathbb C^{r_k} \to \mathcal H_k$ such that $W_k^* \sigma_k(a) W_k = \widetilde{\pi}_k(a)$ for all $a\in \mathscr{P}_{n,x}$. In particular, for each $k \geq 1$ and for all $i \in [n]$, we have \begin{align}\label{eq:GHeq1}
 \lim_{k \to \infty} \norm{W_k^*\sigma_k(r_i) W_k - E_{i,k}}_{\rho_k}  = 0.
\end{align}

Since $\mathscr P_{n, x}$ has a unique irreducible representation $\mathscr P_{n, x} \to \mathbb M_d$ given by $r_i \mapsto \widetilde P_i$,
for each $k \geq 1$, there are a natural number $s_k \geq 1$ and unitary $U_k \colon \mathcal H_k \to \mathbb C^d \otimes \mathbb C^{s_k}$ such that $U_k^*(\widetilde P_i \otimes I_{s_k}) U_k = \sigma_k(r_i)$ for all $i\in [n]$.
Define $V_k = U_k W_k \colon \mathbb C^{r_k} \to \mathbb C^d \otimes \mathbb C^{s_k}$.  Then $V_k$ is an isometry, and for all $i \in [n]$, we can rewrite Equation \eqref{eq:GHeq1} as \begin{align*}
\lim_{k \to \infty} \big\|V_k^* (\widetilde P_i \otimes I_{s_k})V_k - E_{i,k}\big\|_{\rho_k} = 0.
\end{align*} In particular, for some large $k \geq 1$, we have $\big\|V_k^* (\widetilde P_i \otimes I_{s_k}) V_k - E_{i, k}\big\|_{\rho_k} \leq \epsilon$ for all $i \in [n]$, which is a contradiction to Inequality \eqref{eq:contr-eq}.
\end{proof}

\begin{remark}
The proof of Theorem \ref{thm:Gowers-Hatami-N} is by contradiction and therefore the relation between $\epsilon$ and $\delta$ is not known. To find the explicit dependence of $\delta$ on $\epsilon$ is left as an open problem.
\end{remark}

With the analogue of the Gowers--Hatami Theorem in hand, we now proceed to prove the claimed robustness result. We first need three intermediate lemmas.

\begin{lemma}\label{lem:approx-evector}
	Let $A\in \bb M_d$ be a positive matrix having at least two distinct eigenvalues and let $\lambda_1>\lambda_2>\dots>\lambda_l \geq 0$ be the distinct eigenvalues of $A$. Let $\xi\in \bb C^d$ be a unit vector such that $\lambda_1 - \epsilon \leq \inner{A\xi}{\xi}$ for some $\epsilon \geq 0$. If $Q_1$ is the projection onto the eigenspace corresponding to $\lambda_1$, then \begin{align*}
	\norm{Q_1\xi}^2 \geq 1 - \frac{\epsilon}{\lambda_1-\lambda_2}.
	\end{align*}
\end{lemma}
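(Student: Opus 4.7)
The plan is to use the spectral decomposition of $A$ to reduce the inequality to a direct computation involving the squared components of $\xi$ in each eigenspace. Since $A$ is positive (hence self-adjoint) with distinct eigenvalues $\lambda_1>\lambda_2>\dots>\lambda_l\geq 0$, the spectral theorem gives a decomposition $A=\sum_{k=1}^l \lambda_k Q_k$, where $Q_1,\dots,Q_l$ are the mutually orthogonal projections onto the corresponding eigenspaces, satisfying $\sum_{k=1}^l Q_k = I_d$.

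Next, I would expand $\inner{A\xi}{\xi}$ in this decomposition. Writing $c_k\coloneqq\norm{Q_k\xi}^2$, the orthogonality of the $Q_k$ together with $\|\xi\|=1$ gives $\sum_{k=1}^l c_k = 1$, while self-adjointness of $A$ yields
\begin{equation*}
\inner{A\xi}{\xi} \;=\; \sum_{k=1}^l \lambda_k c_k.
\end{equation*}
The assumed lower bound $\lambda_1-\epsilon\leq \inner{A\xi}{\xi}$ then reads $\lambda_1-\epsilon\leq \lambda_1 c_1 + \sum_{k\geq 2}\lambda_k c_k$.

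To finish, I would bound the tail crudely by its largest coefficient: since $\lambda_k\leq \lambda_2$ for all $k\geq 2$ and $c_k\geq 0$, we get $\sum_{k\geq 2}\lambda_k c_k \leq \lambda_2\sum_{k\geq 2} c_k = \lambda_2(1-c_1)$. Substituting yields
\begin{equation*}
\lambda_1-\epsilon \;\leq\; \lambda_1 c_1 + \lambda_2(1-c_1) \;=\; \lambda_2 + (\lambda_1-\lambda_2)c_1,
\end{equation*}
and rearranging (dividing by $\lambda_1-\lambda_2>0$, which is strictly positive by the assumption of at least two distinct eigenvalues) gives $c_1 \geq 1 - \frac{\epsilon}{\lambda_1-\lambda_2}$, which is the desired inequality.

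There is no real obstacle here: the only subtlety is ensuring $\lambda_1-\lambda_2>0$, which is exactly why the hypothesis requires at least two distinct eigenvalues. The argument is an elementary variational/spectral estimate, of the same flavor as the standard proof that near-maximizers of a Rayleigh quotient concentrate on the top eigenspace.
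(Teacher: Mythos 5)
Your proof is correct and is essentially the same as the paper's: both use the spectral decomposition $A=\sum_k \lambda_k Q_k$, the identity $\sum_k \|Q_k\xi\|^2 = 1$, and the bound $\lambda_k \leq \lambda_2$ for $k\geq 2$ to isolate the term $\|Q_1\xi\|^2$; the only difference is a trivial algebraic rearrangement (you bound the tail $\sum_{k\geq 2}\lambda_k c_k$ from above, the paper groups as $\sum_{k\geq 2}(\lambda_1-\lambda_k)c_k$ and bounds from below).
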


\begin{proof}
By the Spectral Theorem, $A = \sum_{i=1}^l \lambda_i Q_i$ where $Q_i$ be the projection onto the eigenspace of $A$ corresponding to the eigenvalue $\lambda_i$. Then, $\inner{A\xi}{\xi} = \sum_{i=1}^l \lambda_i \norm{Q_i\xi}^2$, and hence \begin{align*}
\epsilon &\geq \lambda_1 - \sum_{i=1}^l \lambda_i \norm{Q_i\xi}^2  = \lambda_1 \sum_{i=1}^l \norm{Q_i\xi}^2 - \sum_{i=1}^l \lambda_i \norm{Q_i\xi}^2 = \sum_{i=2}^l (\lambda_1 - \lambda_i) \norm{Q_i\xi}^2 \\
&\geq (\lambda_1 - \lambda_2) \sum_{i=2}^l \norm{Q_i\xi}^2 = (\lambda_1-\lambda_2) \left(1 - \norm{Q_1\xi}^2 \right).
\end{align*}  Rearranging this inequality gives our desired inequality. \end{proof}

\begin{lemma}\label{lem:expectation-estimate}
Let $\psi\in \bb C^{d_A}\otimes \bb C^{d_B}$ be a unit vector and let $X_1,X_2\in \bb M_{d_A}$ and $Y_1,Y_2\in \bb M_{d_B}$. Let $\rho_A,\rho_B$ be the reduced density matrices of the density matrix $\rho = \psi\psi^*$. Then 
\begin{align}
|\inner{(X_1\otimes Y_1 - X_2\otimes Y_2)\psi}{\psi}| &\leq \|X_1-X_2\|_{\rho_A}\|Y_2^*\|_{\rho_B} + \|Y_1-Y_2\|_{\rho_B}\|X_1^*\|_{\rho_A}, \label{exp-est-1}
\intertext{and}
\|(X_1\otimes Y_1 - X_2\otimes Y_2)\psi\| &\leq \|X_1-X_2\|_{\rho_A} \|Y_2\| + \|Y_1-Y_2\|_{\rho_B}\|X_1\|. \label{exp-est-2}
\end{align}
\end{lemma}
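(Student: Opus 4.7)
The plan is to prove both inequalities by the standard ``add and subtract'' trick, writing
\begin{align*}
X_1\otimes Y_1 - X_2\otimes Y_2 = (X_1-X_2)\otimes Y_2 + X_1\otimes (Y_1-Y_2),
\end{align*}
and then estimating each summand separately using the Cauchy--Schwarz inequality on $\bb C^{d_A}\otimes \bb C^{d_B}$ together with the identities \eqref{partial-trace-1} and \eqref{partial-trace-2} that convert norms of vectors of the form $(Z\otimes I)\psi$ and $(I\otimes Z)\psi$ into the seminorms $\|Z\|_{\rho_A}$ and $\|Z\|_{\rho_B}$.

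For the first inequality \eqref{exp-est-1}, I would rewrite $((X_1-X_2)\otimes Y_2)\psi = ((X_1-X_2)\otimes I_{d_B})\cdot(I_{d_A}\otimes Y_2)\psi$ (using that the two tensor factors commute) so that the inner product against $\psi$ becomes $\inner{((X_1-X_2)\otimes I_{d_B})\psi}{(I_{d_A}\otimes Y_2^*)\psi}$, and similarly for the other summand, it would become $\inner{(I_{d_A}\otimes (Y_1-Y_2))\psi}{(X_1^*\otimes I_{d_B})\psi}$. Applying Cauchy--Schwarz to each, and then invoking \eqref{partial-trace-1}, \eqref{partial-trace-2} to identify the resulting norms with $\|X_1-X_2\|_{\rho_A}$, $\|Y_2^*\|_{\rho_B}$, $\|Y_1-Y_2\|_{\rho_B}$, and $\|X_1^*\|_{\rho_A}$, respectively, will yield the desired bound after the triangle inequality.

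For the second inequality \eqref{exp-est-2}, I would take norms directly on the two summands in the above decomposition. Writing $((X_1-X_2)\otimes Y_2)\psi = (I_{d_A}\otimes Y_2)((X_1-X_2)\otimes I_{d_B})\psi$ and using the operator norm bound $\|I_{d_A}\otimes Y_2\| = \|Y_2\|$, together with \eqref{partial-trace-1}, gives $\|((X_1-X_2)\otimes Y_2)\psi\|\leq \|X_1-X_2\|_{\rho_A}\|Y_2\|$. An analogous factorization of $(X_1\otimes (Y_1-Y_2))\psi = (X_1\otimes I_{d_B})(I_{d_A}\otimes (Y_1-Y_2))\psi$ combined with \eqref{partial-trace-2} handles the other summand, and the triangle inequality finishes the proof.

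There is no real obstacle here; the argument is purely bookkeeping. The only point worth flagging is being careful about which tensor factor absorbs the adjoint (to produce $\|Y_2^*\|_{\rho_B}$ and $\|X_1^*\|_{\rho_A}$ rather than the norms of $Y_2$ and $X_1$ themselves) in the first inequality, since this asymmetry arises directly from the choice to pair $(X_1-X_2)$ with $Y_2$ and $X_1$ with $(Y_1-Y_2)$ in the decomposition.
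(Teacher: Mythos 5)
Your proposal is correct and follows essentially the same route as the paper: the same decomposition $X_1\otimes Y_1 - X_2\otimes Y_2 = (X_1-X_2)\otimes Y_2 + X_1\otimes (Y_1-Y_2)$, the same repositioning of tensor-factor operators across the inner product to isolate $\|Y_2^*\|_{\rho_B}$ and $\|X_1^*\|_{\rho_A}$ via Cauchy--Schwarz and identities \eqref{partial-trace-1}--\eqref{partial-trace-2}, and the same factor-then-bound-by-operator-norm argument for the second inequality. No gaps; this is the paper's proof in all but typesetting.
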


\begin{proof}
Using the triangle inequality and Cauchy--Schwarz inequality, \begin{align*}
\abs{\inner{(X_1\otimes Y_1 - X_2\otimes Y_2)\psi}{\psi}} &\leq \abs{\inner{(I_{d_A}\otimes (Y_1-Y_2))\psi}{(X_1^*\otimes I_{d_B})\psi}}  \\
&\qquad + \abs{\inner{((X_1-X_2)\otimes I_{d_B})\psi}{(I_{d_A}\otimes Y_2^*)\psi}} \\
&\leq \|(I_{d_A}\otimes (Y_1-Y_2))\psi\|\|(X_1^*\otimes I_{d_B})\psi\| \\
&\qquad + \|((X_1-X_2)\otimes I_{d_B})\psi\|\|(I_{d_A}\otimes Y_2^*)\psi\| \\
&= \|Y_1-Y_2\|_{\rho_B}\|X_1^*\|_{\rho_A} + \|X_1-X_2\|_{\rho_A}\|Y_2^*\|_{\rho_B},
\end{align*} where we used identities \eqref{partial-trace-1} and \eqref{partial-trace-2} in the last equation.

The second claimed inequality is derived as follows: \begin{align*}
\|(X_1\otimes Y_1 - X_2\otimes Y_2)\psi\| &\leq \|(X_1\otimes (Y_1-Y_2))\psi\| + \|((X_1-X_2)\otimes Y_2)\psi\| \\
&\leq \|X_1\otimes I_{d_B}\|\|(I_{d_A}\otimes (Y_1-Y_2))\psi\| \\
&\qquad + \|I_{d_A}\otimes Y_2\|\|((X_1-X_2)\otimes I_{d_B})\psi\| \\
&= \|X_1\| \|Y_1-Y_2\|_{\rho_B} + \|Y_2\| \|X_1-X_2\|_{\rho_A},
\end{align*} where the first inequality follows from the triangle inequality, and the second follows from the definition of the operator norm. \end{proof}

\begin{lemma}\label{lem:near-vectors}
Let $\xi,\eta\in \bb C^d$ and let $P\in \bb M_d$ be a projection. Suppose $\left|\norm{\xi}^2 - \norm{\eta}^2 \right|\leq \epsilon_1$ and $\norm{\xi - P\eta} \leq \epsilon_2$ for some $\epsilon_1,\epsilon_2\geq 0$. Then \begin{align}
\norm{\xi - \eta} \leq \epsilon_2 + \sqrt{\epsilon_1 + \left(\norm{\xi}+\norm{\eta}\right)\epsilon_2}.
\end{align}
\end{lemma}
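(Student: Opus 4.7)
The plan is to bound $\|\xi - \eta\|$ by first inserting $P\eta$ via the triangle inequality and then estimating the orthogonal part $\|(I-P)\eta\|$ using the Pythagorean identity for projections.

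First I would write
\begin{equation*}
\|\xi - \eta\| \leq \|\xi - P\eta\| + \|P\eta - \eta\| \leq \epsilon_2 + \|(I-P)\eta\|,
\end{equation*}
so the task reduces to showing $\|(I-P)\eta\|^2 \leq \epsilon_1 + (\|\xi\|+\|\eta\|)\epsilon_2$. Since $P$ is a projection, $\eta = P\eta + (I-P)\eta$ is an orthogonal decomposition, so
\begin{equation*}
\|(I-P)\eta\|^2 = \|\eta\|^2 - \|P\eta\|^2.
\end{equation*}

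Next I would split this difference as
\begin{equation*}
\|\eta\|^2 - \|P\eta\|^2 = \bigl(\|\eta\|^2 - \|\xi\|^2\bigr) + \bigl(\|\xi\|^2 - \|P\eta\|^2\bigr).
\end{equation*}
The first bracket is at most $\epsilon_1$ by hypothesis. For the second bracket, factor as $(\|\xi\|-\|P\eta\|)(\|\xi\|+\|P\eta\|)$ and use the reverse triangle inequality together with $\|\xi - P\eta\| \leq \epsilon_2$ to get $|\|\xi\| - \|P\eta\|| \leq \epsilon_2$, along with $\|P\eta\| \leq \|\eta\|$, yielding
\begin{equation*}
\bigl|\|\xi\|^2 - \|P\eta\|^2\bigr| \leq \epsilon_2(\|\xi\| + \|\eta\|).
\end{equation*}

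Combining the two estimates gives $\|(I-P)\eta\|^2 \leq \epsilon_1 + (\|\xi\|+\|\eta\|)\epsilon_2$, and substituting into the initial triangle inequality yields the claimed bound. There is no real obstacle here; the only subtlety is remembering to bound $\|P\eta\|$ by $\|\eta\|$ (rather than by $\|\xi\| + \epsilon_2$) in the factorization step, which is what produces the clean form $\|\xi\|+\|\eta\|$ inside the square root.
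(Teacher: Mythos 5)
Your proof is correct and follows essentially the same route as the paper: triangle inequality via $P\eta$, the Pythagorean identity $\|(I-P)\eta\|^2 = \|\eta\|^2 - \|P\eta\|^2$, then splitting off $\|\xi\|^2$ and factoring $\|\xi\|^2 - \|P\eta\|^2$ with the reverse triangle inequality and $\|P\eta\| \leq \|\eta\|$. The only cosmetic difference is that you make the decomposition $(\|\eta\|^2 - \|\xi\|^2) + (\|\xi\|^2 - \|P\eta\|^2)$ explicit where the paper writes the same bound in one inequality chain.
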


\begin{proof}
Let $P^{\perp} = I_d - P$. Then $\norm{\eta}^2 = \norm{P\eta}^2 + \norm{P^{\perp}\eta}^2$, and thus \begin{align*}
\norm{P^{\perp}\eta}^2 &= \norm{\eta}^2 - \norm{P\eta}^2 \leq \epsilon_1 + \norm{\xi}^2 - \norm{P\eta}^2 =  \epsilon_1 + (\norm{\xi} + \norm{P\eta})(\norm{\xi} - \norm{P\eta}) \\
&\leq \epsilon_1 + (\norm{\xi} + \norm{P\eta})\norm{\xi - P\eta} \leq \epsilon_1 + (\norm{\xi} + \norm{\eta})\epsilon_2.
\end{align*} Thus, $\norm{P^{\perp}\eta} \leq \sqrt{\epsilon_1 + (\norm{\xi} + \norm{P\eta})\epsilon_2},$ using which in $\norm{\xi - \eta} \leq \norm{\xi - P\eta} + \norm{P^{\perp}\eta}$ proves the lemma. \end{proof}

Finally, we show that the correlation $\widetilde{p}_{n, x} \in C_q^s(n,2)$ as in Definition \ref{def:def-of-p*} robustly self-tests the strategy $\widetilde{\mathscr S}$ mentioned therein.

\begin{theorem}[Main result]\label{thm:robust-self-test}
Let $n\geq 3$ and $x\in \Lambda_n$. The synchronous correlation $\widetilde{p}_{n,x} \in C_q^s(n,2)$ induced by the strategy $\widetilde{\mathscr S}$ in Definition \ref{def:def-of-p*} self-tests $\widetilde{\mathscr S}$ robustly.
\end{theorem}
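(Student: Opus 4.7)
The plan is to imitate the proof of Theorem~\ref{thm:self-test} but with every exact identity replaced by an approximate analogue; carrying out the error bookkeeping then yields the robustness.  Let $\mathscr S = (\psi,\{E_v, I - E_v\},\{F_w, I - F_w\})$ be a strategy inducing a correlation $p$ with $\norm{p - \widetilde p_{n,x}}_1 \leq \delta$, and let $\rho_A, \rho_B$ be the reduced density matrices of $\rho = \psi\psi^*$.  First, I would combine Lemma~\ref{lem:approx-vectors}(d), Lemma~\ref{thm:almost-corr-nonsyn}, and Lemma~\ref{lem:st-app-tra} to verify that the tuple $(E_1,\ldots,E_n)$ satisfies the three hypotheses of Theorem~\ref{thm:Gowers-Hatami-N}: approximate projection relations (with error $2\sqrt{\delta}$), the approximate sum relation $\norm{xI_{d_A}-\sum_v E_v}_{\rho_A}=O(\delta^{1/4})$, and approximate traciality of $\tr_{\rho_A}$ on short words in the $E_v$.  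Theorem~\ref{thm:Gowers-Hatami-N} then produces, given $\epsilon>0$, a $\delta_0=\delta_0(\epsilon)$ and integers $s_A,s_B$ such that for $\delta<\delta_0$ there exist isometries $V_A\colon \bb C^{d_A}\to \bb C^d\otimes \bb C^{s_A}$ and $V_B\colon \bb C^{d_B}\to \bb C^d\otimes \bb C^{s_B}$ with
\begin{align*}
\norm{E_v - V_A^*(\widetilde P_v \otimes I_{s_A})V_A}_{\rho_A} \leq \epsilon_0 \quad \text{and} \quad \norm{F_w - V_B^*(\widetilde P_w^T \otimes I_{s_B})V_B}_{\rho_B} \leq \epsilon_0
\end{align*}
for all $v,w$, where $\epsilon_0\to 0$ as $\delta \to 0$.

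Next, to mimic the eigenvalue step of Theorem~\ref{thm:self-test}, I would locate the vector $\Psi=(V_A\otimes V_B)\psi$ inside the top eigenspace of $\widetilde N\otimes I$, where $\widetilde N=\sum_v \widetilde P_v\otimes \widetilde P_v^T$.  The hypothesis $\norm{p-\widetilde p_{n,x}}_1\leq \delta$ combined with Lemma~\ref{lem:approx-vectors}(a) gives $\sum_v \inner{(E_v\otimes F_v)\psi}{\psi}$ close to $x$; pushing this estimate through the Gowers--Hatami approximations via Lemma~\ref{lem:expectation-estimate} yields $\inner{(\widetilde N\otimes I)\Psi}{\Psi}\geq x-\epsilon_1$ for some $\epsilon_1\to 0$ as $\delta\to 0$.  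By Lemma~\ref{lem:max-evalue}, the eigenspace of $\widetilde N\otimes I$ for the eigenvalue $x$ is exactly $\bb C\varphi_d\otimes \bb C^{s_A}\otimes \bb C^{s_B}$, and its spectral gap is a positive constant depending only on $n$ and $x$.  Applying Lemma~\ref{lem:approx-evector} then shows that the orthogonal projection of $\Psi$ onto this eigenspace has the form $\varphi_d\otimes \psi'$ for a vector $\psi'\in \bb C^{s_A}\otimes \bb C^{s_B}$ with $\norm{\psi'}$ close to $1$; renormalizing (and invoking Lemma~\ref{lem:near-vectors} if needed) produces a unit vector $\psi_{\mathrm{junk}}$ with $\Psi\approx \varphi_d\otimes \psi_{\mathrm{junk}}$, establishing Estimate~\eqref{eq:approx-loc-dil-1}.

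Finally, I would establish Estimate~\eqref{eq:approx-loc-dil-2} by splitting
\begin{align*}
(V_A\otimes V_B)(E_v\otimes F_w)\psi - ((\widetilde P_v\otimes \widetilde P_w^T)\varphi_d)\otimes \psi_{\mathrm{junk}}
\end{align*}
via the triangle inequality into a piece comparing $(V_A E_v\otimes V_B F_w)\psi$ with $((\widetilde P_v\otimes I_{s_A})V_A\otimes (\widetilde P_w^T\otimes I_{s_B})V_B)\psi$ (handled by the obvious extension of Lemma~\ref{lem:expectation-estimate} to operators mapping into extended spaces) and a piece bounded by $\norm{\Psi-\varphi_d\otimes \psi_{\mathrm{junk}}}$ (since $\widetilde P_v$ and $\widetilde P_w^T$ are contractions).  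The main technical subtlety is that Theorem~\ref{thm:Gowers-Hatami-N} controls $E_v - V_A^*(\widetilde P_v\otimes I)V_A$ in $\rho_A$-norm, whereas one needs a bound on $V_A E_v - (\widetilde P_v\otimes I)V_A$; a direct computation gives
\begin{align*}
\norm{V_A E_v - (\widetilde P_v\otimes I_{s_A})V_A}_{\rho_A}^2 = \norm{E_v - G_v}_{\rho_A}^2 + \tr_{\rho_A}(G_v - G_v^2),
\end{align*}
where $G_v := V_A^*(\widetilde P_v\otimes I_{s_A})V_A$, and the extra nonnegative term $\tr_{\rho_A}(G_v-G_v^2)$ can be controlled by combining $G_v\approx_{\epsilon_0} E_v$ with $E_v\approx_{2\sqrt{\delta}} E_v^2$ (Lemma~\ref{lem:approx-vectors}(d)) via the Cauchy--Schwarz inequality for $\inner{\cdot}{\cdot}_{\rho_A}$.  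I expect the principal obstacles to be (i) the non-constructive dependence of $\delta$ on $\epsilon$ in Theorem~\ref{thm:Gowers-Hatami-N}, which propagates throughout, and (ii) the careful bookkeeping required in this last step to convert $\rho_A$-norm bounds on operator differences into norm bounds on vector differences in the extended space $\bb C^d\otimes \bb C^{s_A}\otimes \bb C^d\otimes \bb C^{s_B}$.
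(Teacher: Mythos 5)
Your proposal is correct and, for the first two stages (invoking Theorem~\ref{thm:Gowers-Hatami-N} via Lemmas~\ref{lem:approx-vectors}, \ref{lem:st-app-tra}, and \ref{thm:almost-corr-nonsyn}; then using Lemmas~\ref{lem:max-evalue} and \ref{lem:approx-evector} to locate $\psi_{\mathrm{junk}}$), follows the paper's argument essentially verbatim. Where you depart is in establishing Estimate~\eqref{eq:approx-loc-dil-2}. The paper never bounds $\norm{V_A E_v - (\widetilde P_v \otimes I_{s_A})V_A}_{\rho_A}$ at all: it instead applies Lemma~\ref{lem:expectation-estimate} to the operators $V^*P_{v,w}V = V_A^*(\widetilde P_v\otimes I)V_A\otimes V_B^*(\widetilde P_w^T\otimes I)V_B$ and $E_v\otimes F_w$ (both of which live in the small space), pushes through the isometry $V$, and then closes the gap between $VV^*P_{v,w}(\varphi_d\otimes\psi_{\mathrm{junk}})$ and $P_{v,w}(\varphi_d\otimes\psi_{\mathrm{junk}})$ with a norm-matching argument using the range projection $VV^*$ in Lemma~\ref{lem:near-vectors}. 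Your alternative — extending $\norm{\cdot}_{\rho_A}$ to rectangular operators and using the Pythagorean identity
\begin{align*}
\norm{V_A E_v - (\widetilde P_v\otimes I_{s_A})V_A}_{\rho_A}^2 = \norm{E_v - G_v}_{\rho_A}^2 + \tr_{\rho_A}(G_v - G_v^2)
\end{align*}
with $\tr_{\rho_A}(G_v - G_v^2)$ controlled via $\tr_{\rho_A}(G_v - G_v^2) = \tr_{\rho_A}(G_v - E_v) + \tr_{\rho_A}(E_v - E_v^2) + \tr_{\rho_A}(E_v^2 - G_v^2)$ and Cauchy--Schwarz on each term — is correct and is a standard operator-algebraic move. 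It is arguably more direct than the paper's route (you never need Lemma~\ref{lem:near-vectors}), at the minor cost of widening the $\norm{\cdot}_{\rho}$ notation to non-square operators, whereas the paper stays entirely inside $\mathbb M_{d_A}\otimes\mathbb M_{d_B}$ and makes the range-projection issue explicit via near-vectors. The resulting explicit error bounds differ (yours picks up terms of order $\sqrt{\epsilon'}$ from the Pythagorean cross term, the paper's picks up $\sqrt{\epsilon'+\sqrt{\epsilon'}+\beta}$ from Lemma~\ref{lem:near-vectors}), but since $\delta$ is chosen after $\epsilon$ this has no effect on the validity of the theorem.
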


\begin{proof}
Theorem \ref{thm:self-test} states that $\widetilde{p}_{n, x}$ self-tests $\widetilde{\mathscr S}$.  We must show that for all $\epsilon > 0$, there is a $\delta > 0$ such that if $p \in C_q(n, 2)$ is a correlation induced by a strategy $\mathscr S$ and $\| p -\widetilde{p}_{n, x} \|_1 \leq \delta$, then $\widetilde{\mathscr S}$ is a local $\epsilon$-dilation of $\mathscr S$.  Fix $\epsilon > 0$.

With the notation of Definition \ref{def:def-of-p*}, let \begin{align*}
\widetilde{N} = \sum_{v=1}^n \widetilde{P}_v \otimes \widetilde{P}_v^T \in \bb M_d \otimes \bb M_d
\end{align*} and note that $\widetilde{N}$ is positive.  By Lemma \ref{lem:max-evalue}, $x$ is the largest eigenvalue of $\widetilde{N}$, and the eigenspace corresponding to $x$ is the one-dimensional space spanned by the maximally entangled state $\varphi_d$.  In particular, $\widetilde{N}$ has more than one eigenvalue---let $\lambda_2 \geq 0$ be the second largest eigenvalue of $\widetilde{N}$.

Let $\epsilon' > 0$ be such that $\epsilon' < (x - \lambda_2)/(2n + 1)$ and 
\begin{align}\label{epsilon-prime}
  2\epsilon' + \beta + \sqrt{5 \epsilon' + 4 \sqrt{\epsilon'} + 2 \beta} \leq \epsilon \qquad \text{where} \qquad \beta \coloneqq \sqrt{\frac{2(2n+1)\epsilon'}{x - \lambda_2}}.
\end{align} Apply Theorem \ref{thm:Gowers-Hatami-N} to produce $m \in \mathbb N$ and $\delta' > 0$ such that Condition \eqref{GH-condition} holds with $\delta'$ and $\epsilon'$ in place of $\delta$ and $\epsilon$.  Then define
\begin{align}\label{delta}
  \delta \coloneqq \min\left\{ \epsilon', \left(\frac{\delta'}{n+1}\right)^4, \left(\frac{\delta'}{2m}\right)^2 \right\} > 0.
\end{align}

Let $p \in C_q(n, 2)$ be a correlation with $\|p - \widetilde{p}_{n, x}\|_1 \leq \delta$ and let \begin{align}
\mathscr{S} = \big(\psi\in \bb C^{d_A}\otimes \bb C^{d_B}, \{E_v, I_{d_A} - E_v\}_{v=1}^n, \{F_w, I_{d_B} - F_w\}_{w=1}^n\big)
\end{align} be a strategy inducing $p$.  We will show $\widetilde{\mathscr S}$ is a local $\epsilon$-dilation of $\mathscr S$.

Let $\rho = \psi \psi^* \in \bb M_{d_A} \otimes \bb M_{d_B}$ and let $\rho_A \in \bb M_{d_A}$ and $\rho_B \in \bb M_{d_B}$ be the reduced density matrices of $\rho$.  By Lemma \ref{thm:almost-corr-nonsyn}, $(E_1, \ldots, E_n)$ is a $(C\delta^\frac14, \rho_A)$-representation of the relations given in \eqref{eq:relations}, where
\[ C \coloneqq \sqrt{n^2 + (1+2x)\sqrt{\delta}}. \]
As $x \leq n$, we have $\delta \leq \epsilon' < (x - \lambda_2)/(2n + 1) < \frac12$.  Now, since $x \leq n$ and $\delta \leq 1$, we have
\[ C \leq \sqrt{n^2 + 2n + 1} = n + 1. \]
Therefore, $C \delta^\frac14 \leq \delta'$ by the choice of $\delta$.  Hence $(E_1, \ldots, E_n)$ is a $(\delta', \rho_A)$-representation of the relations given in \eqref{eq:relations}.  Therefore, (a) and (b) of Condition \ref{GH-condition} hold with $(\delta', \rho_A, d_A)$ in place of $(\delta, \rho, r)$.  Since $2m\sqrt{\delta} \leq \delta'$, Lemma \ref{lem:st-app-tra} implies that part (c) of Condition \ref{GH-condition} also holds.

The conclusion of Condition \ref{GH-condition} provides $r_A \in \mathbb N$ and an isometry $V_A \colon \bb C^{d_A} \rightarrow \bb C^d \otimes \bb C^{r_A}$ with
\begin{equation}\label{compression-1}
  \| E_v - V_A^* (\widetilde{P}_v \otimes I_{r_A} )V_A \|_{\rho_A}\leq \epsilon' \qquad \text{for all $v \in [n]$}.
\end{equation}
Similarly, there are $r_B \in \mathbb N$ and an isometry $V_B \colon \bb C^{d_B} \rightarrow \bb C^d \otimes \bb C^{r_B}$ such that
\begin{equation}\label{compression-2}
  \| F_v - V_B^* (\widetilde{P}_v^T \otimes I_{r_B}) V_B \|_{\rho_B} \leq \epsilon' \qquad \text{for all $v \in [n]$}.
\end{equation}
We work towards constructing a quantum state $\psi_\mathrm{junk} \in \bb C^{r_A} \otimes \bb C^{r_B}$ such that
\begin{equation}\label{junk-state-estimate}
  (V_A \otimes V_B) \psi \approx_\epsilon \varphi_d \otimes \psi_\mathrm{junk}.
\end{equation}

Note that Inequalities \eqref{compression-1} and \eqref{compression-2} together with Lemma \ref{lem:expectation-estimate} imply
\begin{align}
\left| \inner{\big(E_v\otimes F_w - (V_A\otimes V_B)^*(\widetilde{P}_v\otimes\widetilde{P}_w^T\otimes I_{r_A}\otimes I_{r_B})(V_A\otimes V_B)\big)\psi}{\psi}\right| \notag\\
\leq \|E_v - V_A^*(\widetilde{P}_v\otimes I_{r_A})V_A\|_{\rho_A} + \|F_w - V_B^*(\widetilde{P}_w^T\otimes I_{r_B})V_B\|_{\rho_B} \leq 2\epsilon'. \label{another-estimate}
\end{align}
Let $N = \sum_{v = 1}^n E_v \otimes F_v \in \bb M_{d_A} \otimes \bb M_{d_B}$.  Taking $v = w$ in Inequality \eqref{another-estimate} and summing over $v \in [n]$ yields
\begin{equation}\label{close-to-N-psi-psi}
\left| \inner{N \psi}{\psi} - \inner{(\widetilde{N} \otimes I_{r_A} \otimes I_{r_B})(V_A \otimes V_B)\big) \psi}{(V_A \otimes V_B) \psi} \right| \leq 2n \epsilon'.
\end{equation}
Note that $\inner{N\psi}{\psi} = \sum_{v=1}^n p(1, 1 |v, v)$.  Since $\|p - \widetilde{p}_{n, x}\|_1 \leq \delta \leq \epsilon'$ and $\sum_{v=1}^n \widetilde{p}_{n, x}(1, 1|v, v) = x$, Inequality \ref{close-to-N-psi-psi} implies
\begin{equation}\label{lower-bound}
   \inner{(\widetilde{N} \otimes I_{r_A} \otimes I_{r_B})(V_A \otimes V_B)\big) \psi}{(V_A \otimes V_B) \psi} \geq x - (2n+1)\epsilon'.
\end{equation}

Recall that by Lemma \ref{lem:max-evalue}, $x$ is the largest eigenvalue of $\widetilde{N}$, and hence also of $\widetilde{N} \otimes I_{r_A} \otimes I_{r_B}$.  Let $Q$ be the projection on to the eigenspace of $\widetilde{N} \otimes I_{r_A} \otimes I_{r_B}$ corresponding to $x$.  Equation \eqref{lower-bound} and Lemma \ref{lem:approx-evector} yield
\begin{equation}\label{alpha-lower-bound}
  \alpha\coloneqq \|Q (V_A \otimes V_B) \psi\| \geq \left(1 - \frac{(2n+1) \epsilon'}{x - \lambda_2} \right)^{1/2}> 0,
\end{equation}
where the last inequality holds by the choice of $\epsilon'$.  Then $\alpha^{-1} Q (V_A \otimes V_B) \psi$ is a unit vector in the eigenspace of $\widetilde{N} \otimes I_{r_A} \otimes I_{r_B}$ corresponding to $x$.  As the eigenspace of $\widetilde{N}$ is spanned by $\varphi_d$, there is a unique quantum state $\psi_\mathrm{junk} \in \bb C^{r_A} \otimes \bb C^{r_B}$ such that $Q(V_A \otimes V_B) \psi = \alpha \varphi_d \otimes \psi_\mathrm{junk}$.

To simplify notation, let $\psi' = (V_A \otimes V_B) \psi$, and note that
\begin{align*}
  \big\| \psi' - \|Q\psi'\|^{-1} Q\psi' \big\|^2 &= \|(1 - Q) \psi'\|^2 + \big|1 - \|Q \psi'\|^{-1}\big|^2 \|Q \psi'\|^2 \\
  &= 1 - \|Q \psi'\|^2 + (\|Q\psi'\|^{-1} - 1)^2 \|Q\psi'\|^2 \\
  &= 1 - \|Q \psi'\|^2 + 1 - 2 \|Q \psi'\| + \|Q\psi'\|^2 \\
  &= 2 (1 - \|Q \psi'\|).
\end{align*} Therefore,
\[ \| (V_A \otimes V_B) \psi - \varphi_d \otimes \psi_\mathrm{junk} \| \leq \sqrt{2 (1 - \alpha)} \leq \sqrt{2(1 - \alpha^2)}. \]
Combining this inequality with Inequality \eqref{alpha-lower-bound},
\begin{equation}\label{junk-state-estimate-2}
  \| (V_A \otimes V_B) \psi - \varphi_d \otimes \psi_\mathrm{junk} \| \leq \sqrt{\frac{2(2n+1)\epsilon'}{x - \lambda_2}} = \beta.
\end{equation}
In particular, Estimate \eqref{junk-state-estimate} holds.

We now work to show
\begin{align}\label{key-estimate}
(V_A \otimes V_B) (E_v \otimes F_w) \psi \approx_\epsilon \big((\widetilde{P}_v \otimes \widetilde{P}_w^T) \varphi_d \big) \otimes \psi_\mathrm{junk}  
\end{align}
for all $v, w \in [n]$.  Fix $v, w \in [n]$ and, to declutter the notation, define
\[  V \coloneqq V_A \otimes V_B \qquad \text{and} \qquad P_{v, w} \coloneqq \widetilde{P}_v \otimes \widetilde{P}_w^T \otimes I_{r_A} \otimes I_{r_B}. \]

By Inequalities \eqref{exp-est-2}, \eqref{compression-1}, and \eqref{compression-2}, we have
\[ \|V^* P_{v, w} V \psi - (E_v \otimes F_w)\psi \| \leq \|V_A^* (\widetilde{P}_v \otimes I_{r_A}) V_A - E_v\|_{\rho_A} + \|V_B^* (\widetilde{P}_w^T \otimes I_{r_B}) V_B - F_w\|_{\rho_B} \leq 2 \epsilon'. \]
Since $V$ is an isometry, this implies
\[ \|VV^* P_{v, w} V \psi - V(E_v \otimes F_w)\psi \| \leq 2 \epsilon'. \]
Combining this with Inequality \eqref{junk-state-estimate-2} and the triangle inequality, we have
\begin{equation}\label{key-est-part-1}
\|V(E_v \otimes F_w) \psi - VV^*P_{v, w} (\varphi_d \otimes \psi_\mathrm{junk})\| \leq 2 \epsilon' + \beta.
\end{equation}

Note that
\begin{equation}\label{norm-computation-1}
\| P_{v, w}(\varphi_d \otimes \psi_\mathrm{junk})\|^2 = \inner{\left(\widetilde{P}_v \otimes \widetilde{P}_w^T\right)\varphi_d}{\varphi_d} = \widetilde{p}_{n, x}(1, 1|v, v).
\end{equation}
Further, since $p(1, 1|v, w) = \inner{(E_v \otimes F_w) \psi}{\psi}$, Inequality \eqref{exp-est-1} and parts (d) and (e) of Lemma \ref{lem:approx-vectors} imply
\begin{equation}\label{norm-computation-2}
  \left|\| (E_v \otimes F_w) \psi\|^2 - p(1, 1 |v, w)\right| \leq \|E_v - E_v^2\|_{\rho_A} + \|F_w - F_w^2\|_{\rho_B} \leq 4 \sqrt{\delta} \leq 4 \sqrt{\epsilon'}. 
\end{equation}
Since $V$ is an isometry and $\|p - \widetilde{p}_{n, x}\|_1 \leq \delta \leq \epsilon'$, Equation \eqref{norm-computation-1} and Inequality \eqref{norm-computation-2} imply
\begin{equation}\label{key-est-part-2}
  \left| \|V(E_v \otimes F_w)\psi\|^2 - \|P_{v, w}(\varphi_d \otimes \psi_\mathrm{junk}) \|^2 \right| \leq \epsilon' + 4 \sqrt{\epsilon'}.
\end{equation}

As $VV^*$ is a projection, Inequalities \eqref{key-est-part-1} and \eqref{key-est-part-2} and Lemma \ref{lem:near-vectors} imply
\[ \| V(E_v \otimes F_w) \psi - P_{v, w} (\varphi_d \otimes \psi_\mathrm{junk}) \| \leq 2 \epsilon' + \beta + \sqrt{5 \epsilon' + 4 \sqrt{\epsilon'} + 2 \beta} \leq \epsilon, \]
and this proves Estimate \eqref{key-estimate}.  Similarly, for each $v, w \in [n]$, using the estimates
\begin{align*}
  \big\| (I_{d_A} - E_v) - V_A^* \big( (I_{d_A} - \widetilde{P}_v) \otimes I_{r_A}\big) V_A \big\|_{\rho_A} &\leq \epsilon' \quad \text{and} \\
  \big\|(I_{d_B} - F_w) - V_B^*\big((I_{d_B} - \widetilde{P}_w^T) \otimes I_{r_B}\big) V_B \big\|_{\rho_B} &\leq \epsilon',
\end{align*}
which follow from Inequalities \eqref{compression-1} and \eqref{compression-2}, one can show
\begin{align*}
  (V_A \otimes V_B)\big(E_v \otimes (I_{d_B} - F_w)\big) \psi &\approx_\epsilon \Big(\big(\widetilde{P}_v \otimes (I_d - \widetilde{P}_w^T)\big) \varphi_d \Big) \otimes \psi_\mathrm{junk}, \\
  (V_A \otimes V_B)\big((I_{d_A} - E_v) \otimes F_w\big) \psi &\approx_\epsilon \Big(\big((I_d - \widetilde{P}_v) \otimes \widetilde{P}_w^T\big) \varphi_d \Big) \otimes \psi_\mathrm{junk}, \text{ and} \\
  (V_A \otimes V_B)\big((I_{d_A} - E_v) \otimes (I_{d_B} - F_w)\big) \psi &\approx_\epsilon \Big(\big((I_d - \widetilde{P}_v) \otimes (I_d - \widetilde{P}_w^T)\big)\varphi_d\big) \otimes \psi_\mathrm{junk}.
\end{align*}
These last three estimates together with Estimates \eqref{junk-state-estimate} and \eqref{key-estimate} show that $\widetilde{\mathscr{S}}$ is a local $\epsilon$-dilation of $\mathscr{S}$.
\end{proof}

\section{Implications}\label{sec:implications}
We discuss some corollaries which result from the specific case of four projections adding up to scalar times the identity.

\begin{cor}\label{cor:n=4}
Let $n = 4$. For any $k\in \bb N$, there exists four rank $k$ projections $\widetilde P_{k,1},\widetilde P_{k,2}, \widetilde P_{k,3}, \widetilde P_{k,4}$ in $\bb M_{2k+1}$ such that $\widetilde P_{k,1} + \widetilde P_{k,2} + \widetilde P_{k,3} + \widetilde P_{k,4} = \frac{4k}{2k+1}I_{2k+1}$. Each of the following quantum strategies  \begin{align*}
\widetilde{\mathscr{S}}_k = (\varphi_{2k+1}, \{ \widetilde P_{k,v}, I_{2k+1} - \widetilde P_{k,v}\}_{v=1}^4, \{ \widetilde P_{k,w}^T, I_{2k+1} - \widetilde P_{k,w}^T \}_{w=1}^4),
\end{align*} can be robustly self-tested from the correlations $\widetilde{p}_{4,k}$ that each strategy induces.
\end{cor}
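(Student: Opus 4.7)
The proof plan is essentially an application of the main theorem (Theorem \ref{thm:robust-self-test}) to the specific element $x_k \in \Lambda_4$ corresponding to dimension $2k+1$. First, I would compute the recursive sequence $\Lambda_4 = \{x_l\}_{l=0}^\infty$ defined in Theorem \ref{thm:KRS} with $n=4$. A quick induction on the recursion $x_l = 1 + \frac{1}{3 - x_{l-1}}$ starting from $x_0 = 0$ yields the closed form $x_l = \frac{4l}{2l+1}$; the induction step is
\begin{equation*}
1 + \frac{1}{3 - \frac{4(l-1)}{2l-1}} = 1 + \frac{2l-1}{2l+1} = \frac{4l}{2l+1}.
\end{equation*}
Since $2k+1$ is odd and $4k = 2(2k+1) - 2$, we have $\gcd(4k, 2k+1) = \gcd(2, 2k+1) = 1$, so the fraction $x_k = \frac{4k}{2k+1}$ is already in lowest terms with $b = 4k$ and $d = 2k+1$.

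Next, Theorem \ref{thm:KRS} guarantees the existence (up to unitary equivalence) of a unique irreducible representation of $\mathscr P_{4, x_k}$ on $\bb C^{2k+1}$, which produces four projections $\widetilde P_{k,1}, \widetilde P_{k,2}, \widetilde P_{k,3}, \widetilde P_{k,4} \in \bb M_{2k+1}$ satisfying $\sum_{v=1}^4 \widetilde P_{k,v} = \frac{4k}{2k+1} I_{2k+1}$. Applying Lemma \ref{prop:const-trace} with $n = 4$ and $d = 2k+1$, the normalized trace of each projection equals $\frac{x_k}{4} = \frac{k}{2k+1}$, so its unnormalized trace equals $k$. Since the rank of a projection coincides with its trace, every $\widetilde P_{k,v}$ has rank exactly $k$, as claimed.

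Finally, the strategy $\widetilde{\mathscr S}_k$ in the corollary is precisely the canonical strategy attached to $\widetilde p_{4, x_k}$ as in Definition \ref{def:def-of-p*}. Theorem \ref{thm:robust-self-test} applied with $n = 4$ and $x = x_k \in \Lambda_4$ immediately gives that $\widetilde p_{4, x_k}$ robustly self-tests $\widetilde{\mathscr S}_k$. This completes the argument.

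I do not anticipate any genuine obstacle: the entire content of the corollary has been set up to follow by specializing the general framework. The one place requiring a small computation is the explicit identification $x_k = \frac{4k}{2k+1}$, but this is straightforward induction; after that, the rank count is automatic from Lemma \ref{prop:const-trace}, and robustness is inherited verbatim from Theorem \ref{thm:robust-self-test}. (One minor point of notation: the correlation that the corollary calls $\widetilde p_{4,k}$ should be read as $\widetilde p_{4, x_k}$ in the notation of Definition \ref{def:def-of-p*}.)
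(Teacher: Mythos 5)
Your proof is correct and follows the same route as the paper's: compute $\Lambda_4 = \{\tfrac{4k}{2k+1}\}_{k\geq 0}$ from the recursion, check the fraction is in lowest terms so $d = 2k+1$, invoke Theorem \ref{thm:KRS} for existence, use Lemma \ref{prop:const-trace} to get rank $k$, and conclude via Theorem \ref{thm:robust-self-test}. You simply spell out the induction and gcd computation that the paper leaves as ``readily computed,'' and your notational remark about $\widetilde p_{4,k}$ versus $\widetilde p_{4,x_k}$ is a fair observation.
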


\begin{proof}
By the recurrence relation given in Theorem \ref{thm:KRS} it is readily computed that $\Lambda_4 = \left\lbrace \frac{4k}{2k+1}\right\rbrace_{k = 0}^{\infty}$. Observe that each fraction $\frac{4k}{2k+1}$ is already in lowest terms. Hence by Theorem \ref{thm:KRS} there exist four projections $\widetilde P_{k,1},\widetilde P_{k,2}, \widetilde P_{k,3}, \widetilde P_{k,4}$ in $\bb M_{2k+1}$ such that $\widetilde P_{k,1} + \widetilde P_{k,2} + \widetilde P_{k,3} + \widetilde P_{k,4} = \frac{4k}{2k+1}I_{2k+1}$. By Lemma \ref{prop:const-trace}, we observe that each of these projections is of rank $k$. Rest of the corollary then follows from Theorem \ref{thm:robust-self-test}.
\end{proof}

The following two observations immediately follow from Corollary \ref{cor:n=4}.

\begin{cor}\label{cor:robust-self-test-max-ent}
The maximally entangled state $\varphi_d$ in each odd dimension $d \geq 3$ can be robustly self-tested by quantum correlations with four inputs and two outputs. 
\end{cor}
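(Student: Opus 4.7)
The plan is to observe that this corollary is essentially an immediate specialization of Corollary \ref{cor:n=4}, extracting only the statement about the shared state in the robustly self-tested strategy $\widetilde{\mathscr S}_k$.

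Given an odd integer $d \geq 3$, I would write $d = 2k+1$ with $k = (d-1)/2 \in \bb N$. Corollary \ref{cor:n=4} then supplies four rank-$k$ projections $\widetilde P_{k,1},\ldots,\widetilde P_{k,4} \in \bb M_d$ summing to $\tfrac{4k}{d}\,I_d$, along with the canonical strategy
\[
\widetilde{\mathscr S}_k = \big(\varphi_d,\ \{\widetilde P_{k,v}, I_d - \widetilde P_{k,v}\}_{v=1}^{4},\ \{\widetilde P_{k,w}^T, I_d - \widetilde P_{k,w}^T\}_{w=1}^{4}\big),
\]
which is robustly self-tested by the correlation $\widetilde p_{4,k} \in C_q(4,2)$ that it induces.

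To conclude, I would unpack the definition of robust self-testing: by Definition \ref{def:robust-self-test}, for every $\epsilon \geq 0$ there exists $\delta \geq 0$ such that any strategy $\mathscr S = (\psi, \{E_{v,i}\}, \{F_{w,j}\})$ inducing a correlation $p$ with $\|p - \widetilde p_{4,k}\|_1 \leq \delta$ admits isometries $V_A, V_B$ and a junk state $\psi_{\mathrm{junk}}$ for which $(V_A \otimes V_B)\psi \approx_\epsilon \varphi_d \otimes \psi_{\mathrm{junk}}$ holds (this is Estimate \eqref{eq:approx-loc-dil-1} in the local $\epsilon$-dilation). Since $\widetilde{\mathscr S}_k$ uses precisely $\varphi_d$ as its shared state and uses four binary POVMs, this is exactly the statement that $\varphi_d$ is robustly self-tested by a correlation with four inputs and two outputs.

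There is no real obstacle here beyond invoking the existence part of Corollary \ref{cor:n=4} and noting that $\{2k+1 : k \geq 1\}$ enumerates the odd integers $d \geq 3$; the main work has already been done in Theorem \ref{thm:robust-self-test}.
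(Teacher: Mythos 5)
Your proof is correct and matches the paper's approach exactly: the paper derives this corollary as an immediate consequence of Corollary \ref{cor:n=4} by writing $d = 2k+1$ and observing that the canonical strategy $\widetilde{\mathscr{S}}_k$ uses the shared state $\varphi_d$ with four binary measurements per party. Your unpacking of Definition \ref{def:robust-self-test} and Estimate \eqref{eq:approx-loc-dil-1} is a slightly more explicit spelling-out of what the paper leaves implicit, but the underlying argument is the same.
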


\begin{cor}
Given any natural number $k$ there exist four projections of rank $k$ which can be robustly self-tested by quantum correlations with four inputs and two outputs.
\end{cor}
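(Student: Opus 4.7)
The plan is to derive this directly from Corollary \ref{cor:n=4}, which packages the construction; the statement here is essentially its ``rank'' content extracted and highlighted. The first step I would carry out is to identify the relevant scalars in $\Lambda_4$. Starting from $x_0 = 0$ and the recursion $x_l = 1 + \frac{1}{3 - x_{l-1}}$ from Theorem \ref{thm:KRS}, a straightforward induction shows $x_l = \frac{4l}{2l+1}$. Since $2l+1$ is odd, $\gcd(4l, 2l+1) = \gcd(l, 2l+1) = 1$, so each $x_l$ is already in lowest terms with denominator $d = 2l + 1$. Applying Theorem \ref{thm:KRS} then yields, for every $k \in \mathbb{N}$, four projections $\widetilde P_{k,1}, \ldots, \widetilde P_{k,4} \in \mathbb{M}_{2k+1}$ with $\widetilde P_{k,1} + \cdots + \widetilde P_{k,4} = \frac{4k}{2k+1} I_{2k+1}$.

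Next I would check that each of these projections has rank exactly $k$. By Lemma \ref{prop:const-trace}, the normalized trace of every $\widetilde P_{k,v}$ equals $\frac{x}{n} = \frac{k}{2k+1}$, so the ordinary trace equals $k$; for a projection, rank coincides with trace, so $\rank(\widetilde P_{k,v}) = k$ for each $v$. This is the only arithmetic point in the argument and it is immediate from the normalization of $x$.

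Finally, I would invoke the robust self-testing theorem. Let $\widetilde{\mathscr S}_k$ be the canonical strategy built from $\widetilde P_{k,1}, \ldots, \widetilde P_{k,4}$ via Definition \ref{def:def-of-p*}; it uses the maximally entangled state $\varphi_{2k+1}$ and four two-outcome projective measurements on each side, so the induced correlation $\widetilde p_{4, \,4k/(2k+1)} \in C_q^s(4, 2)$ has exactly four inputs and two outputs. Theorem \ref{thm:robust-self-test} then asserts that $\widetilde p_{4,\, 4k/(2k+1)}$ robustly self-tests $\widetilde{\mathscr S}_k$, and in particular robustly self-tests the four rank-$k$ projections it contains. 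I do not anticipate any genuine obstacle here: all the heavy lifting (the analogue of Gowers--Hatami in Theorem \ref{thm:Gowers-Hatami-N} and the subsequent dilation argument in Theorem \ref{thm:robust-self-test}) is already done; what remains is only to verify that the $n = 4$ case produces a projection of each prescribed rank, which the trace computation above handles.
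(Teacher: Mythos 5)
Your argument is correct and matches the paper's proof in substance: the paper proves this as an immediate consequence of its Corollary \ref{cor:n=4}, whose proof proceeds exactly as you describe (compute $\Lambda_4 = \{4k/(2k+1)\}_{k\geq 0}$ from the recursion, note each is in lowest terms with denominator $2k+1$, invoke Theorem \ref{thm:KRS} for existence, use Lemma \ref{prop:const-trace} to pin down the rank as $k$, and apply Theorem \ref{thm:robust-self-test}). No gaps.
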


	In general, a quantum state is represented by a density matrix $\rho\in \bb M_d$. This reduces to the usual vector state formulation if $\rho$ is rank one. Thus, in general, a description of a quantum strategy is given by \begin{align*}
		\mathscr{S} = \left(\rho\in \bb M_{d_A}\otimes \bb M_{d_B},\{E_{v,i}:v\in [n_A], i\in [k_A]\},\{F_{w,j}:w\in [k_B], j\in [k_B]\} \right),
	\end{align*} and its induced quantum correlation is given by $p(i,j|v,w) = \tr((E_{v,i}\otimes F_{w,j})\rho)$. One can show that this quantum correlation $p\in C_q(n_A,n_B,k_A,k_B)$ induced by $\mathscr{S}$ can also be obtained by a quantum strategy $\mathscr S'$ where the quantum state is now given by a unit vector by using the notion of \emph{purification} \cite[Section 2.5]{NC}.
	
	One can formulate a definition of self-testing which takes this more general view of quantum states in to account. Indeed, this is how self-testing is defined in \cite[Definition C.1]{Goh18}. It is then proved there \cite[Proposition C.1]{Goh18} that if a quantum correlation $\widetilde p\in C_q(n_A,n_B,k_A,k_B)$ self-tests a quantum strategy, then $\widetilde{p}$ must be an extreme point of $C_q(n_A,n_B,k_A,k_B)$. 
	
	While in this article we don't work\footnote{Also, we do not know whether Definition \ref{def:self-test} is equivalent to the definition given in \cite{Goh18}.} with the definition in \cite{Goh18}, one can show that all of our results can be suitably generalized. More precisely, let $n\geq 3$ and $x\in \Lambda_n$ with $x = \frac{b}{d}$ in lowest terms. Let $\widetilde{p}_{n,x}\in C_q^s(n,2)$ and $\widetilde{\mathscr{S}}$ be as in Definition \ref{def:def-of-p*}. One can show that $\widetilde{p}_{n,x}$ self-tests the strategy $\widetilde{\mathscr{S}}$ in the following sense as well. Consider a quantum strategy $(\rho\in \bb M_{d_A}\otimes \bb M_{d_B},\{E_v,I_{d_A} - E_v\}_{v=1}^n, \{F_w, I_{d_B} - F_w\}_{w=1}^n)$ which also induces $\widetilde{p}_{n,x}$ and where $\rho$ is a density matrix. Then, there exist isometries $V_A\colon \bb C^{d_A} \to \bb C^{d} \otimes \cl K_A$ and $V_B\colon \bb C^{d_B} \to \bb C^{d} \otimes \cl K_B$ for some finite-dimensional Hilbert spaces $\cl K_A,\cl K_B$, and a density matrix $\rho_{\mathrm{junk}}\in \mathfrak{B}(\cl K_A\otimes \cl K_B)$ such that for all $v,w$, we have \begin{align}
		(V_A\otimes V_B)((E_v\otimes F_w)\rho)(V_A\otimes V_B)^* = ((\widetilde{P}_{v}\otimes \widetilde{P}_{w}^T)\widetilde{\rho}) \otimes \rho_{\mathrm{junk}},
	\end{align} where $\widetilde{\rho} = \varphi_d\varphi_d^*$. With this result, it follows that for all $n\geq 3$ and $x\in \Lambda_n$, the quantum correlation $\widetilde{p}_{n,x}$ is an extreme point of $C_q(n,2)$.
	
\paragraph{\bf Acknowledgements.}  L.~M. and J.~P. are supported by the Villum Fonden via the QMATH Centre of Excellence (Grant No.~10059). C.~S. is partially supported by NSF Grant DMS-2000129. We thank the QIP 2021 referees for their feedback and Honghao Fu for clarifying his results in \cite{Fu19}.

\appendix
\section{A short introduction to C*-algebras}\label{appendixA}
In this appendix we give some overview of C$^*$-algebras that we have used in this article. Some general references on this subject are \cite{Murphy,CBMOA}.

A \emph{C$^*$-algebra} $\mathfrak{A}$ is a subset of the set of bounded linear operators $\bh$ for some Hilbert space $\cl H$ satisfying the following properties: \begin{enumerate}
\item $\mathfrak{A}$ is a unital (complex) algebra, that is, for $S,T\in \mathfrak{A}$ and $\lambda\in \bb C$, we have $S+\lambda T\in \mathfrak{A}$ and $\lambda S$, and there is an element $1_{\mathfrak{A}}\in \mathfrak{A}$ which serves as the unit element for $\mathfrak{A}$,
\item $\mathfrak{A}$ is $*$-closed, that is, if $S\in \mathfrak{A}$, then its adjoint $S^*\in \mathfrak{A}$ as well, and,
\item $\mathfrak{A}$ is closed in the operator norm.
\end{enumerate} Sometimes, such a C$^*$-algebra is also called a \emph{concrete C$^*$-algebra} or a \emph{C$^*$-algebra of operators} to distinguish it from an abstract C$^*$-algebra which we describe next.

Let $\mathfrak{A}$ be a unital Banach algebra, that is, $\mathfrak{A}$ is an algebra equipped with a norm $\norm{.}$ satisfying $\norm{ab}\leq \norm{a}\norm{b}$ for all $a,b\in \mathfrak{A}$, and $\mathfrak{A}$ is complete (every Cauchy sequence converges). Assume that $\mathfrak{A}$ has an involution operation $*:\mathfrak{A}\to \mathfrak{A}$ which satisfies $1_{\mathfrak{A}}^* = 1_{\mathfrak{A}}$ and \begin{align*}
(a+\lambda b)^* = a^* + \overline{\lambda}b^*, \qquad (ab)^* = b^*a^*, \qquad \qquad (a^*)^* = a,
\end{align*} for all $a,b\in \mathfrak{A}$ and $\lambda \in \bb C$. We say that $\mathfrak{A}$ is an \emph{abstract C$^*$-algebra} if $\norm{a^*a} = \norm{a}^2$ for all $a\in \mathfrak{A}$. 

A linear map $\pi:\mathfrak{A} \to \mathfrak{B}$ between two abstract C$^*$-algebras is called a \emph{$*$-homomorphism} if it is unital $\pi(1_{\mathfrak{A}}) = 1_{\mathfrak{B}}$, and $\pi(ab) = \pi(a)\pi(b)$ and $\pi(a^*) = \pi(a)^*$ for all $a,b\in \mathfrak{A}$.

Observe that $\bh$ with the operator norm is an abstract C$^*$-algebra and hence every concrete C$^*$-algebra is an abstract C$^*$-algebra. Conversely, every abstract C$^*$-algebra can be identified with a concrete C$^*$-algebra on some Hilbert space: 

\begin{theorem}[Gelfand-Naimark-Segal]
Let $\mathfrak{A}$ be an abstract C$^*$-algebra. Then, there is some Hilbert space $\cl H$ and an isometric $*$-homomorphism $\pi:\mathfrak{A}\to \bh$.
\end{theorem}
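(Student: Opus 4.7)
The plan is to carry out the classical Gelfand--Naimark--Segal (GNS) construction. The overarching strategy is: for each state $\phi$ on $\mathfrak{A}$, build a cyclic $*$-representation $\pi_\phi$ on a Hilbert space $\mathcal{H}_\phi$, then take a direct sum of these representations over a separating family of states to obtain an injective $*$-homomorphism, and finally upgrade injectivity to isometry using the C$^*$-identity.

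First I would construct the GNS representation for a single state. Given a state $\phi \colon \mathfrak{A} \to \mathbb{C}$ (a positive linear functional with $\phi(1_{\mathfrak{A}})=1$), define the sesquilinear form $\langle a, b\rangle_\phi \coloneqq \phi(b^*a)$ on $\mathfrak{A}$. The Cauchy--Schwarz inequality applied to this form shows that $N_\phi \coloneqq \{a \in \mathfrak{A} : \phi(a^*a)=0\}$ is a left ideal, so $\mathfrak{A}/N_\phi$ carries a genuine inner product. Let $\mathcal{H}_\phi$ be the Hilbert space completion. Defining $\pi_\phi(a)(b + N_\phi) \coloneqq ab + N_\phi$ gives a well-defined map on $\mathfrak{A}/N_\phi$; boundedness on a dense subspace follows from the operator inequality $a^*(\|a\|^2 1_{\mathfrak{A}} - a^*a)a \geq 0$, which in turn follows from the fact that $\|a\|^2 1_{\mathfrak{A}} - a^*a$ is positive (standard functional-calculus fact that I would quote). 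Thus $\pi_\phi$ extends to a bounded $*$-homomorphism on $\mathcal{H}_\phi$.

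The second ingredient is producing enough states, which I expect to be the main obstacle. Given a nonzero $a \in \mathfrak{A}$, I need a state $\phi$ with $\phi(a^*a) = \|a^*a\| = \|a\|^2$. To do this, I would first invoke Gelfand's theorem for the commutative C$^*$-subalgebra $\mathfrak{B} \coloneqq \mathrm{C}^*(a^*a, 1_{\mathfrak{A}})$, identifying it with $C(X)$ for some compact Hausdorff $X$; the function corresponding to $a^*a$ attains its supremum $\|a^*a\|$ at some point $x_0 \in X$, and evaluation at $x_0$ gives a state $\phi_0$ on $\mathfrak{B}$ with $\phi_0(a^*a) = \|a\|^2$. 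The Hahn--Banach theorem extends $\phi_0$ to a continuous linear functional $\phi$ on $\mathfrak{A}$ of norm $1$, and since $\phi(1_{\mathfrak{A}}) = 1 = \|\phi\|$, a standard argument shows $\phi$ is automatically positive and hence a state on $\mathfrak{A}$. (Establishing these facts about the Gelfand transform and the positivity of unital norm-one functionals is the technical heart of the proof and the part that truly requires C$^*$-theory rather than just Banach-algebra theory.)

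Finally I would assemble everything. Let $S(\mathfrak{A})$ denote the set of all states and form the direct sum representation
\begin{equation*}
\pi \coloneqq \bigoplus_{\phi \in S(\mathfrak{A})} \pi_\phi \colon \mathfrak{A} \to \mathfrak{B}\!\left( \bigoplus_{\phi \in S(\mathfrak{A})} \mathcal{H}_\phi \right).
\end{equation*}
This is a $*$-homomorphism, so $\|\pi(a)\| \leq \|a\|$ for every $a \in \mathfrak{A}$. For the reverse inequality, fix $a$ and pick by the previous paragraph a state $\phi$ with $\phi(a^*a) = \|a\|^2$. Letting $\xi_\phi \coloneqq 1_{\mathfrak{A}} + N_\phi \in \mathcal{H}_\phi$, which is a unit vector, we have
\begin{equation*}
\|\pi_\phi(a)\xi_\phi\|^2 = \langle a, a\rangle_\phi = \phi(a^*a) = \|a\|^2,
\end{equation*}
so $\|\pi(a)\| \geq \|\pi_\phi(a)\| \geq \|a\|$. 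Therefore $\pi$ is an isometric $*$-homomorphism, which is the desired conclusion. The resulting Hilbert space may be very large; if a separable version is desired one restricts to a countable separating family of states, but that refinement is not required by the statement.
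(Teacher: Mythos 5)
The paper does not actually prove this statement: it appears in Appendix \ref{appendixA}, which is an expository review of C$^*$-algebra basics, and the Gelfand--Naimark(--Segal) theorem is simply quoted there as standard background with no argument supplied. There is therefore no ``paper's proof'' to compare against. Your proposal is the classical GNS argument as found in any standard reference (e.g.\ Murphy), and it is correct in outline and in nearly all details: build $\mathcal H_\phi$ from the Cauchy--Schwarz argument, produce states that ``see'' the norm of a given element via Gelfand theory on the commutative subalgebra $\mathrm{C}^*(a^*a, 1_{\mathfrak A})$ plus Hahn--Banach and the automatic positivity of unital norm-one functionals, take the direct sum over all of $S(\mathfrak A)$, and read off isometry from the C$^*$-identity.

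One small slip worth flagging: for boundedness of $\pi_\phi(a)$ on the dense subspace $\mathfrak A/N_\phi$, the relevant inequality is $b^*(\|a\|^2 1_{\mathfrak A} - a^*a)b \geq 0$ for an \emph{arbitrary} $b\in\mathfrak A$ (so that $\phi(b^*a^*ab) \leq \|a\|^2\phi(b^*b)$, i.e.\ $\|\pi_\phi(a)(b+N_\phi)\| \leq \|a\|\,\|b+N_\phi\|$), not $a^*(\|a\|^2 1_{\mathfrak A} - a^*a)a \geq 0$ as written, which is just the special case $b=a$ and does not by itself give the bound on the whole dense subspace. The underlying fact you are quoting --- that $\|a\|^2 1_{\mathfrak A} - a^*a$ is positive and that conjugation by $b$ preserves positivity --- is indeed the right ingredient; just apply it with the correct variable. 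With that correction, the argument is complete.
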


Thus, one may identify an abstract C$^*$-algebra $\mathfrak{A}$ with the concrete C$^*$-algebra $\mathfrak{B}:= \pi(\mathfrak{A})$. A $*$-homomorphism from a C$^*$-algebra $\mathfrak{A}$ into some $\bh$ is called a \emph{representation} of $\mathfrak{A}$.

A \emph{state} on a C$^*$-algebra $\mathfrak{A}$ is a linear functional $\phi:\mathfrak{A}\to \bb C$ such that $\phi(a^*a) \geq 0$ for all $a\in \mathfrak{A}$ and $\phi(1_{\mathfrak{A}}) = 1$. A state $\phi$ is called \emph{tracial} if $\phi(ab) = \phi(ba)$ for all $a,b\in \mathfrak{A}$. We let $S(\mathfrak{A})$ denote the set of all states on $\mathfrak{A}$. We say that a net of states $(\phi_{\lambda})_{\lambda\in \Lambda}$ converges to a state $\phi$ if $\lim_{\lambda} |\phi_{\lambda}(a) - \phi(a)| = 0$ for all $a\in \mathfrak{A}$, that is, if $\phi_{\lambda}$ converges pointwise to $\phi$. This gives rise to a topology on $S(\mathfrak{A})$ called the \emph{weak$^*$-topology}, with respect to which $S(\mathfrak{A})$ is compact.

Let $\pi_1:\mathfrak{A}\to \mathfrak{B}(\cl H_1)$ and $\pi_2:\mathfrak{A}\to \mathfrak{B}(\cl H_2)$ be two representations of a C$^*$-algebra $\mathfrak{A}$. We say $\pi_1$ and $\pi_2$ are \emph{unitarily equivalent} if there is some unitary $U:\cl H_1 \to \cl H_2$ such that $\pi_2(a) = U\pi_1(a)U^*$ for all $a\in \mathfrak{A}$. A representation $\pi:\mathfrak{A}\to \bh$ is said to be \emph{irreducible} if there is no invariant closed subspace of $\pi(\mathfrak{A})$ apart from $\{0\}$ and $\cl H$.

Give a non-empty set $A\subseteq \bh$, we define the \emph{C$^*$-algebra generated by $A$}, denoted $C^*(A)$, to be the smallest $C^*$-algebra containing $A$.

Let $G = \{a_1,\dots,a_n\}$ be a non-empty set of \emph{generators}, and let $R = \{p_1,\dots,p_m\}$ be a finite set of \emph{relations} where $p_i$ is a polynomial in $2n$ noncommuting variables $a_1,\dots,a_n,a_1^*,\dots,a_n^*$.  Let $\mathfrak{A}$ be the free $*$-algebra generated by $G$. Then an $n$-tuple $(T_1,\dots,T_n)$ in some $\bh$ satsfiying the relations gives rise to a $*$-representation of $\mathfrak{A}$. For $a\in \mathfrak{A}$, define $\norm{a} = \sup\{\norm{\pi(a)}: \pi \text{ is a *-representation of } (G,R)\}$. If $\|a_i\| < \infty$ for all $i \in [n]$, then $\|\cdot\|$ is a C$^*$-seminorm on $\mathfrak A$, that is, $\norm{a^*a} = \norm{a}^2$ for all $a\in \mathfrak{A}$. Letting $N = \{a\in \mathfrak{A}:\norm{a} = 0\}$, the completion of $\mathfrak A/N$ with respect to the induced norm is called the \emph{universal C$^*$-algebra} on $(G,R)$, denoted by $C^*(G,R)$. The universal property says that whenever there is an $n$-tuple $(A_1,\dots,A_n)$ of operators in $\bh$ for some Hilbert space $\cl H$ satisfying the relations $R$, there is a representation $\pi:C^*(G,R)\to \bh$ with $\pi(a_i) = A_i$ for all $i\in [n]$. For more information on universal C$^*$-algebras we refer the reader to \cite[Section II.8.3]{Black}.

\section{A recipe to generate four projections adding up to scalars times identity}\label{appendixB}

For each $k\in \bb N$, we present a recipe to generate four projections $\widetilde{P}_{k,1}, \dots, \widetilde{P}_{k,4}$ such that $\widetilde{P}_{k,1} + \dots + \widetilde{P}_{k,4} = \frac{4k}{2k+1}$ (see Corollary \ref{cor:n=4}). This is a slightly simplified presentation from~\cite{KRS}. We follow the following iterative procedure.

We begin with four unit vectors $\xi_1,\xi_2,\xi_3,\xi_4\in \mathbb{R}^3$ which form the vertices of a regular tetrahedron. These vectors satisfy $\inner{\xi_i}{\xi_j}=\frac{-1}{3}$ for all $i\neq j$, and are given by  \begin{align*}
	\xi_1 = \begin{bmatrix}
		1 \\ 0 \\ 0
	\end{bmatrix}, \quad \xi_2 = \begin{bmatrix}
		\frac{-1}{3} \\ \frac{2\sqrt{2}}{3} \\ 0
	\end{bmatrix}, \quad \xi_3 = \begin{bmatrix}
		\frac{-1}{3} \\ \frac{-\sqrt{2}}{3} \\ \sqrt{\frac{2}{3}}
	\end{bmatrix}, \quad \xi_4 = \begin{bmatrix}
		\frac{-1}{3} \\ \frac{-\sqrt{2}}{3} \\ -\sqrt{\frac{2}{3}}
	\end{bmatrix}.
\end{align*} Set $\widetilde{P}_{1,v} = \xi_v\xi_v^*$ to be the rank-1 projection on the subspace spanned by $\xi_v$, for $v\in [4]$. Then \begin{align*}
	\widetilde{P}_{1,1} = \begin{bmatrix}
		1 & 0 & 0 \\
		0 & 0 & 0 \\
		0 & 0 & 0 
	\end{bmatrix}, \quad
	\widetilde{P}_{1,2} = \begin{bmatrix}
		\frac{1}{9} & \frac{-2\sqrt{2}}{9} & 0 \\
		\frac{-2\sqrt{2}}{9} & \frac{8}{9} & 0 \\
		0 & 0 & 0
	\end{bmatrix}, \quad
	\widetilde{P}_{1,3} = \begin{bmatrix}
		\frac{1}{9} & \frac{\sqrt{2}}{9} & \frac{-\sqrt{2}}{3\sqrt{3}} \\
		\frac{\sqrt{2}}{9} & \frac{2}{9} & \frac{-2}{3\sqrt{3}} \\
		\frac{-\sqrt{2}}{3\sqrt{3}} & \frac{-2}{3\sqrt{3}} & \frac{2}{3}
	\end{bmatrix}, \quad
	\widetilde{P}_{1,4} = \begin{bmatrix}
		\frac{1}{9} & \frac{\sqrt{2}}{9} & \frac{\sqrt{2}}{3\sqrt{3}} \\
		\frac{\sqrt{2}}{9} & \frac{2}{9} & \frac{2}{3\sqrt{3}} \\
		\frac{\sqrt{2}}{3\sqrt{3}} & \frac{2}{3\sqrt{3}} & \frac{2}{3}
	\end{bmatrix}.
\end{align*} One readily checks that $\widetilde{P}_{1,1}+\dots+\widetilde{P}_{1,4} = \frac{4}{3}I_3$. 

More generally, for $k\geq 1$, we show how to construct projections $\widetilde{P}_{k+1,1},\dots, \widetilde{P}_{k+1,4}$ which sum up to $\frac{4(k+1)}{2(k+1)+1}$ from the set of projections $\widetilde{P}_{k,1},\dots, \widetilde{P}_{k,4}$. We begin with projections $\widetilde{P}_{k,1}, \dots, \widetilde{P}_{k,4}$ satisfying $\widetilde{P}_{k,1}+\dots+\widetilde{P}_{k,1} = \frac{4k}{2k+1}I_{2k+1}$. Set $\widetilde{Q}_{k,v} = I_{2k+1} - \widetilde{P}_{k,v}$. Observe that $\mathrm{rank}(\widetilde{Q}_{k,v}) = k+1$ since $\mathrm{rank}(\widetilde{P}_{k,v}) = k$ (Corollary \ref{prop:const-trace}). Thus if $\widetilde{Q}_{k,v} = U_{k,v}DV_{k,v}^*$ is the singular decomposition with $D = I_{k+1}\oplus 0_{k}$, we pick the first $k+1$ columns from each of $U_{k,v}$ and join them laterally to form a larger matrix $\Gamma_1$ of size $(2k+1) \times 4(k+1)$: \begin{align*}
	\Gamma_1 = \begin{bmatrix} \text{first } k+1 \text{ columns of } U_{k,1}  & \dots & \text{first } k+1 \text{ columns of } U_{k,4} \end{bmatrix}
\end{align*} The null space of $\Gamma_1$ is then spanned by an orthonormal basis $\{\eta_1,\dots,\eta_{2k+3}\}$, and therefore \begin{align*}
	\Gamma_2 := \begin{bmatrix}
		\eta_1^T \\ \vdots \\ \eta_{2k+3}^T
	\end{bmatrix}
\end{align*} is a $(2k+3)\times 4(k+1)$ matrix. Let $R_1$ be the first $k+1$ columns of $\Gamma_2$, $R_2$ be the second $k+1$ columns of $\Gamma_2$, and so on. Finally, set \begin{align*}
	\widetilde{P}_{k+1,v} = \frac{4(k+1)}{2(k+1)+1} R_vR_v^*
\end{align*} for all $v\in [4]$, to get the desired set of projections.

We illustrate the procedure given above by constructing projections $\widetilde{P}_{2,1},\dots,\widetilde{P}_{2,4}$ in $\bb M_5$ such that $\widetilde{P}_{2,1}+\dots+\widetilde{P}_{2,4} = \frac{8}{5}I_5$.

Set $\widetilde{Q}_{1,v} = I_3 - \widetilde{P}_{1,v}$ for each $v\in [4]$. Notice that each $\widetilde{Q}_{1,v}$ is a rank-2 projection and is given by \begin{align*}
	\widetilde{Q}_{1,1} = \begin{bmatrix}
		0 & 0 & 0 \\
		0 & 1 & 0 \\
		0 & 0 & 1
	\end{bmatrix}, \quad
	\widetilde{Q}_{1,2} = \begin{bmatrix}
		\frac{8}{9} & \frac{2\sqrt{2}}{9} & 0 \\
		\frac{2\sqrt{2}}{9} & \frac{1}{9} & 0 \\
		0 & 0 & 1
	\end{bmatrix}, \quad
	\widetilde{Q}_{1,3} = \begin{bmatrix}
		\frac{8}{9} & \frac{-\sqrt{2}}{9} & \frac{\sqrt{2}}{3\sqrt{3}} \\
		\frac{-\sqrt{2}}{9} & \frac{7}{9} & \frac{2}{3\sqrt{3}} \\
		\frac{\sqrt{2}}{3\sqrt{3}} & \frac{2}{3\sqrt{3}} & \frac{1}{3}
	\end{bmatrix}, \quad
	\widetilde{Q}_{1,4} = \begin{bmatrix}
		\frac{8}{9} & \frac{-\sqrt{2}}{9} & \frac{-\sqrt{2}}{3\sqrt{3}} \\
		\frac{-\sqrt{2}}{9} & \frac{7}{9} & \frac{-2}{3\sqrt{3}} \\
		\frac{-\sqrt{2}}{3\sqrt{3}} & \frac{-2}{3\sqrt{3}} & \frac{1}{3}
	\end{bmatrix}.
\end{align*}

For each $v\in [4]$, let $\widetilde{Q}_{1,v} =  U_{1,v}DV^*_{1,v}$ be the singular value decomposition of $\widetilde{Q}_{1,v}$, where $D = \mathrm{diag}(1,1,0)$ and \begin{align*}
U_{1,1} &= \begin{bmatrix} 0 & 0 & 1 \\ 0 & 1 & 0 \\ 1 & 0 & 0 \end{bmatrix}, \qquad
U_{1,2} = \begin{bmatrix} 0 & \frac{2\sqrt{2}}{3} & \frac{-1}{3} \\ 0 & \frac{1}{3} & \frac{2\sqrt{2}}{3} \\ 1 & 0 & 0 \end{bmatrix}, \\
U_{1,3} &= \begin{bmatrix} \sqrt{\frac{6}{7}} & \frac{-\sqrt{2}}{3\sqrt{7}} & \frac{-1}{3} \\ 0 & \frac{\sqrt{7}}{3} & \frac{-\sqrt{2}}{3} \\ \frac{1}{\sqrt{7}} & \frac{2}{\sqrt{21}} & \sqrt{\frac{2}{3}} \end{bmatrix}, \qquad
U_{1,4} = \begin{bmatrix} -\sqrt{\frac{6}{7}} & \frac{-\sqrt{2}}{3\sqrt{7}} & \frac{1}{3} \\ 0 & \frac{\sqrt{7}}{3} & \frac{\sqrt{2}}{3} \\ \frac{1}{\sqrt{7}} & \frac{-2}{\sqrt{21}} & \sqrt{\frac{2}{3}}\end{bmatrix}.
\end{align*} We pick the first two columns of $U_{1,v}$ (which span the column space of $\widetilde{Q}_{1,v}$) and join them laterally to form a larger matrix: 
\begin{align*} 
	\Gamma_1 = \begin{bmatrix}
		0 & 0 & 0 & \frac{2 \sqrt{2}}{3} & \sqrt{\frac{6}{7}} & \frac{-\sqrt{2}}{3\sqrt{7}} & -\sqrt{\frac{6}{7}} & \frac{-\sqrt{2}}{3\sqrt{7}} \\
		0 & 1 & 0 & \frac{1}{3} & 0 & \frac{\sqrt{7}}{3} & 0 & \frac{\sqrt{7}}{3} \\
		1 & 0 & 1 & 0 & \frac{1}{\sqrt{7}} & \frac{2}{\sqrt{21}} & \frac{1}{\sqrt{7}} & -\frac{2}{\sqrt{21}} \\
	\end{bmatrix}.
\end{align*} Let $\{\eta_1,\dots,\eta_5\}$ be an orthonormal basis of the null space of $\Gamma_1$. Form the matrix \begin{align*}
	\Gamma_2 = \begin{bmatrix}
		\eta_1^T \\ \vdots \\ \eta_5^T
	\end{bmatrix} = \begin{bmatrix}
2 \sqrt{\frac{2}{89}} & -5 \sqrt{\frac{3}{178}} & 0 & \sqrt{\frac{3}{178}} & 0 & 0 & 0 & \sqrt{\frac{42}{89}} \\
-3 \sqrt{\frac{10}{979}} & -7 \sqrt{\frac{3}{9790}} & 0 & 37 \sqrt{\frac{3}{9790}} & 0 & 0 & \sqrt{\frac{178}{385}} & -8 \sqrt{\frac{6}{34265}} \\
-2 \sqrt{\frac{30}{913}} & -\frac{\sqrt{\frac{83}{110}}}{2} & 0 & -\frac{17}{2 \sqrt{9130}} & 0 & \frac{\sqrt{\frac{385}{166}}}{2} & -2 \sqrt{\frac{42}{4565}} & -\frac{1}{2} \left(17 \sqrt{\frac{7}{9130}}\right) \\
-\frac{6}{\sqrt{415}} & 0 & 0 & -\sqrt{\frac{15}{83}} & \frac{\sqrt{\frac{83}{35}}}{2} & -\frac{1}{2} \left(3 \sqrt{\frac{3}{2905}}\right) & \frac{33}{2 \sqrt{2905}} & \frac{13 \sqrt{\frac{3}{2905}}}{2} \\
-\frac{3}{2 \sqrt{10}} & 0 & \frac{\sqrt{\frac{5}{2}}}{2} & 0 & -\frac{3}{2 \sqrt{70}} & -\sqrt{\frac{3}{70}} & -\frac{3}{2 \sqrt{70}} & \sqrt{\frac{3}{70}}
\end{bmatrix}.
\end{align*} Set $R_1$ to be the first two columns of $\Gamma_2$, $R_2$ to be the third and fourth columns of $\Gamma_2$, and so on. Finally, set $\widetilde{P}_{2,v} = \frac{8}{5}R_vR_v^*$ for all $v\in [4]$, which are given by \begin{align*}
\widetilde{P}_{2,1} &=  \begin{bmatrix}\frac{364}{445} & -\frac{12}{89 \sqrt{55}} & 102 \sqrt{\frac{3}{406285}} & \frac{-96}{5} \sqrt{\frac{2}{36935}} & -\frac{24}{5 \sqrt{445}} \\
-\frac{12}{89 \sqrt{55}} & \frac{4188}{24475} & \frac{3562}{275} \sqrt{\frac{3}{7387}} & \frac{144}{5} \sqrt{\frac{2}{81257}} & \frac{36}{5 \sqrt{979}} \\
102 \sqrt{\frac{3}{406285}} & \frac{3562}{275} \sqrt{\frac{3}{7387}} & \frac{11689}{22825} & \frac{96}{415} \sqrt{\frac{6}{11}} & \frac{24}{5} \sqrt{\frac{3}{913}} \\
\frac{-96}{5} \sqrt{\frac{2}{36935}} & \frac{144}{5} \sqrt{\frac{2}{81257}} & \frac{96}{415} \sqrt{\frac{6}{11}} & \frac{288}{2075} & \frac{36}{25} \sqrt{\frac{2}{83}} \\
-\frac{24}{5 \sqrt{445}} & \frac{36}{5 \sqrt{979}} & \frac{24}{5} \sqrt{\frac{3}{913}} & \frac{36}{25} \sqrt{\frac{2}{83}} & \frac{9}{25}\end{bmatrix}, \\
\widetilde{P}_{2,2} &= \begin{bmatrix}
	 \frac{12}{445} & \frac{444}{445 \sqrt{55}} & \frac{-34}{5}\sqrt{\frac{3}{406285}} & -12 \sqrt{\frac{2}{36935}} & 0 \\
	\frac{444}{445 \sqrt{55}} & \frac{16428}{24475} & \frac{-1258}{275} \sqrt{\frac{3}{7387}} & \frac{-444}{5} \sqrt{\frac{2}{81257}} & 0 \\
	\frac{-34}{5} \sqrt{\frac{3}{406285}} & \frac{-1258}{275} \sqrt{\frac{3}{7387}} & \frac{289}{22825} & \frac{34}{415} \sqrt{\frac{6}{11}} & 0 \\
	-12 \sqrt{\frac{2}{36935}} & -\frac{444}{5} \sqrt{\frac{2}{81257}} & \frac{34}{415} \sqrt{\frac{6}{11}} & \frac{24}{83} & 0 \\
	0 & 0 & 0 & 0 & 1 \end{bmatrix}, \\
\widetilde{P}_{2,3} &= \begin{bmatrix}
	 0 & 0 & 0 & 0 & 0 \\
	0 & 0 & 0 & 0 & 0 \\
	0 & 0 & \frac{77}{83} & \frac{-3\sqrt{66}}{415}  & \frac{-2}{5} \sqrt{\frac{33}{83}} \\
	0 & 0 & \frac{-3\sqrt{66}}{415} & \frac{1976}{2075} & \frac{-33}{25} \sqrt{\frac{2}{83}} \\
	0 & 0 & \frac{-2}{5} \sqrt{\frac{33}{83}} & \frac{-33}{25} \sqrt{\frac{2}{83}} & \frac{3}{25}
\end{bmatrix}, \\
\widetilde{P}_{2,4} &= \begin{bmatrix}
	 \frac{336}{445} & -\frac{384}{445 \sqrt{55}} & \frac{-476}{5} \sqrt{\frac{3}{406285}} & \frac{156}{5} \sqrt{\frac{2}{36935}} & \frac{24}{5 \sqrt{445}} \\
	-\frac{384}{445 \sqrt{55}} & \frac{18544}{24475} & \frac{-2304}{275} \sqrt{\frac{3}{7387}} & 60 \sqrt{\frac{2}{81257}} & -\frac{36}{5 \sqrt{979}} \\
	\frac{-476}{5}  \sqrt{\frac{3}{406285}} & \frac{-2304}{275} \sqrt{\frac{3}{7387}} & \frac{3367}{22825} & \frac{-97}{415} \sqrt{\frac{6}{11}} & \frac{-2}{5} \sqrt{\frac{3}{913}} \\
	\frac{156}{5} \sqrt{\frac{2}{36935}} & 60 \sqrt{\frac{2}{81257}} & \frac{-97}{415} \sqrt{\frac{6}{11}} & \frac{456}{2075} & \frac{-3}{25}  \sqrt{\frac{2}{83}} \\
	\frac{24}{5 \sqrt{445}} & -\frac{36}{5 \sqrt{979}} & \frac{-2}{5}  \sqrt{\frac{3}{913}} & \frac{-3}{25}\sqrt{\frac{2}{83}} & \frac{3}{25}
\end{bmatrix}. \end{align*} Then, $\widetilde{P}_{2,1}, \dots, \widetilde{P}_{2,4}$ are projections such that $\widetilde{P}_{2,1} + \dots + \widetilde{P}_{2,4} = \frac{8}{5}I_5$.

\bibliographystyle{alpha}
\bibliography{self-testing-bib}

\newcommand{\etalchar}[1]{$^{#1}$}
\begin{thebibliography}{GKW{\etalchar{+}}18}

\bibitem[AMR{\etalchar{+}}19]{qiso1}
A.~Atserias, L.~Man\v{c}inska, D.~E. Roberson, R.~\v{S}\'{a}mal, S.~Severini,
  and A.~Varvitsiotis.
\newblock Quantum and non-signalling graph isomorphisms.
\newblock {\em Journal of Combinatorial Theory, Series B}, 136:289 -- 328,
  2019.

\bibitem[Arv77]{Arveson}
W.~Arveson.
\newblock Notes on extensions of {$C^{\sp*}$}-algebras.
\newblock {\em Duke Math. J.}, 44(2):329--355, 1977.

\bibitem[Bla06]{Black}
B.~Blackadar.
\newblock {\em Operator algebras}, volume 122 of {\em Encyclopaedia of
  Mathematical Sciences}.
\newblock Springer-Verlag, Berlin, 2006.
\newblock Theory of $C^*$-algebras and von Neumann algebras, Operator Algebras
  and Non-commutative Geometry, III.

\bibitem[BvCA18a]{Bow18a}
J.~Bowles, I.~\v{S}upi\'{c}, D.~Cavalcanti, and A.~Ac\'{\i}n.
\newblock Device-independent entanglement certification of all entangled
  states.
\newblock {\em Phys. Rev. Lett.}, 121:180503, Oct 2018.

\bibitem[BvCA18b]{Bow18b}
J.~Bowles, I.~\v{S}upi\'{c}, D.~Cavalcanti, and A.~Ac\'{\i}n.
\newblock Self-testing of pauli observables for device-independent entanglement
  certification.
\newblock {\em Phys. Rev. A}, 98:042336, Oct 2018.

\bibitem[CE76]{Choi-Effros}
M.~D. Choi and E.~G. Effros.
\newblock The completely positive lifting problem for {$C\sp*$}-algebras.
\newblock {\em Ann. of Math. (2)}, 104(3):585--609, 1976.

\bibitem[CGJV19]{CGJV}
A.~Coladangelo, A.~B. Grilo, S.~Jeffery, and T.~Vidick.
\newblock Verifier-on-a-leash: new schemes for verifiable delegated quantum
  computation, with quasilinear resources.
\newblock In {\em Advances in cryptology---{EUROCRYPT} 2019. {P}art {III}},
  volume 11478 of {\em Lecture Notes in Comput. Sci.}, pages 247--277.
  Springer, Cham, 2019.

\bibitem[CGS17]{Coladangelo2017}
A.~Coladangelo, K.~T. Goh, and V.~Scarani.
\newblock All pure bipartite entangled states can be self-tested.
\newblock {\em Nature Communications}, 8(1), May 2017.

\bibitem[CHSH69]{CHSH}
J.~F. Clauser, M.~A. Horne, A.~Shimony, and R.~A. Holt.
\newblock Proposed experiment to test local hidden-variable theories.
\newblock {\em Physical Review Letters}, 23(15):880--884, October 1969.

\bibitem[CHTW04]{CHTW}
R.~Cleve, P.~H{\o}yer, B.~Toner, and J.~Watrous.
\newblock Consequences and limits of nonlocal strategies.
\newblock In {\em Proceedings. 19th {IEEE} Annual Conference on Computational
  Complexity, 2004.} {IEEE}, 2004.

\bibitem[CM14]{clevemittal}
R.~Cleve and R.~Mittal.
\newblock Characterization of binary constraint system games.
\newblock In {\em Proceedings of the 41st International Colloquium on Automata,
  Languages, and Programming}, ICALP '14, pages 320--331. 2014.

\bibitem[CMMN20]{Cui2020}
D.~Cui, A.~Mehta, H.~Mousavi, and S.~S. Nezhadi.
\newblock A generalization of {CHSH} and the algebraic structure of optimal
  strategies.
\newblock {\em {Quantum}}, 4:346, October 2020.

\bibitem[Col17]{Col17}
A.~Coladangelo.
\newblock Parallel self-testing of (tilted) {EPR} pairs via copies of (tilted)
  {CHSH} and the magic square game.
\newblock {\em Quantum Inf. Comput.}, 17(9-10):831--865, 2017.

\bibitem[Col20]{C20}
A.~Coladangelo.
\newblock A two-player dimension witness based on embezzlement, and an
  elementary proof of the non-closure of the set of quantum correlations.
\newblock {\em Quantum}, 4:282, June 2020.

\bibitem[Con76]{Connes}
A.~Connes.
\newblock Classification of injective factors. {C}ases {$II_{1},$} {$II_{\infty
  },$} {$III_{\lambda },$} {$\lambda \not=1$}.
\newblock {\em Ann. of Math. (2)}, 104(1):73--115, 1976.

\bibitem[CS17]{CS17}
A.~{Coladangelo} and J.~{Stark}.
\newblock {Robust self-testing for linear constraint system games}.
\newblock {\em arXiv e-prints}, page arXiv:1709.09267, September 2017.

\bibitem[DPP19]{DPP}
K.~Dykema, V.~I. Paulsen, and J.~Prakash.
\newblock Non-closure of the set of quantum correlations via graphs.
\newblock {\em Comm. Math. Phys.}, 365(3):1125--1142, 2019.

\bibitem[EGH{\etalchar{+}}11]{EtingofRepTheory}
P.~Etingof, O.~Golberg, S.~Hensel, T.~Liu, A.~Schwendner, D.~Vaintrob, and
  E.~Yudovina.
\newblock {\em Introduction to representation theory}, volume~59 of {\em
  Student Mathematical Library}.
\newblock American Mathematical Society, Providence, RI, 2011.
\newblock With historical interludes by Slava Gerovitch.

\bibitem[FJVY19]{Fitz-et-al1}
J.~Fitzsimons, Z.~Ji, T.~Vidick, and H.~Yuen.
\newblock Quantum proof systems for iterated exponential time, and beyond.
\newblock In {\em S{TOC}'19---{P}roceedings of the 51st {A}nnual {ACM} {SIGACT}
  {S}ymposium on {T}heory of {C}omputing}, pages 473--480. ACM, New York, 2019.

\bibitem[{Fu}19]{Fu19}
H.~{Fu}.
\newblock {Constant-sized correlations are sufficient to robustly self-test
  maximally entangled states with unbounded dimension}.
\newblock {\em arXiv e-prints}, page arXiv:1911.01494, November 2019.

\bibitem[GK17]{GH17}
U.~T. Gau\`ers and O.~Khatami.
\newblock Inverse and stability theorems for approximate representations of
  finite groups.
\newblock {\em Mat. Sb.}, 208(12):70--106, 2017.

\bibitem[GKW{\etalchar{+}}18]{Goh18}
K.~T. Goh, J.~Kaniewski, E.~Wolfe, T.~V\'ertesi, X.~Wu, Y.~Cai, Y-C. Liang, and
  V.~Scarani.
\newblock Geometry of the set of quantum correlations.
\newblock {\em Phys. Rev. A}, 97:022104, Feb 2018.

\bibitem[Gow17]{Gowers17}
W.~T. Gowers.
\newblock Generalizations of {F}ourier analysis, and how to apply them.
\newblock {\em Bull. Amer. Math. Soc. (N.S.)}, 54(1):1--44, 2017.

\bibitem[HMPS19]{HMPS}
J.~W. Helton, K.~P. Meyer, V.~I. Paulsen, and M.~Satriano.
\newblock Algebras, synchronous games, and chromatic numbers of graphs.
\newblock {\em New York J. Math.}, 25:328--361, 2019.

\bibitem[HS18]{Hadwin-Shulman}
D.~Hadwin and T.~Shulman.
\newblock Tracial stability for {$C^*$}-algebras.
\newblock {\em Integral Equations Operator Theory}, 90(1):Paper No. 1, 35,
  2018.

\bibitem[JNV{\etalchar{+}}20]{JNVWY}
Z.~{Ji}, A.~{Natarajan}, T.~{Vidick}, J.~{Wright}, and H.~{Yuen}.
\newblock {MIP*=RE}.
\newblock {\em arXiv e-prints}, page arXiv:2001.04383, January 2020.

\bibitem[Kan17]{Kan17}
J.~Kaniewski.
\newblock Self-testing of binary observables based on commutation.
\newblock {\em Phys. Rev. A}, 95:062323, Jun 2017.

\bibitem[KRS02]{KRS}
S.~A. Kruglyak, V.~I. Rabanovich, and Yu.~S. Samo\u{\i}lenko.
\newblock On sums of projections.
\newblock {\em Funktsional. Anal. i Prilozhen.}, 36(3):20--35, 96, 2002.

\bibitem[Lor97]{Loring}
T.~A. Loring.
\newblock {\em Lifting solutions to perturbing problems in {$C^*$}-algebras},
  volume~8 of {\em Fields Institute Monographs}.
\newblock American Mathematical Society, Providence, RI, 1997.

\bibitem[MR14]{MR14}
L.~Man\v{c}inska and D.~Roberson.
\newblock Graph homomorphisms for quantum players.
\newblock In {\em 9th {C}onference on the {T}heory of {Q}uantum {C}omputation,
  {C}ommunication and {C}ryptography}, volume~27 of {\em LIPIcs. Leibniz Int.
  Proc. Inform.}, pages 212--216. Schloss Dagstuhl. Leibniz-Zent. Inform.,
  Wadern, 2014.

\bibitem[MR16]{qhomos}
L.~Man\v{c}inska and D.~E. Roberson.
\newblock Quantum homomorphisms.
\newblock {\em Journal of Combinatorial Theory, Series B}, 118:228 -- 267,
  2016.

\bibitem[MR20]{MR20}
M.~Musat and M.~R{\o}rdam.
\newblock Non-closure of quantum correlation matrices and factorizable channels
  that require infinite dimensional ancilla.
\newblock {\em Comm. Math. Phys.}, 375(3):1761--1776, 2020.
\newblock With an appendix by Narutaka Ozawa.

\bibitem[Mur90]{Murphy}
G.~J. Murphy.
\newblock {\em {$C^*$}-algebras and operator theory}.
\newblock Academic Press, Inc., Boston, MA, 1990.

\bibitem[MY98]{MY98}
D.~{Mayers} and A.~{Yao}.
\newblock Quantum cryptography with imperfect apparatus.
\newblock In {\em Proceedings 39th Annual Symposium on Foundations of Computer
  Science (Cat. No.98CB36280)}, pages 503--509, 1998.

\bibitem[MY04]{MY04}
D.~Mayers and A.~Yao.
\newblock Self testing quantum apparatus.
\newblock {\em Quantum Inf. Comput.}, 4(4):273--286, 2004.

\bibitem[MYS12]{MYS12}
M.~McKague, T.~H. Yang, and V.~Scarani.
\newblock Robust self-testing of the singlet.
\newblock {\em J. Phys. A}, 45(45):455304, 11, 2012.

\bibitem[NC00]{NC}
M.~A. Nielsen and I.~L. Chuang.
\newblock {\em Quantum computation and quantum information}.
\newblock Cambridge University Press, Cambridge, 2000.

\bibitem[NV17]{NV17}
A.~Natarajan and T.~Vidick.
\newblock A quantum linearity test for robustly verifying entanglement.
\newblock In {\em S{TOC}'17---{P}roceedings of the 49th {A}nnual {ACM} {SIGACT}
  {S}ymposium on {T}heory of {C}omputing}, pages 1003--1015. ACM, New York,
  2017.

\bibitem[NV18]{NV}
A.~Natarajan and T.~Vidick.
\newblock Low-degree testing for quantum states, and a quantum entangled games
  {PCP} for {QMA}.
\newblock In {\em 59th {A}nnual {IEEE} {S}ymposium on {F}oundations of
  {C}omputer {S}cience---{FOCS} 2018}, pages 731--742. IEEE Computer Soc., Los
  Alamitos, CA, 2018.

\bibitem[NW19]{NW}
A.~Natarajan and J.~Wright.
\newblock {NEEXP} is contained in {MIP}{$^\ast$}.
\newblock In {\em 2019 {IEEE} 60th Annual Symposium on Foundations of Computer
  Science ({FOCS})}. {IEEE}, November 2019.

\bibitem[Pau02]{CBMOA}
V.~Paulsen.
\newblock {\em Completely bounded maps and operator algebras}, volume~78 of
  {\em Cambridge Studies in Advanced Mathematics}.
\newblock Cambridge University Press, Cambridge, 2002.

\bibitem[Per93]{Peres}
A.~Peres.
\newblock {\em Quantum theory: concepts and methods}, volume~57 of {\em
  Fundamental Theories of Physics}.
\newblock Kluwer Academic Publishers Group, Dordrecht, 1993.

\bibitem[PSS{\etalchar{+}}16]{PSSTW}
V.~I. Paulsen, S.~Severini, D.~Stahlke, I.~G. Todorov, and A.~Winter.
\newblock Estimating quantum chromatic numbers.
\newblock {\em J. Funct. Anal.}, 270(6):2188--2222, 2016.

\bibitem[{\v{S}}B20]{SB19}
I.~{\v{S}}upi{\'{c}} and J.~Bowles.
\newblock Self-testing of quantum systems: a review.
\newblock {\em {Quantum}}, 4:337, September 2020.

\bibitem[Slo19]{Slofstra19}
W.~Slofstra.
\newblock The set of quantum correlations is not closed.
\newblock {\em Forum Math. Pi}, 7:e1, 41, 2019.

\bibitem[SSKA19]{Sarkar}
S.~{Sarkar}, D.~{Saha}, J.~{Kaniewski}, and R.~{Augusiak}.
\newblock {Self-testing quantum systems of arbitrary local dimension with
  minimal number of measurements}.
\newblock {\em arXiv e-prints}, page arXiv:1909.12722, September 2019.

\bibitem[Vid18]{Vidick}
T.~Vidick.
\newblock Quantum multiplayer games, testing and rigidity.
\newblock \url{http://users.cms.caltech.edu/~vidick/notes/ucsd/ucsd_games.pdf},
  2018.
\newblock [Online; accessed 20-February-2021].

\bibitem[Wat18]{Watr18}
J.~Watrous.
\newblock {\em The Theory of Quantum Information}.
\newblock Cambridge University Press, Cambridge, 2018.

\end{thebibliography}

\end{document}